\theoremstyle{definition}
\newtheorem{definition}{Definition}
\newtheorem{example}{Example}
\newtheorem{remark}{Remark}
\theoremstyle{plain}
\newtheorem{proposition}{Proposition}
\newtheorem{theorem}{Theorem}
\newtheorem{corollary}{Corollary}
\newcommand{\Gr}{Gr\"obner }
\newcommand{\algref}[1]{{\tt #1}}
\DeclareMathOperator{\Sol}{Sol}
\DeclareMathOperator{\Mon}{Mon}
\DeclareMathOperator{\lm}{lm}
\DeclareMathOperator{\lc}{lc}
\DeclareMathOperator{\NF}{NF}
\DeclareMathOperator{\ld}{ld}
\DeclareMathOperator{\init}{init}
\DeclareMathOperator{\disc}{disc}
\DeclareMathOperator{\sep}{sep}
\newcommand{\setderiv}{\Delta}
\newcommand{\automorphism}{\sigma}
\newcommand{\ranking}{\succ}
\newcommand{\setauto}{\Sigma}
\newcommand{\algvar}{y}
\newcommand{\domain}{\mathcal{D}}
\newcommand{\differentialfield}{\mathcal{K}}
\newcommand{\differentialring}{\mathcal{R}}
\newcommand{\differentialideal}{\mathcal{I}}
\newcommand{\differentialranking}{\succ}
\newcommand{\differentialrankeq}{\succeq}
\newcommand{\differencefield}{\tilde{\mathcal{K}}}
\newcommand{\differencering}{\tilde{\mathcal{R}}}
\newcommand{\differenceideal}{\tilde{\mathcal{I}}}
\newcommand{\differenceranking}{\succ}
\newcommand{\differencemonomialordering}{\sqsupset}
\newcommand{\diffidealgen}[1]{[#1]}
\newcommand{\cM}{\mathcal{M}}
\newcommand{\cI}{\mathcal{I}}
\newcommand{\cR}{\mathcal{R}}
\newcommand{\Q}{\mathbb{Q}}
\newcommand{\C}{\mathbb{C}}
\newcommand{\Z}{\mathbb{Z}}
\newcommand{\N}{\mathbb{N}}
\newcommand{\R}{\mathbb{R}}
\begin{document}

\title{Strong Consistency and Thomas Decomposition of\\ Finite Difference Approximations to Systems of\\ Partial Differential Equations}

\author{Vladimir P.\ Gerdt\thanks{Laboratory of Information Technologies, Joint Institute for Nuclear Research, 6 Joliot-Curie Str., 141980 Dubna, Russian Federation  and Peoples' Friendship University of Russia (RUDN), 6 Miklukho-Maklaya Str, Moscow, 117198, Russian Federation, \texttt{gerdt@jinr.ru}}
\and Daniel Robertz\thanks{School of Engineering, Computing and Mathematics, University of Plymouth, 2-5 Kirkby Place, Drake Circus, Plymouth PL4 8AA, Devon, United Kingdom, \texttt{daniel.robertz@plymouth.ac.uk}}
\and Yuri A.\ Blinkov\thanks{Faculty of Mathematics and Mechanics, National Research Saratov State University, 83 Astrakhanskaya St, Saratov, 410012, Russian Federation, \texttt{blinkovua@info.sgu.ru}}
}


\date{}

\maketitle

\begin{abstract}
For a wide class of polynomially nonlinear systems of partial differential equations we suggest an algorithmic approach that combines differential and difference algebra to analyze s(trong)-consistency of finite difference approximations. Our approach is applicable to regular solution grids. For the grids of this type we give  a  new  definition  of  s-consistency  for  finite difference  approximations  which  generalizes  our  definition  given  earlier  for Cartesian  grids. The algorithmic verification of s-consistency presented in the paper is based  on  the  use  of  both differential and difference Thomas decomposition.  First, we apply the differential decomposition to the input system, resulting in a partition of its solution space. Then, to the  output  subsystem  that  contains  a  solution  of  interest we apply a difference analogue of the differential Thomas decomposition which allows to check the s-consistency. For linear and some quasi-linear differential systems one can also apply difference \Gr bases for the s-consistency analysis. We illustrate our methods and algorithms by a number of examples, which include Navier-Stokes equations for viscous incompressible flow.
\end{abstract}

%
%
\section{Introduction}\label{sec:introduction}

In the given paper we consider systems of partial differential equations (PDE):
\begin{equation}
f_1=\cdots=f_p=0,\qquad F:=\{f_1,\ldots,f_p\}\subset \differentialring\,.  \label{pde}
\end{equation}
Here $f_i$ $(i=1,\ldots, p)$ are elements of the {\em differential polynomial ring}
$\differentialring := \differentialfield\{ \mathbf{u} \}$, the ring of polynomials in the dependent variables $\mathbf{u}:=\{u^{(1)},\ldots,u^{(m)}\}$ ({\em differential indeterminates}) and their partial derivatives,
which are obtained by applying the power products of the pairwise commuting derivation operators $\setderiv := \{\partial_1,\ldots,\partial_n\}$ $(\partial_j\equiv\partial_{x_j})$. We shall assume that the coefficients of the polynomials are rational functions in $\mathbf{a}:=\{a_1,\ldots,a_l\}$, a finite number of parameters (constants), whose coefficients are rational numbers, i.e., $\differentialfield:={\Q}(\mathbf{a})$. One can also extend the last field to ${\Q}(\mathbf{a},\mathbf{x})$, where $\mathbf{x}:=\{x_1,\ldots,x_n\}$ is the set of independent variables.

Equations~\eqref{pde} arise in mathematical descriptions of many processes in natural sciences, e.g., in continuous mechanics and physics, whose dynamics evolve in space-time. Apart from very special cases, exact solutions to the governing PDE system are unknown and only numerical solutions can provide valuable information in the study of the process under investigation. For their numerical solution the differential equations in~\eqref{pde} have to be replaced by discrete counterparts. The most widely used methods of discretization and numerical solving are the method of finite elements, the method of finite volumes and the method of finite differences. The last method is historically the first~\cite{Samarskii'01} and is based on replacing differential equations by difference ones defined on a chosen solution grid. In order to construct a numerical solution, the devised finite difference approximation (FDA) to PDE is complemented with an appropriate discretization of initial or/and boundary condition(s) for the PDE. As this takes place, the quality of the numerical solution to PDE crucially depends on the quality of its FDA (difference scheme).

The main requirement for an FDA is the convergence of a numerical solution to a solution of PDE in a limit when the grid spacings tend to zero. However, except for a very limited class of problems (see~\cite{Str'04}, Thm.~1.5.1), convergence cannot be directly established. In practice, given an FDA, its consistency and stability are analyzed as the necessary conditions for convergence. Consistency implies reduction of the FDA to the original PDE when the grid spacings tend to zero, and stability provides boundedness of the error in the solution under small perturbation in the numerical data. It is pertinent to note that in the case of  nonlinear FDA (scheme) its theoretical stability analysis is highly conjectural and it is usually studied ``experimentally'' by division of the grid spacings in halves or by comparison with the exact solution if it is known.

One of the most challenging problems is to construct such difference approximations of equations~\eqref{pde} which preserve their basic algebraic properties in the discrete setting, e.g., the continuous identities and theorems of vector calculus, symmetries and conservation laws.
Such discretizations, which are called \emph{compatible} or \emph{mimetic}~\cite{Arnold'06,Lipnikov'2014,CastilloMiranda'13}, and sometimes \emph{structure preserving}~\cite{Christiansen'11}, are more likely to produce highly accurate and stable numerical results, as was observed in numerous computational experiments (cf.~\cite{JCP'14}). The most universal approach to determine invariant solutions of initial and boundary value problems for systems~\eqref{pde} and to derive their conservation laws is the Lie symmetry analysis \cite{Olver}. Certain counterparts of continuous symmetries for single differential equations were studied for finite difference schemes in \cite[Ch.~4]{Dorodnitsyn}.

In~\cite{GR'10,G'12} for systems~\eqref{pde} and Cartesian (i.e., rectilinear and equisized)
solution grids we introduced the novel concept  of {\em strong consistency}, or {\em s-con\-sis\-ten\-cy}, of FDA to PDE, which strengthens the concept of consistency. Loosely speaking, s-con\-sis\-ten\-cy of an FDA means not only approximation of \eqref{pde} by the FDA, but also approximation of every element in the radical differential ideal, generated by $\{f_1,\ldots,f_p\}$, by an element in the perfect difference ideal generated by the difference polynomials in the FDA. In the subsequent papers~\cite{ABGLS'13,ABGLS'17}, by computational experiments with two-di\-men\-sio\-nal incompressible Navier-Stokes equations, it was shown that FDA which are s-con\-sis\-tent have much better numerical behavior than FDA which are not. To verify s-con\-sis\-ten\-cy of linear FDA (to linear PDE) in \cite{GR'10} we used the algorithms and software for constructing differential and difference Janet/\Gr bases~\cite{Maple-Janet'03,GR'12}. The generalization to nonlinear PDE given in~\cite{G'12}, based on the concept of difference \Gr basis, is not algorithmic, because the difference polynomial ring is non-Noetherian~\cite{GLS'15,LaScala'15} and the basis may be infinite. In the conference paper~\cite{GR19} we suggested an algorithm for verification of s-con\-sis\-ten\-cy on Cartesian grids that is based on investigating the FDA by a difference analogue of \emph{differential Thomas decomposition}. In the special case of binomial perfect difference ideals another kind of decomposition was suggested in \cite{BinomialDifference}.

The notion of differential Thomas decomposition and its algorithmic construction stemmed from the Riquier-Janet theory~\cite{Riquier'10,Janet'29}. Wu Wen-tsun was the first who showed~\cite{Wu'90} that this theory can be used for algorithmic construction of algebraic \Gr bases.
Joseph M.~Thomas~\cite{Thomas'37,Thomas'62} generalized the Riquier-Janet theory to polynomially nonlinear systems and showed how to decompose algebraic and differential systems into triangular subsystems with disjoint solution sets. Thomas called these subsystems \emph{simple} since their structure alleviates their algebraic analysis. The first algorithmization and implementation in Maple of Thomas' approach for systems of algebraic and ordinary differential equations was achieved by Dongming Wang~\cite{Wang'01,Wang'04}. The further algorithmic development of Thomas decomposition for algebraic systems and its full algorithmization for PDE systems, incorporating the involutive algorithm for constructing Janet bases~\cite{G'05}, together with an implementation in Maple, was realized in \cite{BGLHR'12,Robertz'14,GLHR'18}.
Thomas decomposition provides \emph{regular differential chains}~\cite{Hubert'01}, which allow to test membership
to the radical differential ideal through differential Janet reduction.
Related methods are the Ro\-sen\-feld-Gr\"ob\-ner algorithm~\cite{BLOP'09} and the rif-algorithm \cite{ReidWittkopfBoulton}.
In contrast to regular differential chains generated by the Ro\-sen\-feld-Gr\"ob\-ner algorithm, the Thomas decomposition and the rif-algorithm
provide partitions of the solution sets.
Because of the last property these decompositions lend themselves to the s-con\-sis\-ten\-cy analysis.

However, the concept of s-consistency, as it was introduced in~\cite{GR'10,G'12,GR19}, is applicable to Cartesian grids only. Its generalization to more general regular grids, whose grid spacings may be pairwise different, requires certain modifications and extensions. These modifications and extensions are presented in the given paper and illustrated by a number of examples that include incompressible Navier-Stokes equations and overdetermined PDE systems. For some examples we performed not only symbolic but also numeric analysis.

The paper is organized as follows. Section~\ref{sec:differentialthomas} contains a description of differential Thomas decomposition  into simple differential systems which, in addition to equations, may include inequations. The illustrative Example~\ref{ex:differential} is given and the fundamental property of simple systems used in the s-con\-sis\-ten\-cy analysis
is formulated in Proposition~\ref{prop:differentialmembership}. In Section~\ref{sec:differenceapproximations} we consider finite difference approximations to the PDE system~\eqref{pde} on a regular grid~\eqref{grid} and define the differential and difference ideals generated by PDE and FDA, respectively.  The concept of difference \Gr basis together with related definitions and the simplest form of difference Buchberger algorithm are outlined in Section~\ref{sec:differencestandardbases}. Then, in Section~\ref{sec:consistency} we give the definition of \emph{$s$-consistency} of FDA to PDE for the grid~\eqref{grid}. In addition, we present the criterion of s-consistency in terms of difference \Gr bases.
As an example of application of this criterion,  we construct and analyze certain s-consistent FDA to the incompressible Navier-Stokes equations (Section~\ref{sec:NavierStokesEquations}). In Section~\ref{DifferenceThomasDecomposition} we define \emph{simple} and \emph{quasi-simple} difference systems and describe the algorithm of \emph{Thomas decomposition into quasi-simple difference systems}.
We prove correctness and termination of the algorithm and show how it provides the fully algorithmic check of s-con\-sis\-ten\-cy. Two examples (Examples~\ref{ex:Example1} and \ref{ex:Example2}) of quasi-linear PDE and different FDA are analyzed with respect to their s-consistency in Section~\ref{sec:examples}. Concluding remarks are given in Section~\ref{sec:conclusion}. The proof of admissibility of the difference monomial ordering used in Example~\ref{ex:Example2} is postponed to Appendix~\ref{sec:GBtermorder}.

%
%
\section{Differential Thomas decomposition}\label{sec:differentialthomas}

Let $\differentialfield$ be the field of (complex) meromorphic functions on some connected open subset $\domain$ of $\C^n$
with coordinates $x_1$, \ldots, $x_n$.
A \emph{system of polynomial partial differential equations and inequations} (or \emph{differential system} for brevity)
for $m$ unknown functions $u^{(1)}$, $u^{(2)}$, \ldots, $u^{(m)}$ of $x_1$, \ldots, $x_n$ is given by
\begin{equation}\label{eq:diffsystem}
f_1 = 0\,, \quad f_2 = 0\,, \quad \ldots \quad f_p = 0\,, \quad
g_1 \neq 0\,, \quad g_2 \neq 0\,, \quad \ldots \quad g_q \neq 0\,,
\end{equation}
where $p \in \Z_{\ge 0}$, $q \in \Z_{\ge 0}$, and
all $f_i$ and $g_j$ are elements of the differential polynomial ring $\differentialfield\{ \mathbf{u} \}$,
endowed with the set $\setderiv = \{ \partial_1, \ldots, \partial_n \}$ of commuting derivations.
Most commonly, a solution of (\ref{eq:diffsystem}) is an $m$-tuple $(\phi_1, \ldots, \phi_m)$ of
locally analytic functions on $\domain$ which satisfy every equation and inequation of (\ref{eq:diffsystem}).
Around any point $\mathbf{z}$ of the domain each function $\phi_i$ has an expansion as convergent power series
\[
\sum_{\mathbf{k} \in (\Z_{\ge 0})^n} c_{\mathbf{k}} \, \frac{(\mathbf{x}-\mathbf{z})^{\mathbf{k}}}{\mathbf{k}!}, \qquad
(\mathbf{x} - \mathbf{z})^{\mathbf{k}} = (x_1 - z_1)^{k_1} \cdots (x_n - z_n)^{k_n}, \qquad
\mathbf{k}! = k_1! \cdots k_n!\,,
\]
with certain coefficients $c_{\mathbf{k}} \in \C$.

Given a differential system (\ref{eq:diffsystem})
the determination of (even just formal) power series solutions around some point $\mathbf{z}$
is in general a non-trivial task, because integrability conditions need to be taken into account
and the system of simultaneous algebraic equations and inequations for the coefficients $c_{\mathbf{k}}$
requires splitting into different cases due to nonlinearity.

\begin{example}[cf.\ also Ex.~2.1.46 in \cite{Robertz'14}]\label{ex:differential}
For simplicity we choose $\mathbf{z} = (0, 0)$ for investigating
formal power series solutions of the overdetermined system of quasilinear PDE
\begin{equation}\label{eq:examplesystem}
\left\{ \begin{array}{rcccl}
f_1 & := & u_x - u^2 & = & 0\,,\\[1em]
f_2 & := & u_{y,y} - u^3 & = & 0\,,
\end{array} \right.
\qquad \quad \mbox{for } u \, = \, u(x ,y)\,,
\qquad \domain \, = \, \C^2\,.
\end{equation}
Each of the two differential equations by itself is equivalent to
\[
\left(
\sum_{(k_1, k_2) \in (\Z_{\ge 0})^2} c_{(k_1+1, k_2)} \, \frac{x^{k_1} y^{k_2}}{k_1! \, k_2!}
\right) - \left(
\sum_{(k_1, k_2) \in (\Z_{\ge 0})^2} c_{(k_1, k_2)} \, \frac{x^{k_1} y^{k_2}}{k_1! \, k_2!}
\right)^2 \, \, \, = \, \, \, 0
\]
and
\[
\left(
\sum_{(k_1, k_2) \in (\Z_{\ge 0})^2} c_{(k_1, k_2+2)} \, \frac{x^{k_1} y^{k_2}}{k_1! \, k_2!}
\right) - \left(
\sum_{(k_1, k_2) \in (\Z_{\ge 0})^2} c_{(k_1, k_2)} \, \frac{x^{k_1} y^{k_2}}{k_1! \, k_2!}
\right)^3 \, \, \, = \, \, \, 0\,,
\]
respectively, and hence equivalent to a system of algebraic recurrence equations
\[
\left\{ \begin{array}{rcl}
c_{1,0} & = & c_{0,0}^2\,,\\[0.2em]
c_{1,1} & = & 2 \, c_{0,0} \, c_{0,1}\,,\\[0.2em]
c_{2,0} & = & 2 \, c_{0,0} \, c_{1,0}\,,\\[0.2em]
c_{2,1} & = & 2 \, c_{0,0} \, c_{1,1} + 2 \, c_{0,1} \, c_{1,0}\,,\\[0.2em]
        & \vdots &
\end{array} \right.
\qquad
\left\{ \begin{array}{rcl}
c_{0,2} & = & c_{0,0}^3\,,\\[0.2em]
c_{1,2} & = & 3 \, c_{0,0}^2 \, c_{1,0}\,,\\[0.2em]
c_{0,3} & = & 3 \, c_{0,0}^2 \, c_{0,1}\,,\\[0.2em]
c_{1,3} & = & 3 \, c_{0,0}^2 \, c_{1,1} + 6 \, c_{0,0} \, c_{0,1} \, c_{1,0}\,,\\[0.2em]
        & \vdots &
\end{array} \right.
\]
respectively. However, the integrability condition
\[
\begin{array}{rcl}
0 \, \, = \, \, \partial_y^2 f_1 - \partial_x f_2 & \, = \, & 3 \, u^2 \, u_x - 2 \, u \, u_{y,y} - 2 \, u_y^2
\qquad \qquad (\mbox{mod } f_1 = 0, f_2 = 0)\\[0.5em]
& \, = \, & 3 \, u^4 - 2 \, u^4 - 2 \, u_y^2 \, \, = \, \, (u^2 - \sqrt{2} \, u_y) \, (u^2 + \sqrt{2} \, u_y)
\end{array}
\]
reveals that further conditions on $c_{(k_1, k_2)}$ are implied when
$f_1 = 0$, $f_2 = 0$ is considered as a system.
Taking the above factorization into account,
we obtain
\[
\left\{ \begin{array}{rcl}
c_{0,1} & = & c_{0,0}^2 / \sqrt{2}\,,\\[0.2em]
c_{1,1} & = & \sqrt{2} \, c_{0,0} \, c_{1,0}\,,\\[0.2em]
c_{0,2} & = & \sqrt{2} \, c_{0,0} \, c_{0,1}\,,\\[0.2em]

c_{1,2} & = & \sqrt{2} \, (c_{0,0} \, c_{1,1} + c_{0,1} \, c_{1,0})\,,\\[0.2em]
        & \vdots &
\end{array} \right.
\quad \vee \quad
\left\{ \begin{array}{rcl}
c_{0,1} & = & -c_{0,0}^2 / \sqrt{2}\,,\\[0.2em]
c_{1,1} & = & -\sqrt{2} \, c_{0,0} \, c_{1,0}\,,\\[0.2em]
c_{0,2} & = & -\sqrt{2} \, c_{0,0} \, c_{0,1}\,,\\[0.2em]
c_{1,2} & = & -\sqrt{2} \, (c_{0,0} \, c_{1,1} + c_{0,1} \, c_{1,0})\,,\\[0.2em]
        & \vdots &
\end{array} \right.
\]
The method of Thomas decomposition does not require polynomial factorization.
If the above factorization is ignored, the newly-discovered consequence $2 \, u_y^2 - u^4 = 0$
translates into algebraic conditions on the Taylor coefficients $c_{(k_1,k_2)}$ as follows:
\begin{equation}\label{eq:algebraicsystem}
\left\{ \begin{array}{rcl}
2 \, \underline{c_{0,1}}^2 - c_{0,0}^4 & = & 0\,,\\[0.2em]
4 \, c_{0,1} \, \underline{c_{1,1}} - 4 \, c_{0,0}^3 \, c_{1,0} & = & 0\,,\\[0.2em]
8 \, c_{0,1} \, \underline{c_{0,2}} - 4 \, c_{0,0}^3 \, c_{0,1} & = & 0\,,\\[0.2em]
2 \, (c_{0,1} \, \underline{c_{2,1}} + c_{1,1}^2) - 2 \, c_{0,0}^2 \, (c_{0,0} \, c_{2,0} + 3 \, c_{1,0}^2) & = & 0\,,\\[0.2em]
& \vdots &
\end{array} \right.
\end{equation}
In this example the process of finding integrability conditions is complete because further
cross-derivatives reduce to zero modulo the previous equations. The given system does not impose
any conditions on the Taylor coefficient $c_{0,0}$, whose value can be chosen arbitrarily, and the
possible values of all other Taylor coefficients are determined by the above algebraic equations.
Taking the total order of differentiation into account, a systematic way of solving these algebraic equations is
to solve each equation for the underlined variable. In order to ensure both square-freeness of the
first polynomial equation for $c_{0,1}$ in (\ref{eq:algebraicsystem}) and solvability of all
subsequent equations, a case distinction whether $u(x, y)$ is the zero function or not is also made.
Therefore, a Thomas decomposition of system (\ref{eq:examplesystem}) is
\[
\left\{ \begin{array}{rcl}
u_x - u^2 & = & 0\,,\\[0.5em]
2 \, u_y^2 - u^4 & = & 0\,,\\[0.5em]
u & \neq & 0\,,
\end{array} \right.
\qquad \vee \qquad
\left\{ \begin{array}{rcl}
& \phantom{=} & \\[0.1em]
u & = & 0\,.\\[0.1em]
& \phantom{=} &
\end{array} \right.
\]
If the additional effort in factorizing the integrability condition $2 \, u_y^2 - u^4 = 0$ is spent, a Thomas
decomposition of the same system is also given by
\[
\left\{ \begin{array}{rcl}
u_x - u^2 & = & 0\,,\\[1em]
u_y - u^2 / \sqrt{2} & = & 0\,,
\end{array} \right.
\qquad \vee \qquad
\left\{ \begin{array}{rcl}
u_x - u^2 & = & 0\,,\\[1em]
u_y + u^2 / \sqrt{2} & = & 0\,.
\end{array} \right.
\]
Note that even when no polynomial factorization is performed, a Thomas decomposition
of a PDE system is not uniquely determined in general.
\end{example}

Computing a Thomas decomposition of a differential system is a finite process
which constructs a generating set of all integrability conditions systematically
and performs case splittings, if necessary, so as to obtain a generating set of
recurrence relations for $c_{\mathbf{k}}$ around a generic center of expansion.
This process is steered by a total order $\differentialrankeq$ on the set of symbols
representing derivatives of unknown functions:
\begin{equation}\label{MonDelta}
\Mon(\setderiv) \mathbf{u} \, \, := \, \,
\{ \, \partial^{\mathbf{k}} u^{(\alpha)} = \partial_1^{k_1} \cdots \partial_n^{k_n} u^{(\alpha)} \mid
1 \le \alpha \le m, \, \mathbf{k} \in (\Z_{\ge 0})^n \, \}\,.
\end{equation}
(We shall mainly be working with the strict total order $\differentialranking$ associated with $\differentialrankeq$.)

\begin{definition}\label{DifferentialRanking}
Let $\differentialring = \differentialfield\{ \mathbf{u} \}$ be the differential polynomial ring
and $f \in \differentialring \setminus \differentialfield$.
\begin{enumerate}
\item A \emph{ranking} $\differentialranking$ on 
$\differentialring$ is a total order on $\Mon(\setderiv) \mathbf{u}$
such that for all $1 \le \alpha \le m$ and all $\mathbf{k} \neq \mathbf{0}$
we have $\partial^{\mathbf{k}} u^{(\alpha)} \differentialranking u^{(\alpha)}$,
and such that $\partial^{\mathbf{k}_1} u^{(\alpha)} \differentialranking \partial^{\mathbf{k}_2} u^{(\alpha)}$
implies $\partial^{\mathbf{k}_1 + \mathbf{k}'} u^{(\alpha)} \differentialranking \partial^{\mathbf{k}_2 + \mathbf{k}'} u^{(\alpha)}$
for all $\mathbf{k}' \in (\Z_{\ge 0})^n$.
A ranking $\differentialranking$ is said to be \emph{orderly} if $|\mathbf{k}_1| > |\mathbf{k}_2|$ implies
$\partial^{\mathbf{k}_1} u^{(\alpha)} \differentialranking \partial^{\mathbf{k}_2} u^{(\alpha)}$ for any $\alpha$.
\item The \emph{leader} $\ld(f)$ of the differential polynomial $f$
with respect to a ranking $\differentialranking$ is the highest ranked derivative in $\Mon(\setderiv) \mathbf{u}$
that effectively occurs in $f$.
\item The coefficient of the highest power of $\ld(f)$ in $f$ is the \emph{initial} of $f$,
denoted by $\init(f)$. It is itself a differential polynomial in derivatives that are
ranked lower than $\ld(f)$ with respect to $\differentialranking$.
\item The \emph{discriminant} $\disc(f)$ is the discriminant of $f$ as a polynomial in $\ld(f)$.
\item The \emph{separant} of $f$ is the differential polynomial $\sep(f) := \partial f / \partial \ld(f)$.
\end{enumerate}
\end{definition}

\begin{example}
If $x$, $y$ are the independent variables, $u$ the dependent variable,
and $f = u_y \, u_{x,y}^2 + u^5 \in \differentialfield\{ u \}$,
then, with respect to any orderly ranking $\differentialranking$ on $\differentialfield\{ u \}$,
we have $\ld(f) = u_{x,y}$ and $\init(f) = u_y$ and $\sep(f) = 2 \, u_y \, u_{x,y}$.
Note that, generally, the separant of $f$ is the initial of any proper derivative of $f$,
e.g., $\partial_x f = 2 \, u_y \, u_{x,y} \, u_{x,x,y} + u_{x,y}^3 + 5 \, u^4 \, u_x$
has leader $u_{x,x,y}$ and initial $2 \, u_y \, u_{x,y}$.
\end{example}

The determination of all integrability conditions of a system of
polynomially nonlinear PDE is facilitated by a combination of Euclid's algorithm
with case distinctions and completion to involution as performed
by Janet's algorithm. Before recalling the latter ingredient we outline
the former aspect.
In what follows we assume that a ranking $\differentialranking$ on $\differentialring$ is fixed.

Note that any linear combination with coefficients in $\differentialring$ of
(the left hand sides of)  equations $f_1 = 0$, \ldots, $f_p = 0$ and their partial derivatives in a
differential system is a consequence of that system, and these consequences
form a differential ideal of $\differentialring$.
Every differential polynomial $f \in \differentialring \setminus \differentialfield$ is considered as
a univariate polynomial in $\ld(f)$ whose coefficients are themselves
univariate polynomials in their leaders. In this way an \emph{algebraic} and a
\emph{differential reduction} are defined for all pairs $(f_1, f_2) \in (\differentialring \setminus \differentialfield)^2$,
producing a differential polynomial $f_3$ that is either in $\differentialfield$ or has a leader
that is ranked lower than $\ld(f_2)$ with respect to $\differentialranking$.
\begin{enumerate}
\renewcommand{\theenumi}{\alph{enumi})}
\renewcommand{\labelenumi}{\theenumi}
\item\label{algebraic_pseudoreduction}
If $\ld(f_1) = \ld(f_2) =: v$ and $d_1 := \deg_{v}(f_1) \ge d_2 := \deg_{v}(f_2)$, then let
\[
f_3 \, \, = \, \, c_1 f_1 - c_2 \, v^{d_1-d_2} \, f_2\,,
\]
where $c_1$ is a suitable power of $\init(f_2)$ and $c_2 \in \differentialring$ such that the $d_1$-th power of $v$ cancels in $f_3$.
\item\label{differential_pseudoreduction}
If $\ld(f_1) = \partial^{\mathbf{k}} \ld(f_2) =: v$ for $\partial^{\mathbf{k}} \in \Mon(\setderiv) \, \mathbf{u}$,
$\mathbf{k} \neq \mathbf{0}$, and $d := \deg_{v}(f_1)$, then let
\[
f_3 \, \, = \, \, c_1 \, f_1 - c_2 \, v^{d-1} \, \partial^{\mathbf{k}} f_2\,,
\]
where $c_1 = \sep(f_2)$ and $c_2 \in \differentialring$ such that the $d$-th power of $v$ cancels in $f_3$.
\end{enumerate}
Note that $f_3$ is an element of the differential ideal containing $f_1$ and $f_2$ in any case.

If $f_1 = 0$, $f_2 = 0$ are two equations in a differential system, then replacing $f_1 = 0$ by $f_3 = 0$
is supposed to not alter the solution set of the system. This is ensured if the differential polynomial $c_1$
does not vanish on the solution set of the system. Note that $c_1$ is chosen as a power of $\init(f_2)$ or
$\sep(f_2)$. If Euclid's algorithm considers separately the cases obtained by adding the inequation $\init(f_2) \neq 0$
(resp.\ $\sep(f_2) \neq 0$) or the equation $\init(f_2) = 0$ (resp.\ $\sep(f_2) = 0$) to the system,
the above replacement of $f_1 = 0$ by $f_3 = 0$ is justified with the imposed inequation,
and the solution sets corresponding to the
branches of computation define a partition of the solution set of the original system.

Ignoring that the indeterminates represent unknown functions of a PDE system, Euclid's algorithm
deals with a system $S$ of algebraic equations, say in, $\algvar_1$, \ldots, $\algvar_r$, totally
ordered by the fixed ranking. The solution set $\Sol(S)$ in $\C^r$ of that algebraic system is investigated
with respect to a sequence of projections from $\C^r$ to affine subspaces which corresponds to the
ordering, say, $\algvar_1 \differentialranking \algvar_2 \differentialranking \ldots \differentialranking \algvar_r$,
of the indeterminates:
\[
\begin{array}{rclrcr}
\pi_1\colon \C^r & \longrightarrow & \C^{r-1}\colon &
(a_1, a_2, \ldots, a_r) & \longmapsto & (a_2, a_3, a_4, \ldots, a_r)\,,\\[0.5em]
\pi_2\colon \C^r & \longrightarrow & \C^{r-2}\colon &
(a_1, a_2, \ldots, a_r) & \longmapsto & (a_3, a_4, \ldots, a_r)\,,\\[0.5em]
& \vdots & & & \vdots \\[0.5em]
\pi_{r-1}\colon \C^r & \longrightarrow & \C\colon &
(a_1, a_2, \ldots, a_r) & \longmapsto & a_r\,.
\end{array}
\]
Euclid's algorithm, performing case distinctions with regard to the vanishing of initials $\init(f)$
and discriminants $\disc(f)$
of (non-constant) polynomials $f$, produces a finite collection of algebraic systems having the following property.

\begin{definition}\label{de:algebraicsimple}
Let $S = \{ \, f_1 = 0, \, \ldots, \, f_p = 0, \, g_1 \neq 0, \, \ldots, \, g_q \neq 0 \, \}$ be
an algebraic system, i.e., $f_i$, $g_j \in \differentialfield[\algvar_1, \ldots, \algvar_r]$.
Then $S$ is said to be \emph{simple} if the following four conditions are satisfied.
\begin{enumerate}
\item None of $f_1$, \ldots, $f_p$, $g_1$, \ldots, $g_q$ is constant.\label{de:algebraicsimple_1}
\item The leaders of $f_1$, \ldots, $f_p$, $g_1$, \ldots, $g_q$ are pairwise distinct.\label{de:algebraicsimple_2}
\item For every $h \in \{ f_1, \ldots, f_p, g_1, \ldots, g_q \}$,
if $\ld(h) = \algvar_k$,
then the equation $\init(h) = 0$
has no solution in $\pi_k(\Sol(S))$.\label{de:algebraicsimple_3}
\item For every $h \in \{ f_1, \ldots, f_p, g_1, \ldots, g_q \}$,
if $\ld(h) = \algvar_k$,
then the equation $\disc(h) = 0$
has no solution in $\pi_k(\Sol(S))$.\label{de:algebraicsimple_4}
\end{enumerate}
(Note that in \ref{de:algebraicsimple_3}.\ and \ref{de:algebraicsimple_4}.\
we have $\init(h), \disc(h) \in \differentialfield[\algvar_{k+1}, \ldots, \algvar_r]$.)
\end{definition}

\begin{definition}\label{de:algebraicquasisimple}
An algebraic system $S$ as in Definition~\ref{de:algebraicsimple} is
said to be \emph{quasi-simple} if conditions~\ref{de:algebraicsimple_1}.--\ref{de:algebraicsimple_3}.\
(but not necessarily \ref{de:algebraicsimple_4}.) are satisfied.
(Such systems are also called \emph{regular}, cf.\ \cite[p.~107]{Wang'01}, \cite{Hubert'01}, \cite{Kalkbrener}, \cite{LemaireMorenoMazaPanXie}.)
\end{definition}

Our strategy for handling integrability conditions builds on \emph{Janet division}.
Note first that the leader of the derivative of an equation $f = 0$ is the corresponding derivative of $\ld(f)$.
Hence, for each $\alpha \in \{ 1, \ldots, m \}$, the monomials $\partial^{\mathbf{k}}$, $\mathbf{k} \in (\Z_{\ge 0})^n$,
for which $\partial^{\mathbf{k}} u^{(\alpha)}$ is the leader of a consequence
of a differential system, form a set of monomials that is closed under multiplication by $\partial_1$, \ldots, $\partial_n$.

Suppose a set of monomials is closed under multiplication by the elements of a certain subset $\mu$
of $\setderiv = \{ \partial_1, \ldots, \partial_n \}$. If that set of monomials
consists of all such multiples of a single monomial, then we call the set a \emph{cone}.
Let $M$ be a finite set of monomials.
Janet division assigns to each $m \in M$ a set $\mu(m, M) \subseteq \setderiv$ 
of \emph{multiplicative variables} so as to decompose the set of all multiples of $M$ into disjoint cones.
Denoting by $\Mon(\mu)$ the set of all monomials in the elements of $\mu$, we have
\[
\bigcup_{m \in M} \Mon(\setderiv) \, m \, \, \supseteq \, \, \biguplus_{m \in M} \Mon(\mu(m, M)) \, m\,.
\]
In case of equality the set $M$ is said to be \emph{Janet complete}.

In our context we call the multiplicative variables \emph{admissible derivations}.

\begin{definition}\label{de:admissiblederivations}
Let $M$ be a finite set of monomials in $\partial_1$, \ldots, $\partial_n$.
For $1 \le j \le n$ we let $\partial_j$ be an \emph{admissible derivation} for
$\partial_1^{i_1} \cdots \partial_n^{i_n} \in M$ if and only if
\[
i_j \, \, = \, \, \max \{ \, k_j \mid \partial_1^{k_1} \cdots \partial_n^{k_n} \in M \mbox{ with }
k_1 = i_1, \, k_2 = i_2, \, \ldots, \, k_{j-1} = i_{j-1} \, \}\,.
\]
\end{definition}

\begin{example}
Let $M = \{ \, \partial_1^2 \partial_2, \, \partial_1^2 \partial_3, \, \partial_2^2 \partial_3, \, \partial_2 \partial_3^2 \, \}$.
These four monomials are assigned the sets of
admissible derivations $\{ \partial_1, \partial_2, \partial_3 \}$,
$\{ \partial_1, \partial_3 \}$, $\{ \partial_2, \partial_3 \}$ and $\{ \partial_3 \}$, respectively.
\end{example}

We extend Janet division as well as the notion of Janet completeness from finite sets of monomials 
to finite sets $\{ f_1, \ldots, f_p \}$ of differential polynomials in $\differentialring \setminus \differentialfield$
by assigning $f_i$ the set of admissible derivations $\mu_i := \mu(\theta_i, \{ \theta_1, \ldots, \theta_p \})$,
where $\theta_i \in \Mon(\{ \partial_1, \ldots, \partial_n \})$ is such that $\ld(f_i) = \theta_i \, u^{(\alpha_i)}$
for a certain $\alpha_i$.

By restricting the differential reduction process introduced in \ref{differential_pseudoreduction} above
to reduction steps for which $\partial^{\mathbf{k}}$ is a monomial in \emph{admissible} derivations for $f_2$,
we obtain the Janet reduction process. The remainder of a differential polynomial $f$ modulo $\{ \, f_1, \ldots, f_p \, \}$,
or modulo $T = \{ \, (f_1, \mu_1), \ldots, (f_p, \mu_p) \, \}$, is called the \emph{Janet normal form} of $f$
modulo $T$ and is denoted by $\NF(f, T, \ranking)$.

\begin{definition}\label{de:differentialpassive}
Let $T = \{ \, (f_1, \mu_1), \ldots, (f_p, \mu_p) \, \}$ be Janet complete.
Then the differential system $\{ \, f_1 = 0, \ldots, f_p = 0 \, \}$, or $T$, is said to be \emph{passive} if
\[
\NF(\partial f_i, T, \ranking) \, = \, 0 \qquad \mbox{for all} \quad
\partial \in \overline{\mu}_i = \{ \partial_1, \ldots, \partial_n \} \setminus \mu_i\,, \quad i = 1, \ldots, p\,.
\]
\end{definition}

A suitable combination of Euclid's algorithm with case distinctions and differential reductions
of differential polynomials that are obtained by applying non-ad\-mis\-si\-ble derivations defines
a process that returns a Thomas decomposition in finitely many steps \cite[Thm.~2.2.57]{Robertz'14}, \cite[Sect.~3.4]{BGLHR'12},
namely, a finite collection of
differential systems, whose solution sets partition the solution set of the original
differential system, and such that each output system has the following property.

\begin{definition}\label{de:differentialsimple}
A differential system $S$ as in (\ref{eq:diffsystem})
is said to be \emph{simple} if the following three conditions hold.
\begin{enumerate}
\item $S$ is simple as an algebraic system
(in the finitely many indeterminates occurring in it, ordered by
the ranking $\differentialranking$; cf.\ Definition~\ref{de:algebraicsimple}).\label{de:differentialsimple_a}
\item $\{ \, f_1 = 0, \, \ldots, \, f_p = 0 \, \}$ is passive (cf.\ Definition~\ref{de:differentialpassive}).\label{de:differentialsimple_b}
\item The left hand sides $g_1$, \ldots, $g_q$
are Janet reduced (i.e., in Janet normal form)
modulo the equations $\{ \, f_1 = 0, \, \ldots, \, f_p = 0 \, \}$.\label{de:differentialsimple_c}
\end{enumerate}
\end{definition}

A simple differential system $S$ allows to decide, by differential reduction,
whether or not a given equation $f = 0$,
where $f \in \differentialring$, is a consequence of $S$.

\begin{proposition}[\cite{Robertz'14}, Prop.\ 2.2.50]\label{prop:differentialmembership}
Let $S$ be a simple differential system, defined over the differential polynomial ring $\differentialring$,
and let $E$ be the differential ideal of $\differentialring$ which is generated by $f_1$, \ldots, $f_p$. Moreover,
let $q$ be the product of the initials and separants of all $f_1$, \ldots, $f_p$.
Then the differential ideal
\[
E : q^{\infty} \, \, := \, \, \{ \, f \in \differentialring \mid q^r \, f \in E \mbox{ for some } r \in \Z_{\ge 0} \, \}
\]
is radical. Given $f \in \differentialring$, we have $f \in E : q^{\infty}$ if and only if
the Janet normal form of $f$ modulo $\{ \, f_1, \ldots, f_p \, \}$ is zero.
\end{proposition}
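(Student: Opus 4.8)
The plan is to peel the Janet reduction process apart into a differential layer and an algebraic layer, to pass from the differential membership question to an algebraic one by a Rosenfeld-type argument that exploits passivity, and then to settle the algebraic statement for simple triangular systems directly. The implication ``$\NF(f, T, \ranking) = 0 \Rightarrow f \in E : q^{\infty}$'' is the easy half: every elementary reduction step rewrites a differential polynomial $h$ as $c \, h - b$, where $b$ lies in the differential ideal $E$ generated by $f_1, \ldots, f_p$ and $c$ is a power of $\init(f_i)$ or of $\sep(f_i)$ for some $i$; since $q$ is the product of all these initials and separants, telescoping the finitely many steps yields $q^{r} f \in E$ for some $r \in \Z_{\ge 0}$, i.e.\ $f \in E : q^{\infty}$.

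For the converse, and for radicality, I would first replace $f$ by a partially reduced representative. Since $T$ is Janet complete, applying to $f$ the differential (admissible-derivation) Janet reductions that remove proper derivatives of the leaders $\ld(f_i)$ terminates and produces $\tilde{f}$ in which no proper derivative of any $\ld(f_i)$ occurs, with $q^{s} f \equiv \tilde{f} \pmod{E}$ for some $s$; hence $\tilde{f} \in E : q^{\infty}$ iff $f \in E : q^{\infty}$, and the two have the same Janet normal form. Now I invoke a Rosenfeld-type lemma: passivity of $\{ f_1 = 0, \ldots, f_p = 0 \}$ (Definition~\ref{de:differentialpassive}) supplies exactly the coherence needed to conclude that a partially reduced differential polynomial lies in $E : q^{\infty} = \diffidealgen{f_1, \ldots, f_p} : q^{\infty}$ if and only if it lies in the purely algebraic saturation ideal $\algidealgen{f_1, \ldots, f_p} : q^{\infty}$ formed in the ordinary polynomial ring $\differentialfield[\algvar_1, \ldots, \algvar_r]$ generated by the finitely many derivatives that actually occur in $f_1, \ldots, f_p, \tilde{f}$.

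It then remains to prove the algebraic counterpart: for a simple algebraic system $S$ (Definition~\ref{de:algebraicsimple}) the ideal $\algidealgen{f_1, \ldots, f_p} : q^{\infty}$ is radical, and a polynomial belongs to it precisely when its iterated pseudo-remainder modulo the triangular set $f_1, \ldots, f_p$ vanishes. I would argue by induction on the number $r$ of indeterminates, peeling off the largest one, say $\algvar_r = \ld(f_p)$: after passing to the total ring of fractions obtained by inverting the initials and separants of the $f_i$ with smaller leaders, $f_p$ becomes a squarefree univariate polynomial in $\algvar_r$ with invertible initial --- squarefreeness being exactly the discriminant condition~\ref{de:algebraicsimple_4} and invertibility of the initial the condition~\ref{de:algebraicsimple_3} --- so the residue ring is reduced, in fact a finite product of fields over the reduced ring supplied by the induction hypothesis. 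Tracking the saturations through this tower of localizations and residue maps, and using that $f_1, \ldots, f_p$ become a regular sequence after the relevant localization, yields at once the radicality of $\algidealgen{f_1, \ldots, f_p} : q^{\infty}$ and the characterization of membership by triangular pseudo-reduction. Pulling this back through the Rosenfeld step gives the differential statement, and radicality of $E : q^{\infty}$ follows since $\diffidealgen{f_1, \ldots, f_p} : q^{\infty}$ is the contraction, along the direct system of polynomial rings ordered by inclusion of finite sets of derivatives, of the radical algebraic saturation ideals.

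The step I expect to be the main obstacle is the Rosenfeld-type passage from the differential saturation ideal to the algebraic one: one must verify that Janet-passivity of the equation part of a simple system is genuinely strong enough to play the role of ``coherence'' in that lemma, in particular that completion to involution has not destroyed the property that partially reduced consequences are algebraic consequences. On the algebraic side the difficulty is organizational rather than conceptual --- keeping precise track of which initials, separants and discriminants have been inverted so that every residue ring in the tower remains reduced, and confirming that the iterated pseudo-remainder is well defined (zero or not) independently of the order of reductions. Since all of this is carried out in detail in \cite{Robertz'14} (Prop.~2.2.50 and its surrounding development), the exposition may simply cite that reference; the above is the route I would take to reprove it.
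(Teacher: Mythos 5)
The paper gives no proof of this proposition beyond the citation to \cite[Prop.~2.2.50]{Robertz'14}, and your outline reproduces the route actually taken there: the easy telescoping direction, partial reduction followed by a Rosenfeld-type lemma in which Janet passivity supplies the coherence, and the reduction to the algebraic statement that saturation ideals of simple algebraic systems are radical with membership decided by pseudo-reduction (the same \cite[Prop.~2.2.7]{Robertz'14} that this paper invokes in the proof of Corollary~\ref{cor:differencesimpleradical}). Your plan is correct and consistent with the cited source; the two points you flag as delicate --- that Janet passivity implies Rosenfeld coherence, and the order-independence of the pseudo-remainder's vanishing --- are exactly the places where the reference does the real work.
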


%
%
\section{Difference approximations to PDE systems}\label{sec:differenceapproximations}

To approximate the differential system \eqref{pde} by a difference system we
shall consider a regular computational grid (mesh) as the set of
points
\begin{equation}\label{grid}
\{\,(z_1 + k_1h_1,\ldots, z_n + k_nh_n)\mid k_i\in \Z \,\}\,,
\end{equation}
where $(z_1, \ldots, z_n) \in \R^n$ and $0 < h_i\in \R$ are fixed.
\begin{definition}\label{GridFunction}
A vector function $\tilde{\mathbf{u}}=\{\tilde{u}^{(1)},\ldots,\tilde{u}^{(m)}\}$ which assigns to each grid node a value is called {\em grid vector function}. We shall denote 
such function by
\begin{equation*}
\mathbf{\tilde{u}}_{k_1,\ldots,k_n} \, \,:= \, \, \mathbf{\tilde{u}}(z_1 + k_1h_1,\ldots, z_n + k_nh_n)\,.
\end{equation*}
\end{definition}

From now on we shall consider $h_1$, \ldots, $h_n$ as parameters and denote by
\begin{equation*}
\mathbf{h}:=\{h_1,\ldots,h_n\}\ \ \text{and}\ \ \Mon(\mathbf{h}):= \Big\{\prod_{i=1}^n h_i^{\mu_i} \, \Big| \, \mu\in (\Z_{\geq 0})^n\Big\}
\end{equation*}
the set of {\em mesh steps (grid spacings)} and  the {\em monoid of monomials generated by the elements in $\mathbf{h}$}, respectively. The total degree of an element $m\in \Mon(\mathbf{h})$ will be denoted by $\deg(m)$.

We assume that coefficients of the differential polynomials in $F$ (cf.\ (\ref{pde}))
do not vanish in the grid points. The coefficients on the grid as rational functions in $\{\mathbf{a},\mathbf{h}\}$ are elements of the difference field~\cite{Levin'08} with \emph{differences} $\{\automorphism_1,\ldots,\automorphism_n,\automorphism_1^{-1},\ldots,\automorphism_n^{-1}\}$ acting on a grid function $\tilde{u}^{(\alpha)}_{k_1,\ldots,k_n}$ as the shift operators
\begin{equation}
\automorphism_j^{\pm 1}
\tilde{u}^{(\alpha)}_{k_1,\ldots,k_j,\ldots,k_n}=\tilde{u}^{(\alpha)}_{k_1,\ldots,k_j\pm 1,\ldots,k_n}\,,\quad
\alpha \in \{1,\ldots,m\}\,,\quad j\in \{1,\ldots,n\}\,.
\label{rs-operators}
\end{equation}
Let $\Mon(\setauto)$ be the free commutative semigroup generated by $\setauto = \{\automorphism_1, \ldots, \automorphism_n\}$,
\begin{equation}\label{MonoidOfShifts}
\Mon(\setauto) \, \, := \, \, \{\,\automorphism_1^{i_1}\cdots\automorphism_n^{i_n}\mid i_1,\ldots,i_n\in \Z_{\geq 0}\,\}\,,
\end{equation}
$\differencefield = \Q(\mathbf{a},\mathbf{h})$, and $\differencering$  the ring of {\em difference polynomials} over $\differencefield$.
The elements in $\differencering$ are polynomials in the \emph{difference indeterminates} $\tilde{u}^{(\alpha)}$ ($\alpha=1,\ldots,m$) and in their shifts $\automorphism_1^{i_1}\cdots \automorphism_n^{i_n} \tilde{u}^{(\alpha)}$, $i_1, \ldots, i_n \in \Z$,
with coefficients in $\differencefield$.

\begin{remark}\label{TaylorExpansionElementsSigma}
Since the shift operators $\sigma_j$ admit the (formal) power series expansion
\[
\sigma_j \, \, = \, \, \sum_{k\geq 0}\frac{h_j^k}{k!}\partial^k_j\,,\qquad
\sigma_j^{-1} \, \, = \, \, \sum_{k\geq 0}\frac{(-1)^kh_j^k}{k!}\partial^k_j\,,
\quad
\]
a difference polynomial $f(\tilde{\mathbf{u}})\in \differencering$ admits the Taylor expansion around a grid point~\eqref{grid}.
\end{remark}

The standard technique to obtain FDA to the PDE system \eqref{pde} is to replace the derivatives occurring in \eqref{pde} by finite differences. In order to use the method of difference \Gr bases (Section~\ref{sec:differencestandardbases}) or/and difference Thomas decomposition (Section~\ref{DifferenceThomasDecomposition}) one has to apply appropriate power products of the right-shift operators (\ref{rs-operators}) to remove negative
shifts in indices which may be introduced from expressions like
\begin{eqnarray*}
&& \partial_j u^{(\alpha)} \, \, = \, \, \frac{\tilde{u}^{(\alpha)}_{k_1,\ldots,k_j,\ldots,k_n}-\automorphism^{-1}_j \tilde{u}^{(\alpha)}_{k_1,\ldots,k_{j},
\ldots,k_n}}{h_j}+\mathcal{O}(h_j)\,, \label{BackwardlDifference} \\
&& \partial_j u^{(\alpha)} \, \, = \, \, \frac{\sigma_j \tilde{u}^{(\alpha)}_{k_1,\ldots,k_j,\ldots,k_n}-\automorphism^{-1}_j \tilde{u}^{(\alpha)}_{k_1,\ldots,k_j,
\ldots,k_n}}{2h_j}+\mathcal{O}(h_j^2)\,. \label{CentralDifference}
\end{eqnarray*}
In the sequel we shall consider discretization of \eqref{pde} as a finite set of difference polynomials
\begin{equation}
\tilde{f}_1=\cdots=\tilde{f}_p=0\,,\qquad \tilde{F}:=\{\tilde{f}_1,\ldots,\tilde{f}_p\}\subset \differencering\,. \label{fda}
\end{equation}

\begin{definition}\label{DiffIdeal}
The \emph{differential (resp. difference) ideal generated by a polynomial set $F$ (resp. $\tilde{F}$)}, denoted by $\differentialideal := [F]$ (resp.\ $\differenceideal := [\tilde{F}]$), is the smallest subset of $\differentialring$ (resp.\ $\differencering$) containing $F$ (resp.\ $\tilde{F}$) and satisfying
\[
(\,\forall \partial_i\in \{\partial_1,\ldots,\partial_n\})\ (\,\forall a,b\in {\cI}\,)\ (\,\forall c\in \differentialring\,)\quad [\,a+b\in \differentialideal\,,\ a\cdot c\in \differentialideal,\ \partial_i  a\in \differentialideal\,]
\]
and, respectively,
\[
(\,\forall \automorphism_i\in \{\automorphism_1,\ldots,\automorphism_n\})\ (\,\forall \tilde{a},\tilde{b}\in \differenceideal\,)\ (\,\forall \tilde{c}\in \differencering\,)\quad [\,\tilde{a}+\tilde{b}\in \differenceideal\,,\ \tilde{a}\cdot \tilde{c}\in \differenceideal,\ \automorphism_i \tilde{a}\in \tilde{\cI}\,]\,.
\]
\end{definition}

Let $\differentialideal \subset \differentialring$ be a differential ideal. Then the set
\begin{equation}
\sqrt{\differentialideal} \, \, := \, \, \{\,p\in \differentialring \mid p^k \in \differentialideal,\ k\in {\N}_{>0}\, \} \label{radical}
\end{equation}
is a differential ideal.

If $\differentialideal = \sqrt{\differentialideal}$, then $\differentialideal$ is called \emph{radical} or \emph{perfect differential ideal}. Given $F \subset \differentialring$, the \emph{radical differential ideal generated by ${F}$}, denoted by $\llbracket {F}\rrbracket$, is the smallest radical differential ideal of $\differentialring$ containing ${F}$.

In the difference case, the radical $\sqrt{\differenceideal}$ of $\differenceideal$ is defined similarly to Eq.\,\eqref{radical}. However, the notion of perfect difference ideal is significantly distinct from that of perfect differential ideal in differential algebra~\cite{Ritt'50}.

\begin{definition}\label{PerfectIdeal}
The \emph{perfect difference ideal~\cite{Levin'08} generated by a set $\tilde{F} \subset \differencering$}, denoted by $\llbracket \tilde{F}\rrbracket$, is the smallest difference ideal of $\differencering$ containing $\tilde{F}$ and such that for any $\tilde{f}\in \differencering$, $\theta_1,\ldots,\theta_r\in \Mon(\setauto)$ and $k_1,\ldots,k_r \in \Z_{\ge 0}$ we have
\begin{equation*}\label{shuffling}
(\theta_1 \tilde{f})^{k_1}\cdots (\theta_r \tilde{f})^{k_r}\in \llbracket \tilde{F}\rrbracket \quad \Longrightarrow \quad \tilde{f}\in \llbracket \tilde{F}\rrbracket \,.
\end{equation*}
\end{definition}

It is clear that $[\tilde{F}]\subseteq \sqrt{\tilde{F}} \subseteq \llbracket \tilde{F}\rrbracket$. In difference algebra perfect ideals are analogues of radical ideals in commutative~\cite{CLO'07} and differential algebra~\cite{Ritt'50,Hubert'01}. In particular, the difference Hilbert’s Nullstellensatz is formulated in terms of perfect difference ideals (cf.\,\cite{{CohnDifferenceAlgebra}}, Ch.~4, Thm.~4 and\,\cite{Levin'08},\,Thm.~2.6.5). For this reason we give the following definition.

\begin{definition}\label{DiffCons}
We shall say that a differential (resp.\ difference) polynomial $f\in \differentialring$ (resp. $\tilde{f}\in \differencering$) is
a \emph{differential-algebraic} (resp.\ \emph{difference-algebraic}) \emph{consequence} of \eqref{pde} (resp.\ of \eqref{fda})
if $f$ (resp.\ $\tilde{f}$) is an element of the perfect differential (resp.\ difference) ideal
generated by \eqref{pde} (resp.\ \eqref{fda}).
\end{definition}

Some recent results on the relation between the difference Hilbert’s Nullstellensatz and solvability are presented in~\cite{OPS'19,PSW'19}.

%
%
\section{Difference \Gr Bases}\label{sec:differencestandardbases}

The notion of difference \Gr basis was introduced and studied in~\cite{G'12,LaScala'15,GLS'15}. It is a difference analogue of the notion of differential standard basis introduced in~\cite{Ollivier'90}, where a finite standard basis is called {\em \Gr basis}. In this paper we prefer to use the approach to difference \Gr bases suggested in~\cite{G'12}.

\begin{definition}\label{DifferenceRanking}
A \emph{ranking} on $\differencering$ is defined in the same way
as in Definition~\ref{DifferentialRanking} by replacing the action
of $\partial_i$ by the action of $\automorphism_i$
and $\setderiv$ by $\setauto$.
\end{definition}

\begin{definition}\label{DifferenceMonomialOrdering}
A total ordering $\differencemonomialordering$ on the set of \emph{difference monomials}
\[
{\cM} \, \, := \, \, \left\{ \,(\theta_1 \tilde{u}^{(1)})^{i_1}\cdots (\theta_m \tilde{u}^{(m)})^{i_m} \, \Big| \, \theta_j\in \setauto,\ i_j\in \Z_{\geq 0},\ 1\leq j\leq m \, \right\}
\]
is an \emph{admissible (difference) monomial ordering} if it extends a ranking and satisfies
\begin{eqnarray*}
&(a) & (\forall\, \tilde{t}\in {\cM}\setminus \{1\})\ [\tilde{t} \differencemonomialordering 1]\,,\\
&(b) & (\,\forall\, \theta\in \setauto)\ (\,\forall\, \tilde{t}, \tilde{v},\tilde{w}\in {\cM})\ [\ \tilde{v} \differencemonomialordering \tilde{w} \Longleftrightarrow \tilde{t}\cdot \theta\circ \tilde{v} \differencemonomialordering  \tilde{t}\cdot \theta\circ \tilde{w}\,]\,.
\end{eqnarray*}
\end{definition}

For examples of admissible monomial orderings we refer to Appendix~\ref{sec:GBtermorder}.

\medskip

Given an admissible ordering $\differencemonomialordering$, every difference polynomial $\tilde{f}$ has the \emph{leading monomial} $\lm(\tilde{f})\in {\cM}$ with \emph{leading coefficient} $\lc(\tilde{f})$. In what follows every difference polynomial
is to be \emph{normalized (i.e., monic)} by division by its leading coefficient.

\begin{definition}\label{divisibility}
If for $v,w\in {\cM}$ the equality $w=t\cdot \theta\circ v$ holds with $\theta\in \Mon(\setauto)$ and $t\in {\cM}$ we shall say that $v$ {\em divides} $w$ and write $v\mid w$. It is easy to see that this divisibility relation yields a partial order.
\end{definition}

\begin{definition}\label{def_SB}
Given a difference ideal $\differenceideal$ and an admissible monomial ordering $\differencemonomialordering$, a subset $\tilde{G}\subset \differenceideal$ is a \emph{(difference) \Gr basis} for $\differenceideal$ if $[\tilde{G}]=\differenceideal$ and
\begin{equation*}
(\,\forall\, \tilde{f}\in {\cI}\,) (\,\exists\, \tilde{g}\in \tilde{G}\,)\ \ [\,\lm(\tilde{g})\mid \lm(\tilde{f})\,]\,.
\label{SB}
\end{equation*}
\end{definition}

\begin{definition}
A polynomial $\tilde{p}\in \differencering$ is said to be \emph{head reducible modulo $\tilde{q}\in \differencering$ to $\tilde{r}$} if $\tilde{r}=\tilde{p}-m\cdot\theta\circ \tilde{q}$ and $m\in {\cM}$, $\theta\in \Mon(\setauto)$ are such that $\lm(\tilde{p})=m\cdot\theta\circ \lm(\tilde{q})$. In this case transformation from
$\tilde{p}$ to $\tilde{r}$ is an \emph{elementary reduction}, denoted by ${\tilde{p}}\xrightarrow[\tilde{q}]{} \tilde{r}$. Given a set $\tilde{F}\subset \differencering$, $\tilde{p}$ is \emph{head reducible modulo $\tilde{F}$} $($denotation: ${\tilde{p}}\xrightarrow[\tilde{F}]{})$  if there is $\tilde{f}\in \tilde{F}$ such that $\tilde{p}$ is head reducible modulo $\tilde{f}$.
A polynomial $\tilde{p}$ {\em is head reducible to $\tilde{r}$ modulo $\tilde{F}$} if there is a finite chain of elementary reductions
\begin{equation}
\tilde{p}\xrightarrow[\tilde{F}]{}\tilde{p}_1\xrightarrow[\tilde{F}]{} \tilde{p}_2\xrightarrow[\tilde{F}]{}\cdots \xrightarrow[\tilde{F}]{}\tilde{r}\,.
\label{red_chain}
\end{equation}
If no monomial in $\tilde{r}$ from \eqref{red_chain} is head reducible modulo $\tilde{F}$, then \emph{$\tilde{r}$ is in head normal form modulo $\tilde{F}$} and we write $\tilde{r}=\mathrm{HNF}(\tilde{p},\tilde{F}, \differencemonomialordering)$. Similarly, one can define \emph{tail reduction} and \emph{(full) normal form} (denotation: $\mathrm{NF}(\tilde{p},\tilde{F}, \differencemonomialordering)$ . A polynomial set $\tilde{F}$ with more than one element is \emph{(head) interreduced} if
\begin{equation}
(\,\forall \tilde{f}\in \tilde{F}\,)\ [\,\tilde{f}=\mathrm{(H)NF}(\tilde{f},\tilde{F}\setminus \{\tilde{f}\}, \differencemonomialordering)\,]\,. \label{interreduce}
\end{equation}
\label{reduction}
\end{definition}
Admissibility of $\differencemonomialordering$, as in commutative algebra,  provides termination of the chain~(\ref{red_chain}) for any $\tilde{p}$ and $\tilde{F}$. Then $\mathrm{(H)NF}(\tilde{p},\tilde{F},\sqsupset)$ can be computed by the difference version of a multivariate polynomial division algorithm~\cite{BW'93,CLO'07}. If $\tilde{G}$ is a \Gr basis of $[\tilde{G}]$, then from Definitions~\ref{def_SB} and \ref{reduction} it follows
\[
\tilde{f}\in [\tilde{G}] \quad \Longleftrightarrow \quad \mathrm{(H)NF}(\tilde{f},\tilde{G}, \differencemonomialordering)=0\,.
\]
Thus, if an ideal has a finite Gr\"{o}bner basis, then its construction solves the ideal membership problem in the same way as in commutative~\cite{BW'93,CLO'07} and differential~\cite{Ollivier'90,Zobnin'05} algebra. The algorithmic characterization of difference \Gr bases and their construction in difference polynomial rings employ difference $S$-polynomials.

\begin{definition}
Given an admissible ordering and monic difference polynomials $\tilde{p}$ and $\tilde{q}$, a polynomial $S(\tilde{p},\tilde{q}):=m_1\cdot \theta_1\circ \tilde{p}-m_2\cdot \theta_2\circ \tilde{q}$
is called \emph{$S$-polynomial} associated to $\tilde{p}$ and $\tilde{q}$ if
$ m_1\cdot \theta_1\circ  \lm(\tilde{p})=m_2\cdot \theta_2\circ \lm(\tilde{q})$
with co-prime $m_1\cdot \theta_1$ and $m_2\cdot \theta_2$
(for $\tilde{p}=\tilde{q}$ we shall say that the $S$-polynomial is associated with $\tilde{p}$).\label{S-polynomial}
\end{definition}

\begin{proposition}\label{B-criterion}
Given a difference ideal $\differenceideal \subset \differencering$ and an admissible monomial ordering $\differencemonomialordering$, a set of polynomials $\tilde{G}\subset \differenceideal$ is a Gr\"obner basis of $\differenceideal$ if and only if
\begin{equation}\label{PassivityConditions}
\mathrm{(H)NF}(S(\tilde{p},\tilde{q}),\tilde{G}, \differencemonomialordering) \, \, = \, \, 0
\end{equation}
for
all $S$-polynomials associated with polynomials in $\tilde{G}$.
\end{proposition}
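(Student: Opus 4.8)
The plan is to prove the statement as an adaptation of the classical Buchberger criterion to the difference setting, following the same pattern as in commutative \cite{BW'93,CLO'07} and differential \cite{Ollivier'90} algebra, with the monoid $\Mon(\setauto)$ of shift operators playing the role of the extra ``multiplicative'' structure. The forward implication is immediate: if $\tilde{G}$ is a \Gr basis of $\differenceideal$, then every $S(\tilde{p},\tilde{q})$ with $\tilde{p},\tilde{q}\in\tilde{G}$ lies in $[\,\tilde G\,]=\differenceideal$, and by Definitions~\ref{def_SB} and \ref{reduction} every element of $\differenceideal$ has $\mathrm{(H)NF}$ equal to zero modulo $\tilde{G}$; hence \eqref{PassivityConditions} holds. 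So the real content is the converse.

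For the converse, I would argue by contradiction in the style of the standard ``leading-term cancellation'' lemma. Assume \eqref{PassivityConditions} holds for all $S$-polynomials but $\tilde{G}$ is not a \Gr basis, so there is some $\tilde{f}\in\differenceideal$ with $\lm(\tilde{f})$ not divisible (in the sense of Definition~\ref{divisibility}) by any $\lm(\tilde{g})$, $\tilde{g}\in\tilde G$. Since $\tilde f\in[\,\tilde G\,]$ (note $[\,\tilde G\,]=\differenceideal$ is part of the hypothesis that $\tilde G\subset\differenceideal$ generates it), we may write $\tilde f=\sum_{i} c_i\,m_i\,(\theta_i\circ \tilde g_i)$ with $c_i\in\differencefield$, $m_i\in\cM$, $\theta_i\in\Mon(\setauto)$, $\tilde g_i\in\tilde G$. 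Among all such representations choose one minimizing the maximal monomial $\tilde\delta:=\max_{\differencemonomialordering} m_i\cdot\theta_i\circ\lm(\tilde g_i)$, and among those minimize the number of indices $i$ attaining $\tilde\delta$. By the choice of $\tilde f$, necessarily $\tilde\delta\differencemonomialordering\lm(\tilde f)$, so the terms of degree $\tilde\delta$ must cancel; hence at least two indices, say $i$ and $j$, attain it. Then $m_i\theta_i\circ\lm(\tilde g_i)=\tilde\delta=m_j\theta_j\circ\lm(\tilde g_j)$, and this common multiple is (a shift/scaling of) a multiple of the coprime combination appearing in $S(\tilde g_i,\tilde g_j)$; writing $\tilde\delta = t\cdot\tau\circ(m_1\theta_1\circ\lm(\tilde g_i))$ for the coprime $m_1\theta_1,m_2\theta_2$ of Definition~\ref{S-polynomial}, one rewrites $c_i m_i\theta_i\circ\tilde g_i + c_j m_j\theta_j\circ\tilde g_j$ using $c_i\,t\cdot\tau\circ S(\tilde g_i,\tilde g_j)$ plus a term supported strictly below $\tilde\delta$. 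Now invoke the hypothesis: $\mathrm{(H)NF}(S(\tilde g_i,\tilde g_j),\tilde G,\differencemonomialordering)=0$ means $S(\tilde g_i,\tilde g_j)$ reduces to $0$, i.e.\ it has a representation $\sum_k c'_k\,m'_k\,(\theta'_k\circ\tilde g_k)$ with all $m'_k\theta'_k\circ\lm(\tilde g_k)\differencemonomialordering\lm(S(\tilde g_i,\tilde g_j))\prec\tilde\delta$ (after applying $t\cdot\tau\circ(\cdot)$, which respects the ordering by admissibility condition~(b)). Substituting this back produces a new representation of $\tilde f$ in which either $\tilde\delta$ has strictly decreased or the number of summands attaining $\tilde\delta$ has strictly decreased, contradicting minimality. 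Therefore no such $\tilde f$ exists and $\tilde G$ is a \Gr basis.

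The one point requiring care — and the main obstacle compared with the commutative case — is that the ring $\differencering$ is \emph{not} Noetherian: $\Mon(\setauto)$ acts by injective but non-surjective endomorphisms, so ``divisibility'' $v\mid w$ (Definition~\ref{divisibility}) allows the auxiliary factor $\theta\in\Mon(\setauto)$, and one must check that the cancellation lemma still goes through with this enlarged divisibility and that admissibility condition~(b), $\tilde v\differencemonomialordering\tilde w\Leftrightarrow \tilde t\cdot\theta\circ\tilde v\differencemonomialordering \tilde t\cdot\theta\circ\tilde w$, is exactly what is needed to transport the strict inequality $\lm(S(\tilde g_i,\tilde g_j))\prec m_1\theta_1\circ\lm(\tilde g_i)$ through the operator $t\cdot\tau\circ(\cdot)$. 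One also needs that any finite combination lying in $\differenceideal=[\,\tilde G\,]$ can be expanded in the stated form — this is just the definition of the difference ideal generated by $\tilde G$, using that $\Mon(\setauto)$ is a semigroup so iterated shifts collapse to a single $\theta_i$. Termination of reduction chains \eqref{red_chain} (already noted to follow from admissibility of $\differencemonomialordering$) guarantees that $\mathrm{(H)NF}$ is well-defined and that the descent argument in the minimality step cannot continue forever within a fixed representation, though the usual well-foundedness of $\differencemonomialordering$ on the relevant finite set of monomials is what actually closes the argument. Modulo these checks, the proof is the verbatim analogue of Buchberger's, and I would present it compactly, citing \cite{BW'93,CLO'07,G'12} for the routine parts.
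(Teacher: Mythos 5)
Your proposal is correct and follows exactly the route the paper intends: the paper's own proof is a one-line reference to the standard Buchberger-criterion argument from commutative and differential algebra, and your write-up simply fills in that classical minimal-representation/cancellation argument, correctly identifying admissibility condition (b) and the enlarged divisibility with shifts as the only points needing adaptation. No gaps; note only that your indices $i,j$ with $\tilde g_i=\tilde g_j$ but different shifts are precisely the case covered by the $S$-polynomials "associated with a single polynomial" in Definition~\ref{S-polynomial}, which is worth stating explicitly.
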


\begin{proof}
This follows from Definitions \ref{def_SB}, \ref{reduction} and \ref{S-polynomial} in line with the standard proof of the analogous theorem for \Gr bases in commutative algebra~\cite{BW'93,CLO'07} and with the proof of a similar theorem for standard bases in differential algebra~\cite{Ollivier'90}.\qed
\end{proof}

\begin{definition}\label{df:PassivityConditions}
Given a system~\eqref{fda} of difference equations, the conditions~\eqref{PassivityConditions} with $\tilde{p},\tilde{q}\in \tilde{F}$ are said to be the {\em (Gr\"{o}bner) passivity conditions} for the system~\eqref{fda}.
\end{definition}

Let $\differenceideal = [\tilde{F}]$ be a difference ideal generated by a finite set $\tilde{F}\subset \differencering$ of difference polynomials with non-negative shifts. Then for a fixed admissible monomial ordering the algorithm~\algref{DifferenceGr\"obnerBasis} given below, if it terminates, returns a \Gr basis $\tilde{G}$ of $\differenceideal$.  The subalgorithm \algref{Interreduce} invoked in line~9 performs mutual (head) interreduction of the elements in $\tilde{H}$ and returns a set satisfying (\ref{interreduce}).

Algorithm~\algref{DifferenceGr\"obnerBasis} is a difference analogue of the simplest version of Buchberger's algorithm (cf.~\cite{BW'93,CLO'07,Ollivier'90}). Its correctness is provided by Theorem~\ref{B-criterion}. The algorithm always terminates when the input polynomials are linear. If this is not the case, the algorithm may not terminate.  This means that the {\bf repeat-until} loop (lines 2--8) may be infinite as in the differential case~\cite{Ollivier'90,Zobnin'05}. One can improve the algorithm by taking into account Buchberger's criteria to avoid some useless zero reductions in line~5. The difference criteria are similar to the differential ones~\cite{Ollivier'90}.

\begin{algorithm}\label{StandardBasis}
\DontPrintSemicolon
\KwInput{$\tilde{F}\subset \differencering \setminus \{0\}$, a finite set of non-zero polynomials; $\differencemonomialordering$, a monomial ordering}
\KwOutput{$\tilde{G}$, a (head) interreduced \Gr basis of $[\tilde{F}]$}
$\tilde{G} \gets \tilde{F}$\;
\Repeat{$\tilde{G} = \tilde{H}$}
{
  $\tilde{H} \gets \tilde{G}$\;
  \For{$S$-polynomials $\tilde{s}$ associated with elements in $\tilde{H}$}
  {
    $\tilde{g} \gets \mathrm{(H)NF}(\tilde{s},\tilde{H}, \differencemonomialordering)$\;
    \If{$\tilde{g}\neq 0$}
    {
      $\tilde{G} \gets \tilde{G}\cup \{\tilde{g}\}$\;
    }
  }
}
$\tilde{G} \gets \algref{Interreduce}(\tilde{G})$\;
\Return $\tilde{G}$
\caption{\algref{DifferenceGr\"obnerBasis}}
\end{algorithm}


%
%
\section{Consistency}\label{sec:consistency}

Let the PDE system~\eqref{pde} and its finite difference discretization~\eqref{fda} on the regular grid~\eqref{grid} be given. 

\begin{definition}\label{Implication}
We shall say that a {\em difference equation} $\tilde{f}(\mathbf{\tilde{u}})=0$, $\tilde{f}\in \differencering$, implies the set of differential equations
\begin{equation*}\label{ImplSet}
{\mathcal{F}} \, \, := \, \, \{\,{r}_k(\mathbf{u})=0\,\mid r_k\in \differentialring\,,\ k=1,\ldots ,K,\ K\in \N_{>0}\,\}
\end{equation*}
and we write
$\tilde{f}\rhd {\mathcal{F}}$ for $K>1$ or  $\tilde{f}\rhd r_1$ for $K=1$, if
Taylor expansion of $\tilde{f}$ about a grid point,
after clearing denominators containing the elements in $\mathbf{h}=(h_1,\ldots,h_n)$
by multiplying by an appropriate $q(\mathbf{h})$, yields
\begin{equation}\label{ContinuousLimit}
q(\mathbf{h}) \cdot \tilde{f}(\mathbf{\tilde{u}}) \, \, = \, \, \sum_{k=1}^K \,m_k(\mathbf{h})\cdot r_k(\mathbf{u}) + \mathcal{O}(d+1)\,,
\end{equation}
where $(\forall k)\ [\,m_k(\mathbf{h})\in \Mon(\mathbf{h}),\ \deg(m_k(\mathbf{h}))=d\,]$ for some $d\in \Z_{>0}$ and $\mathcal{O}(d+1)$ denotes terms whose total degree in $h_1,\ldots,h_n$ is larger than $d$.
\end{definition}

\begin{definition}\label{def-wcons}
A difference system~\eqref{fda} is {\em weakly consistent} or {\em w-consistent} with PDE \eqref{pde} if 
\begin{equation}\label{w-consistency}
(\,\forall\,j \in \{\,1,\ldots,p\,\}\,) \ \  [\,\tilde{f}_j\rhd f_j\,]\,.
\end{equation}
\end{definition}

It is clear that if one considers a single PDE $f(\mathbf{u})=0$ and a difference equation  $\tilde{f}(\mathbf{\tilde{u}})=0$ with implication $\tilde{f}\rhd {f}$, i.e.,
\begin{equation}\label{SingleImplication}
q(\mathbf{h}) \cdot \tilde{f}(\mathbf{\tilde{u}}) \, \, = \, \, m(\mathbf{h})\cdot f(\mathbf{u})+\mathcal{O}(\deg(m)+1)\,,\quad m\in \Mon(\mathbf{h})\,,
\end{equation}
then it is always possible to redefine $\tilde{f}' := q(\mathbf{h}) \tilde{f}/m(\mathbf{h})$, and there is a limit (cf.\ Proposition~\ref{prop-s-cons}) $\mathbf{h}\rightarrow 0$, i.e. $(\forall i)[\ h_i\rightarrow 0\ ]$, such that
\[
\tilde{f}' \xrightarrow[\mathbf{h}\rightarrow 0]{} f + \mathcal{O}(\mathbf{h})\,,
\]
taking Remark~\ref{TaylorExpansionElementsSigma} into account.

The condition~\eqref{w-consistency} means that Eqs.~\eqref{fda} \emph{approximate} Eqs.~\eqref{pde} and by this reason we call Eqs.~\eqref{fda}
\emph{finite difference approximation} (FDA) to Eqs.~\eqref{pde}.

\begin{remark}\label{ContLim}
Given $\tilde{f}(\tilde{\mathbf{u}})$, computation of $f(\mathbf{u})$ is straightforward and has been implemented as routine {\sc ContinuousLimit} in the Maple package {\sc LDA} \cite{GR'12}  ({\underline{L}inear \underline{D}ifference \underline{A}lgebra}).

\end{remark}

\begin{example}({\cite{Str'04}, Ex.\,1.4.2}) We consider the one-way wave equation
\begin{equation}\label{one-way-wave-equation}
f:=u_t+a\,u_x=0\,,\qquad u=u(t,x)\,,\quad a = \text{const}\,,
\end{equation}
where $a$ is a constant, $t$ represents time, and $x$ represents the spatial variable. For the classical Lax-Friedrichs discretization the difference form of Eq.~\eqref{one-way-wave-equation} for the grid with $t_{n+1}-t_n=h_1$,\ \ $x_{m+1}-x_m=h_2$ is given by
\begin{equation}\label{LFscheme}
\tilde{f} \, \, := \, \, \frac{\tilde{u}^{n+1}_m-\frac{1}{2}\left(\tilde{u}^{n}_{m+1}+\tilde{u}^{n}_{m-1}\right)}{h_1}+
a \, \frac{\tilde{u}^n_{m+1}-\tilde{u}^n_{m-1}}{2h_2} \, \, = \, \, 0\,,
\end{equation}
where $\tilde{u}^n_m:=\tilde{u}(nh_1,mh_2)$ is a smooth grid function. The Taylor expansion of~\eqref{LFscheme} around the point $(t=nh_1,x=mh_2)$ reads
\[
 \tilde{f} \xrightarrow[h_1,h_2\rightarrow 0]{} (u_t+a\,u_x)\,h_1 + \frac{1}{2}h_1^2\,u_{tt}-\frac{1}{2}h_2^2\,u_{xx}+\frac{1}{6}a\,h_1h_2^2\,u_{xxx}+\cdots\,.
\]
So $\tilde{f}\rhd f$ as $h_1,h_2\rightarrow 0$, and $\tilde{f}/h_1 \rightarrow f$ as $h_1\rightarrow 0$  and $h_2^2/h_1\rightarrow 0$.
\end{example}

Now we formulate the property of strong consistency which, if it holds, links the radical differential ideal generated by the differential system with the perfect difference ideal generated by a difference approximation to the system.

\begin{definition}
An FDA \eqref{fda} to a PDE system \eqref{pde} with $p>1$ is \emph{strongly consistent} or \emph{s-consistent} if
\begin{equation}
(\,\forall \tilde{f}\in \llbracket \tilde{F} \rrbracket\,)\  (\,\exists\, {\mathcal{F}}\subset \llbracket F \rrbracket\,)\  \ [\,\tilde{f}\rhd {\mathcal{F}}\,]\,. \label{s-cond}
\end{equation}
\label{def-scon}
\end{definition}

In \cite[Definition~12]{G'12} we defined s-consistent FDA to PDE for Cartesian grids $h_1=h_2=\cdots =h_n=h$ as the ones satisfying the condition
\begin{equation*}
(\,\forall \tilde{f}\in \llbracket \tilde{F} \rrbracket\,)\  (\,\exists
f\in \llbracket F \rrbracket \,)\ [\,\tilde{f}\rhd f\,] \label{s-cond-Carthesian}
\end{equation*}
that corresponds to the Taylor expansion
\begin{equation*}\label{ContinuousLimit-Carthesian}
\tilde{f}(\mathbf{\tilde{u}}) \, \, = \, \, h^kf(\mathbf{u}) + \mathcal{O}(h^{k+1})
\end{equation*}
of the form~\eqref{SingleImplication} and to the consistency condition $\tilde{f}_1\xrightarrow[h\rightarrow 0]{} f $ with $\tilde{f}_1=\tilde{f}/h^k$.

The s-consistency property~\eqref{s-cond} implies the existence of a limit $\mathbf{h}\rightarrow 0$ such that in this limit every difference-algebraic consequence of~\eqref{fda} (cf.\ De\-fi\-nition~\ref{DiffCons}), after clearing denominators containing elements in~$\mathbf{h}$, is a diffe\-ren\-tial-al\-ge\-bra\-ic consequence of~\eqref{pde}.

\begin{proposition}\label{prop-s-cons}
Let FDA~\eqref{fda} to PDE system~\eqref{pde} be defined on a regular solution grid~\eqref{grid}. If the perfect difference ideal generated by the set $\tilde{F}$ satisfies the condition~\eqref{s-cond} of s-consistency, then there is a limit $\mathbf{h}\rightarrow 0$ such that
\begin{equation}
(\,\forall \tilde{f}\in \llbracket \tilde{F} \rrbracket\,)\  (\,\exists
f\in \llbracket F \rrbracket,\,\mu = {\cal O}(\mathbf{h}),\,\,d\in \Z_{\geq 0}  \,)\ \left[\,\frac{\tilde{f}}{\mu^d}\xrightarrow[\mathbf{h}\rightarrow 0]{} f \,\right]. \label{s-cond-consequence}
\end{equation}
\end{proposition}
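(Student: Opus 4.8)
The plan is to unwind the definitions and observe that the target statement is essentially a pointwise restatement of s-consistency for each fixed element of the perfect difference ideal, combined with the normalization trick that was already spelled out in the special case~\eqref{SingleImplication}. So I would begin by fixing an arbitrary $\tilde{f}\in\llbracket\tilde{F}\rrbracket$. By the hypothesis~\eqref{s-cond} there is a finite set ${\cal F}=\{r_1=0,\ldots,r_K=0\}\subset\llbracket F\rrbracket$ with $\tilde{f}\rhd{\cal F}$, i.e., after multiplying by a suitable $q(\mathbf{h})$ one has the expansion~\eqref{ContinuousLimit}: $q(\mathbf{h})\cdot\tilde{f}=\sum_{k=1}^K m_k(\mathbf{h})\,r_k(\mathbf{u})+{\cal O}(d+1)$ with all $m_k\in\Mon(\mathbf{h})$ of the same total degree $d>0$.

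Next I would extract a single differential-algebraic consequence $f\in\llbracket F\rrbracket$ and an honest monomial $\mu\in\Mon(\mathbf{h})$ out of this sum. Dividing~\eqref{ContinuousLimit} by one of the monomials, say $m_1(\mathbf{h})$, gives
\[
\frac{q(\mathbf{h})}{m_1(\mathbf{h})}\cdot\tilde{f} \,=\, r_1(\mathbf{u}) + \sum_{k=2}^{K}\frac{m_k(\mathbf{h})}{m_1(\mathbf{h})}\,r_k(\mathbf{u}) + \frac{{\cal O}(d+1)}{m_1(\mathbf{h})}\,.
\]
Since every $m_k$ has total degree exactly $d$, each ratio $m_k/m_1$ is a Laurent monomial in the $h_i$ of total degree $0$; one then chooses the limit $\mathbf{h}\to 0$ to be one along which every such ratio tends to a (finite, possibly zero) constant $\gamma_k$ — for instance taking $h_i=\lambda_i h$ for fixed positive $\lambda_i$ and letting $h\to 0$, exactly as in the Lax–Friedrichs example and in the Cartesian case. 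In that limit the right-hand side converges to $f:=r_1+\sum_{k\ge2}\gamma_k r_k$, which lies in $\llbracket F\rrbracket$ because $\llbracket F\rrbracket$ is a $\differentialfield$-vector space containing each $r_k$ and the $\gamma_k$ are constants in $\differentialfield$; the error term ${\cal O}(d+1)/m_1(\mathbf{h})$ is ${\cal O}(\mathbf{h})$ and vanishes. Finally, clearing the remaining denominator in $q(\mathbf{h})/m_1(\mathbf{h})$ by the same monoid element as in~\eqref{SingleImplication}, i.e.\ absorbing it into a power $\mu^d$ with $\mu={\cal O}(\mathbf{h})$, and invoking Remark~\ref{TaylorExpansionElementsSigma} so that $\tilde{f}/\mu^d$ genuinely has a Taylor expansion about the grid point, yields $\tilde{f}/\mu^d\xrightarrow[\mathbf{h}\to0]{}f$, which is~\eqref{s-cond-consequence}.

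The one point that needs care — and which I expect to be the main obstacle — is making the choice of the limiting direction uniform, or at least showing a single limit works simultaneously: a priori different $\tilde{f}$'s produce different sets ${\cal F}$ and hence different collections of degree-$0$ ratios $m_k/m_1$, so one must argue that a fixed regime such as $h_i=\lambda_i h$, $h\to 0$ (with the $\lambda_i>0$ fixed once and for all by the grid) makes \emph{all} of them converge; this is true because any Laurent monomial of total degree $0$ in the $h_i$ becomes a monomial in the $\lambda_i$ times $h^0=1$, hence a constant, independently of $\tilde{f}$. It is worth stating explicitly that the degree-homogeneity built into Definition~\ref{Implication} (all $m_k$ of the \emph{same} degree $d$) is precisely what guarantees this; without it the ratios could blow up and no common limit would exist. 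The remaining steps — linearity of $\llbracket F\rrbracket$, the ${\cal O}$-bookkeeping, and the denominator-clearing — are routine and parallel to the discussion already given after Definition~\ref{def-scon} for Cartesian grids.
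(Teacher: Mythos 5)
Your proposal is correct and follows essentially the same route as the paper: the paper also takes the radial limit $h_i = a_i\mu$ with fixed positive constants $a_i$ and $\mu\to 0$, so that by the degree-homogeneity of the $m_k$ in Definition~\ref{Implication} one gets $q(\mathbf{h})\tilde{f} = \mu^d\bigl(\sum_k m_k(\mathbf{a})\,r_k + \mathcal{O}(\mu)\bigr)$ and hence $\tilde{f}/\mu^d \to \sum_k m_k(\mathbf{a})\,r_k \in \llbracket F\rrbracket$. Your normalization by $m_1(\mathbf{h})$ instead of $\mu^d$ differs only by the nonzero constant $m_1(\boldsymbol{\lambda})$, and your observation that the fixed scaling makes all degree-zero ratios constant simultaneously is exactly how the paper's single choice of $\mathbf{a}$ works for every $\tilde{f}$.
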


\begin{proof} Let $\tilde{f}(\mathbf{\tilde{u}})\in \llbracket \tilde{F} \rrbracket$. Then the equation $\tilde{f}(\mathbf{\tilde{u}})=0$ is a difference-algebraic consequence of system \eqref{fda}. Without loss of generality one may assume that the denominators in $\tilde{f}$  containing elements in $\mathbf{h}$ have been cleared. Now we consider the following limit $\mathbf{h}\rightarrow 0$:
\begin{equation*}\label{Limit:mu}
h_i:=a_i\,\mu\,,\quad 0<a_i\in \Q\,,\quad \mu \in \R_{> 0}\,,\quad i=1,\ldots,n\,,\quad \mu\rightarrow 0\,.
\end{equation*}
Then, from the Taylor expansion~\eqref{ContinuousLimit} we obtain
\[
 \tilde{f}(\mathbf{\tilde{u}}) \, \, = \, \, \mu^d \left(\sum_{k=1}^K \,m_k(\mathbf{a})\cdot r_k(\mathbf{u}) + \mathcal{O}(\mu)\right),
\]
where $\quad \mathbf{a}:=(a_1,\ldots, a_n) \in \Q^n$\ \ and\ \ $\sum_{k=1}^K \,m_k(\mathbf{a})\cdot r_k(\mathbf{u})\in \llbracket F\rrbracket$.\qed
\end{proof}

Let $F$ be a PDE system~\eqref{pde} and $\tilde{F}$ be a w-consistent FDA~\eqref{fda}. In practice, FDA (scheme) can be obtained from PDE by approximation of the partial derivatives occurring in PDE with appropriate finite differences. Another way of discretization is to apply the method suggested in~\cite{GBM'06}. In either case, to verify the s-consistency condition~\eqref{s-cond}, one has to reformulate this condition to make it algorithmic.

The first step in this direction is to use the following statement.
\begin{theorem}\label{GB:s-constistency}
A difference approximation (\ref{fda}) to a differential system (\ref{pde}) is s-consistent if and only if a \Gr basis $\tilde{G}\subset \differencering$ of the difference
ideal $[\tilde{F}]$ satisfies
\begin{equation}
(\,\forall \tilde{g}\in \tilde{G}\,)\ (\,\exists \,G\subset \llbracket F \rrbracket\,)\  [\,\tilde{g}\rhd G\,]\,. \label{cons-gb}
\end{equation}
\end{theorem}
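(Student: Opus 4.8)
The plan is to prove the two implications separately, using the fact that a \Gr basis $\tilde{G}$ of $[\tilde{F}]$ generates the same difference ideal, i.e.\ $[\tilde{G}] = [\tilde{F}]$, and hence generates the same perfect difference ideal $\llbracket \tilde{G} \rrbracket = \llbracket \tilde{F} \rrbracket$. The key technical tool in both directions is a ``stability'' property of the implication relation $\rhd$: if $\tilde{f} \rhd \mathcal{F}$ with $\mathcal{F} \subset \llbracket F \rrbracket$, then applying a shift $\theta \in \Mon(\setauto)$, multiplying by a difference monomial $m \in \cM$, taking sums, and taking powers each preserve the property that the Taylor expansion (after clearing $\mathbf{h}$-denominators) has leading part a $\Mon(\mathbf{h})$-combination of elements of $\llbracket F \rrbracket$. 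This is because $\Mon(\setauto)$-shifts act on the continuous side as differential operators (Remark~\ref{TaylorExpansionElementsSigma}), which map $\llbracket F \rrbracket$ to itself, and multiplication/addition/powers stay inside the radical differential ideal as well. I would isolate this as a lemma before starting.

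For the direction ``s-consistent $\Rightarrow$ (\ref{cons-gb})'': since $\tilde{G} \subset [\tilde{F}] \subseteq \llbracket \tilde{F} \rrbracket$, every $\tilde{g} \in \tilde{G}$ lies in $\llbracket \tilde{F} \rrbracket$, so (\ref{s-cond}) directly furnishes a finite set $\mathcal{F} \subset \llbracket F \rrbracket$ with $\tilde{g} \rhd \mathcal{F}$; take $G = \mathcal{F}$. This direction is essentially immediate.

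For the converse, assume (\ref{cons-gb}) and take an arbitrary $\tilde{f} \in \llbracket \tilde{F} \rrbracket = \llbracket \tilde{G} \rrbracket$. I would argue by the inductive structure of the perfect difference ideal generated by $\tilde{G}$: by Definition~\ref{PerfectIdeal}, $\llbracket \tilde{G} \rrbracket$ is built from $\tilde{G}$ by the difference-ideal operations (sums, multiplication by ring elements, application of $\sigma_i$) together with the ``shuffling'' closure $(\theta_1 \tilde{h})^{k_1}\cdots(\theta_r \tilde{h})^{k_r} \in \llbracket \tilde{G}\rrbracket \Rightarrow \tilde{h} \in \llbracket\tilde{G}\rrbracket$. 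I would show that the set $P := \{\, \tilde{h} \in \differencering \mid \exists\, \mathcal{F} \subset \llbracket F \rrbracket \text{ with } \tilde{h} \rhd \mathcal{F}\,\} \cup \{0\}$ contains $\tilde{G}$ (by hypothesis) and is closed under all these operations, hence contains $\llbracket \tilde{G}\rrbracket$. Closure under addition, multiplication by an arbitrary difference polynomial, and application of $\sigma_i$ follows from the stability lemma above (noting that when two implied differential sets are added one must pass to a common total degree $d$ in $\mathbf{h}$ by multiplying through by suitable $\Mon(\mathbf{h})$-monomials, and that a difference polynomial with $\mathbf{h}$-free coefficients, or after clearing $\mathbf{h}$-denominators, only rescales the degree). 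Closure under the shuffling rule is where the perfectness of $\llbracket F \rrbracket$ is used: if $(\theta_1\tilde{h})^{k_1}\cdots(\theta_r\tilde{h})^{k_r} \rhd \mathcal{F}$ for some $\mathcal{F}\subset\llbracket F\rrbracket$, then expanding each $\theta_j\tilde{h}$ in Taylor series shows the leading part of the product is (a $\Mon(\mathbf{h})$-multiple of) a product of powers of the differential polynomials implied by $\tilde{h}$ itself, and since a radical differential ideal is closed under the analogous radical-membership rule, one recovers that $\tilde{h}$ implies a subset of $\llbracket F \rrbracket$.

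The main obstacle I anticipate is the bookkeeping in the shuffling step: matching up the grid-monomial degrees $d$ across the factors $\theta_j \tilde{h}$ and verifying that the ``leading differential part'' of a product of Taylor expansions is genuinely a product of the leading differential parts (rather than being killed by cancellation) — this requires that the leading differential part of $\tilde{h}$ is nonzero, which is exactly what $\tilde{h} \in P$, $\tilde{h}\neq 0$ guarantees, but one must phrase the implication relation carefully (possibly tracking the minimal total $\mathbf{h}$-degree $d$ attached to each $\tilde{h}$) so that the product argument goes through cleanly. A secondary subtlety is ensuring that clearing $\mathbf{h}$-denominators and the passage $[\tilde{G}] = [\tilde{F}]$ are compatible with $\rhd$, but this is routine once the stability lemma is in place.
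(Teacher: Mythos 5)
Your forward direction is fine and is essentially forced: $\tilde{G}\subset[\tilde{F}]\subseteq\llbracket\tilde{F}\rrbracket$, so \eqref{s-cond} applies directly. The converse, however, has a genuine gap, and it is not the bookkeeping issue you anticipate: your argument never uses the Gr\"obner property of $\tilde{G}$, only that $[\tilde{G}]=[\tilde{F}]$ and hence $\llbracket\tilde{G}\rrbracket=\llbracket\tilde{F}\rrbracket$. If your closure argument for the set $P$ were valid, you could run it verbatim with $\tilde{F}$ in place of $\tilde{G}$ (since $\tilde{F}$ also generates $[\tilde{F}]$ and lies in $P$ by w-consistency) and conclude that every w-consistent FDA is s-consistent. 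The paper's own examples refute this: the Stokes discretization \eqref{DSE1}, the scheme FDA2 in Section~\ref{sec:NavierStokesEquations}, and the first discretization in Example~\ref{ex:Example1} are all w-consistent but s-inconsistent.

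The step that fails is closure of $P$ under addition, and hence under forming arbitrary elements of the difference ideal. When the lowest-degree parts of the Taylor expansions of two summands cancel --- which is exactly what happens in $S$-polynomial-type combinations --- the new lowest-degree part of the sum consists of higher-order Taylor terms of the summands, and those are not controlled by the hypothesis: the degree-$(d+1)$ term of a discretization of $f\in F$ need not lie in $\llbracket F\rrbracket$ (e.g.\ the term $\tfrac{h_1}{2}u_{xx}$ in the forward-difference discretization of $u_x-u^2$ in Example~\ref{ex:Example1}, where the cancelling combination produces $\tilde{u}_{i,j}^4$ with limit $u^4\notin\llbracket F\rrbracket$). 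So your ``stability lemma'' is false for sums, and ``passing to a common total degree $d$'' does not address cancellation at that common degree. The Gr\"obner hypothesis is precisely what rescues the theorem: the dangerous cancelling combinations are, up to reduction, the $S$-polynomials, and Proposition~\ref{B-criterion} guarantees these reduce to zero modulo $\tilde{G}$, i.e.\ every element of $[\tilde{F}]$ admits a standard, cancellation-controlled representation in terms of basis elements whose continuous limits have been checked; elements of the perfect closure are then reached through the shuffling step, where your radical-ideal argument is sound. Your induction contains no trace of this mechanism, so the converse is not established. (For comparison, the paper itself gives no detailed argument here either: it cites the Cartesian-grid case of an earlier work and asserts that the extension to the grid \eqref{grid} is straightforward.)
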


\begin{proof}
For a Cartesian grid with $h_1=h_2=\cdots=h_n$ the proof was given in \cite[Thm.~3]{G'12}. Its extension to the grid~\eqref{grid} is straightforward.\qed
\end{proof}

If the difference \Gr basis $\tilde{G}$ is finite and we can construct it in finitely many steps, then we can compute the Taylor expansion~\eqref{ContinuousLimit} of every element $\tilde{f}(\mathbf{\tilde{u}})\in~\tilde{G}$ and obtain
\begin{equation*}\label{Map:ContLim}
  \frac{\tilde{f}}{\mu^d} \xrightarrow[\mu\rightarrow 0]{} f=\sum_{k=1}^K \,m_k(\mathbf{q})\cdot r_k(\mathbf{u})\,.
\end{equation*}

Furthermore, to check the radical ideal membership $f\in \llbracket F \rrbracket$, one can compute algorithmically a differential Thomas decomposition (cf.\ Section~\ref{sec:differentialthomas}, \cite{GLHR'18,BGLHR'12,Robertz'14}) of the PDE~\eqref{pde}
(with respect to any ranking $\differentialranking$)
into the subsystems $S_1,\ldots,S_t$ with disjoint solution sets and use the relation
\begin{equation}\label{DiffIdealMembership}
f\in \llbracket F \rrbracket \quad \iff \quad
\text{NF}(f,S_1,\differentialranking) \, = \, \ldots \, = \, \NF(f,S_t,\differentialranking) \, = \, 0\,,
\end{equation}
where $\text{NF}(f,S_i,\differentialranking)$ denotes the Janet normal form of $f$ modulo $S_i$,
as defined above (before Definition~\ref{de:differentialpassive}, cf.\ also \cite[Prop.\,2.2.50]{Robertz'14}).

Since the difference polynomial ring $\differencering$ is non-Noetherian~\cite{GLS'15}, the computation of a \Gr basis $\tilde{G}$ is not algorithmic. Hence, Algorithm~\ref{StandardBasis} may not terminate (cf.\ also \cite{G'12,GLS'15}).
However, instead one can apply the difference triangular decomposition
as developed in Section~\ref{DifferenceThomasDecomposition}.

If the input PDE system is linear, then the condition~\eqref{cons-gb} is algorithmic and can be verified by computing \Gr bases of the ideals generated by $F$ and $\tilde{F}$~\cite{GR'10}. All related computations can be done by using the relevant routines of the Maple packages {\sc LDA}~\cite{GR'12} and {\sc Janet}~\cite{Maple-Janet'03}.

\section{Incompressible Navier-Stokes equations}
\label{sec:NavierStokesEquations}
The Navier-Stokes equations for a three-dimensional incompressible flow of constant viscosity can be written as
\begin{equation}\label{NSE}
\left\lbrace
\begin{array}{rl}
f_0 := &\ \partial_1u+\partial_2 v+ \partial_3 w =0\,, \hfill \\[4pt]
f_1 := &\ \partial_t u+u\partial_1u+v\partial_2u +w\partial_3u +\partial_1p -\frac{1}{\mathrm{Re}}\nabla^2 u=0\,,\hfill \\[4pt]
f_2 := &\ \partial_t v+u\partial_1v+v\partial_2v +w\partial_3v +\partial_2p -\frac{1}{\mathrm{Re}}\nabla^2 v=0\,,\hfill \\[4pt]
f_3 := &\ \partial_t w + u\partial_1 w + v \partial_2w +w \partial_3 w +\partial_3 p -\frac{1}{\mathrm{Re}}\nabla^2 w=0\,.\hfill \\
\end{array}
\right.
\end{equation}
Here $\mathbf{u}(\mathbf{x},t)$  is the velocity vector  $\mathbf{u}:=(u,v,w)$, $\mathbf{x}:=(x_1,x_2,x_3)$ is the vector of Cartesian coordinates, $p(\mathbf{x},t)$ is the pressure, $\partial_i:=\partial_{x_i}$ $(i\in\{1,2,3\})$, $\nabla^2$ is the Laplace operator $\nabla^2:=\partial_1^2+\partial_2^2+\partial_3^2$ and $\mathrm{Re}$ is the Reynolds number.

\begin{remark}\label{rem:symmetry}
The PDE system~\eqref{NSE} consist of the \emph{continuity equation}
or \emph{incompressibility condition}  $f_0=0$  that represents  conservation of mass and the {\em momentum equations}  $f_i=0$ $(i\in\{1,2,3\})$ that represent conservation of momentum. As a consequence of these fundamental conservation laws, the system is invariant under a permutation of the coordinates provided the corresponding permutation is applied to the components of $\mathbf{u}$.
\end{remark}

One can rewrite system~\eqref{NSE} equivalently as
\begin{equation}\label{NSE1}
\left\lbrace
\begin{array}{rl}
F_0 := &\ \partial_1u+\partial_2 v+ \partial_3 w =0\,, \hfill \\[4pt]
F_1 := &\ \partial_t u+\partial_1(u^2)+\partial_2(uv) +\partial_3(uw) +\partial_1p -\frac{1}{\mathrm{Re}}\nabla^2 u=0\,,\hfill \\[4pt]
F_2 := &\ \partial_t v+\partial_1(uv)+\partial_2(v^2) +\partial_3(vw) +\partial_2p -\frac{1}{\mathrm{Re}}\nabla^2 v=0\,,\hfill \\[4pt]
F_3 := &\ \partial_t w + \partial_1 (uw) + \partial_2(vw) +  \partial_3 (w^2) +\partial_3 p -\frac{1}{\mathrm{Re}}\nabla^2 w=0\,,\hfill \\
\end{array}
\right.
\end{equation}
where the nonlinear parts in the momentum equations are in {\em divergence form or conservative form}, and modulo the continuity equation the systems~\eqref{NSE} and~\eqref{NSE1} coincide.

For the Navier-Stokes equations in the form \eqref{NSE}, one can conveniently use vector notation, which has the advantage of brevity, and rewrite these equations as
\begin{equation}
\nabla\cdot \mathbf{u}=0\,, \quad
\partial_t \mathbf{u} +(\mathbf{u}\cdot \nabla)\, \mathbf{u}
+\nabla p -\frac{1}{\mathrm{Re}}\nabla^2\, \mathbf{u}=0\,, \label{VNSE}
\end{equation}
where $\nabla:=(\partial_1,\partial_2,\partial_3)$ is the nabla operator.

Let $\differentialranking$ be the ranking that compares first the monomials in the partial derivations $\partial_t$, $\partial_1$, $\partial_2$, $\partial_3$  (cf.\,Eq.\,\eqref{MonDelta})
with respect to the lexicographic ordering and then, in the case of equal differential monomials,
compares differential indeterminates (dependent variables) as
\begin{equation}\label{eq:NavierStokesRanking}
\partial_t \differentialranking \partial_1 \differentialranking \partial_2 \differentialranking \partial_3\quad \mathrm{and}\quad p \differentialranking u \differentialranking v \differentialranking w\,.
\end{equation}
The (non-admissible) prolongation $\partial_t F_0=\nabla\cdot \partial_t \mathbf{u}=0$ of the continuity equation and its reduction modulo the vector momentum equation yields the
\emph{pressure Poisson equation}
\begin{equation}\label{PPE}
\nabla^2p + \nabla\cdot (\mathbf{u}\cdot \nabla)\, \mathbf{u}=0\,,
\end{equation}
which is the \emph{integrability condition}~(cf.~\cite{Seiler'10},\,p.~50) to Eqs.~\eqref{VNSE} and to~\eqref{NSE1} as well. Clearly, the differential system~\eqref{VNSE}, \eqref{PPE} is passive and simple (cf.~Definition~\ref{de:differentialsimple}). We mention that the arbitrariness of analytic solutions to the incompressible Navier-Stokes equations can also be represented by the
differential counting polynomial $\infty^{\ell^3+\frac{11}{2} \ell^2 + \frac{17}{2} \ell + 4}$ \cite[Example~4.7]{LangeHegermann}.

Eq.~\eqref{PPE} can be expressed in terms of the continuity and momentum equations as
\begin{equation}\label{IntCond}
F_4 := \partial_1 F_1+\partial_2 F_2+\partial_3 F_3+\frac{1}{\mathrm{Re}} \left( \partial_1^2 F_0 + \partial_2^2 F_0 + \partial_3^2 F_0 \right) - \partial_t F_0\,.
\end{equation}
It is significant that both Eqs.~\eqref{PPE} and \eqref{IntCond} preserve permutational symmetry in line with Remark~\ref{rem:symmetry}.

Now we consider the following class of FDA to~\eqref{NSE} defined on the four-di\-men\-sio\-nal grid~\eqref{grid}
\begin{equation}\label{DNSE}
\mathbf{D}\cdot \mathbf{\tilde{u}}=0\,,\quad  D_t\mathbf{\tilde{u}}+(\mathbf{\tilde{u}}\cdot \mathbf{D})\, \mathbf{\tilde{u}}
+\mathbf{D}\,\tilde{p} -\frac{1}{\mathrm{Re}}\,\tilde{\Delta}_3\, \mathbf{\tilde{u}}=0\,,
\end{equation}
where $D_t$ approximates $\partial_t$, $\mathbf{D}=(D_1,D_2,D_3)$ approximates $\nabla$ and ${\tilde{\Delta}}_3$ approximates $\nabla^2$. It is clear that system~\eqref{DNSE} is w-consistent with
Eqs.~\eqref{VNSE}.

As an example of such finite difference approximations on the grid~\eqref{grid}, one can consider the following one
\begin{equation}\label{discretizations}
D_t=\frac{\automorphism_t - 1}{\tau}\,,\quad  D_{i}=\frac{\automorphism_{i}-\automorphism_{i}^{-1}}{2h_i}\,,\quad  \tilde{\Delta}_3=\sum_{j=1}^3\frac{\automorphism_i-2+\automorphism_i^{-1}}{h_j^2}\,,
\end{equation}
where $i\in \{1,2,3\}$ and $\tau,h_i\in \R_{>0}$.

If one considers a difference analogue of Eq.~\eqref{eq:NavierStokesRanking} satisfying
\begin{equation}\label{eq:NavierStokesDifferenceRanking}
\automorphism_t \succ \automorphism_1 \succ \automorphism_2 \succ \automorphism_3 \quad \mathrm{and}\quad \tilde{p} \succ \tilde{u} \succ \tilde{v} \succ \tilde{w}\,,
\end{equation}
then completion of 
\eqref{DNSE} to a passive form by Algorithm~\ref{StandardBasis} is equivalent to enlargement of this system with the difference integrability condition
\begin{equation}\label{DPPE}
(\mathbf{D}\cdot \mathbf{D})\,\tilde{p}+\mathbf{D}\cdot (\mathbf{\tilde{u}}\cdot \mathbf{D})\,\mathbf{\tilde{u}} - \frac{1}{\mathrm{Re}}\mathbf{D}\cdot \tilde{\Delta}_3\,\mathbf{\tilde{u}}=0\,.
\end{equation}
Eq.~\eqref{DPPE} approximates Eq.~\eqref{PPE} and is obtained, in full analogy with the differential case, by the prolongation $\mathbf{D}\cdot D_t\tilde{\mathbf{u}}=0$ of the discrete continuity equation in \eqref{DNSE} and its reduction modulo the discrete vector momentum equation.

\begin{remark}
\label{rem:divergence-free}
Because of equality
$
\mathbf{D}\cdot \tilde{\Delta}\,\mathbf{\tilde{u}}=\tilde{\Delta} \, \mathbf{D}\cdot \mathbf{\tilde{u}}\,,
$
the last term in Eq.~\eqref{DPPE} can be omitted if one considers a solution satisfying the discrete continuity equation $\mathbf{D}\cdot \tilde{\mathbf{u}}=0$. In such a case instead of Eq.~\eqref{DPPE} one can
use
\begin{equation}\label{DPPERED}
(\mathbf{D}\cdot \mathbf{D})\,\tilde{p}+\mathbf{D}\cdot (\mathbf{\tilde{u}}\cdot \mathbf{D})\,\mathbf{\tilde{u}}=0\,.
\end{equation}
\end{remark}

The left-hand sides of Eqs.~\eqref{DNSE} and \eqref{DPPERED} form a difference \Gr basis of the ideal they generate in $\Q(\mathrm{Re,h})\{\mathbf{\tilde{u}},\tilde{p}\}$. Hence, by Theorem~\ref{GB:s-constistency}, FDA \eqref{DNSE}, \eqref{DPPERED} is s-consistent with Eqs.~\eqref{NSE}, \eqref{PPE}.

Similarly, we can approximate Eqs.~\eqref{NSE1} as follows:
\begin{equation}\label{DNSE1}
\left\lbrace
\begin{array}{rl}
\tilde{F}_0 := &\ D_1\,\tilde{u}+D_2\, \tilde{v}+ D_3\, \tilde{w} =0\,, \hfill \\[4pt]
\tilde{F}_1 := &\ D_t\, \tilde{u}+D_1\,\tilde{u}^2+D_2\,(\tilde{u}\,\tilde{v}) +D_3\,(\tilde{u}\,\tilde{w}) +D_1\,\tilde{p} -\frac{1}{\mathrm{Re}}\,\tilde{\Delta}_3\, \tilde{u}=0\,,\hfill \\[4pt]
\tilde{F}_2 := &\ D_t \,\tilde{v} + D_1\,(\tilde{u}\,\tilde{v})+ D_2\,(\tilde{v}^2) + D_3\,(\tilde{v}\,\tilde{w}) +D_2\,\tilde{p} -\frac{1}{\mathrm{Re}}\,\tilde{\Delta}_3\, \tilde{v}=0\,,\hfill \\[4pt]
\tilde{F}_3 := &\ D_t \tilde{w} + D_1 (\tilde{u}\tilde{w}) + D_2(\tilde{v}\tilde{w}) +  D_3 (\tilde{w}^2) +D_3 \tilde{p} -\frac{1}{\mathrm{Re}}\,\tilde{\Delta}_3\, \tilde{w}=0\,,\hfill \\
\end{array}
\right.
\end{equation}
and complete this system to a passive form by performing the head reduction of $\mathbf{D}\cdot D_t\tilde{\mathbf{u}}=0$ modulo the momentum equations in~\eqref{DNSE1}.
As a result, we obtain another discretization of Eq.~\eqref{IntCond}:
\begin{equation}\label{DIntCond}
\begin{array}{rl}
\tilde{F}_4 := &\big(\mathbf{D}\cdot \mathbf{D}\big)\,\tilde{p}+\big(D_1\cdot D_1\big)\,\tilde{u}^2+ \big(D_2\cdot D_2\big)\,\tilde{v}^2 + \big(D_3\cdot D_3\big)\,\tilde{w}^2 \\[4pt]
 &+2\,\big[\big(D_1\cdot D_2\big) (\tilde{u}\,\tilde{v}) + \big(D_1\cdot D_3\big) (\tilde{u}\,\tilde{w})+\big(D_2\cdot D_3\big) (\tilde{v}\,\tilde{w})\big] \\[4pt]
 & -\frac{1}{\mathrm{Re}}\,\big[ \big(D_1\cdot \tilde{\Delta}_3\big)\,\tilde{u}+\big(D_2\cdot \tilde{\Delta}_3\big)\,\tilde{v} + \big(D_3\cdot \tilde{\Delta}_3\big)\,\tilde{w}\big]=0\,.
\end{array}
\end{equation}
We emphasize that the right-hand sides of Eqs.\,\eqref{DNSE1} and \eqref{DIntCond} are tale redundant modulo $\tilde{F}_0$. However, we prefer to use this redundant form since it inherits the permutational symmetry of~Eqs.\,\eqref{NSE1} and their divergence (conservative) form.
\begin{proposition}\label{prop:s-consistency}
FDA $\tilde{F}:=\{\tilde{F}_0,\tilde{F}_1,\tilde{F}_2,\tilde{F}_3,\tilde{F}_4\}$ to~$F:=\{F_0,F_1,F_2,F_3,F_4\}$ is s-consistent.
\end{proposition}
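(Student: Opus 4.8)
The plan is to deduce the claim from Theorem~\ref{GB:s-constistency}. Since here $p = 5 > 1$, it suffices to exhibit a difference \Gr basis $\tilde G$ of the difference ideal $[\tilde F] = [\tilde F_0,\ldots,\tilde F_4]$ and to verify $(\forall\,\tilde g \in \tilde G)\,(\exists\,G \subset \llbracket F\rrbracket)\,[\,\tilde g \rhd G\,]$. I would take $\tilde G = \{\tilde F_0,\ldots,\tilde F_4\}$ itself --- the counterpart of what was done for the convective-form FDA \eqref{DNSE},\eqref{DPPERED} --- and then inspect the five generators one at a time.

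For the \Gr basis property I would fix an admissible difference monomial ordering extending the ranking~\eqref{eq:NavierStokesDifferenceRanking}, of the same kind as the one underlying the completion of \eqref{DNSE} (its admissibility being of the type established in Appendix~\ref{sec:GBtermorder}). With respect to such an ordering the divergence-form nonlinearities of $\tilde F_1,\tilde F_2,\tilde F_3$ and the nonlinear and Laplacian parts of $\tilde F_4$ rank below the terms $D_t\tilde u$, $D_t\tilde v$, $D_t\tilde w$ of the momentum equations and the term $(\mathbf{D}\cdot\mathbf{D})\tilde p$ of $\tilde F_4$, so the leaders of $\tilde F_0,\tilde F_1,\tilde F_2,\tilde F_3,\tilde F_4$ are (shifts of) $\tilde u,\tilde u,\tilde v,\tilde w,\tilde p$ respectively --- exactly the leading-monomial pattern of \eqref{DNSE},\eqref{DPPERED}, already declared a \Gr basis. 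Each leader being a single shift of a single difference indeterminate to the first power, no generator has a non-trivial self-overlap $S$-polynomial; every $S$-polynomial of a pair whose leaders involve distinct difference indeterminates reduces to zero by the difference analogue of Buchberger's coprimality criterion, which disposes of all pairs except $(\tilde F_0,\tilde F_1)$; and $S(\tilde F_0,\tilde F_1)$ reduces to zero modulo $\{\tilde F_0,\ldots,\tilde F_4\}$ because its reduction traces out the discrete integrability condition $\mathbf{D}\cdot D_t\tilde{\mathbf{u}} = D_t\tilde F_0$, whose head reduction modulo the momentum equations is, by construction, precisely $\tilde F_4$. By Proposition~\ref{B-criterion}, $\{\tilde F_0,\ldots,\tilde F_4\}$ is then a \Gr basis of $[\tilde F]$.

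It remains to compute the continuous limits (Definition~\ref{Implication}; the routine {\sc ContinuousLimit}, Remark~\ref{ContLim}). Replacing $D_j$ by $\partial_j + \mathcal{O}(\mathbf{h})$, $D_t$ by $\partial_t + \mathcal{O}(\mathbf{h})$ and $\tilde{\Delta}_3$ by $\nabla^2 + \mathcal{O}(\mathbf{h})$ gives $\tilde F_i \rhd F_i$ for $i = 0,1,2,3$; and, using the identity $\tilde F_4 = D_1\tilde F_1 + D_2\tilde F_2 + D_3\tilde F_3 - D_t\tilde F_0$, the continuous limit of $\tilde F_4$ is $F_4 - \frac{1}{\mathrm{Re}}\nabla^2 F_0$, so $\tilde F_4 \rhd \{F_4,\nabla^2 F_0\}$. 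Since $F_0,\ldots,F_4 \in F \subseteq \llbracket F\rrbracket$ (equivalently, the differential Thomas decomposition of $F$ is the single simple system \eqref{VNSE},\eqref{PPE}, so the required memberships are immediate from~\eqref{DiffIdealMembership}), condition~\eqref{cons-gb} holds, and Theorem~\ref{GB:s-constistency} yields the $s$-consistency of $\tilde F$.

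The main obstacle is the \Gr basis claim. In the non-Noetherian ring $\differencering$ there are a priori infinitely many $S$-polynomials, so the proof genuinely rests on the structural facts above --- linear leaders ruling out self-overlaps, the coprimality criterion clearing all but one pair, and the reduction of the remaining $S$-polynomial to zero via the pressure-Poisson relation encoded in $\tilde F_4$ --- together with the admissibility of the (non-degree-compatible) monomial ordering that keeps $D_t\tilde u, D_t\tilde v, D_t\tilde w$ as the leaders of the momentum equations, which is exactly the subtle point treated in Appendix~\ref{sec:GBtermorder}.
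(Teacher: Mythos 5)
Your proposal is correct and follows essentially the same route as the paper: the paper's own proof simply observes that, by inspection of the leading terms under the ranking \eqref{eq:NavierStokesDifferenceRanking}, $\tilde F$ is already a head reduced difference \Gr basis of $[\tilde F]=\llbracket\tilde F\rrbracket$ and then invokes Theorem~\ref{GB:s-constistency}. Your write-up merely fills in the details the paper leaves to the reader (the leading-monomial pattern, the disposal of $S$-polynomials via coprimality, the reduction of $S(\tilde F_0,\tilde F_1)$ through $\tilde F_4 = D_1\tilde F_1+D_2\tilde F_2+D_3\tilde F_3-D_t\tilde F_0$, and the continuous limits), all of which is consistent with the paper's argument.
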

\begin{proof}
By inspection of the leading terms, it is easy to see that $\tilde{F}$ is a head reduced difference \Gr basis of $[\tilde{F}]=\llbracket \tilde{F} \rrbracket$, and  Theorem~\ref{GB:s-constistency} implies the s-consistency. \qed
\end{proof}

To compare different FDA to Eqs.~\eqref{NSE}, we compare their numerical behavior with the exact non-stationary two-dimensional solution \cite{KM'85} originally derived by Taylor~\cite{Taylor'23} in his  study of decaying vortex flow. This solution is widely used as a benchmark for numerical solving of  Navier-Stokes equations (see, for example,~\cite{NPP:2015}) and we have already used it in~\cite{ABGLS'13} and \cite{ABGLS'17}.
\begin{eqnarray}\label{exactsol}
\left\lbrace
\begin{array}{l}
u=-e^{-\frac{2t}{\mathrm{Re}}}\cos(x)\sin(y)\,, \\[4pt]
 v=e^{-\frac{2t}{\mathrm{Re}}}\sin(x)\cos(y)\,, \\[4pt]
 p=-\frac{1}{4}e^{-\frac{4t}{\mathrm{Re}}}(\cos(2x)+\cos(2y))\,.
\end{array}
\right.
\end{eqnarray}
We consider here four difference approximations to the two-dimensional form of Eqs.\,\eqref{VNSE} and \eqref{PPE} with the grid functions 
\begin{equation}\label{2D:GridFunction}
  \tilde{\mathbf{u}}^{n}_{j,k} := \tilde{\mathbf{u}}(jh,kh,n\tau)\,,\quad  \tilde{p}^{n}_{j,k} := \tilde{p}(jh,kh,n\tau)\,,\quad (j,k,n)\in \Z^3\,,
\end{equation}
and the following approximations of partial derivatives
\begin{equation}\label{2Ddiscretizations}
D_t=\frac{\automorphism_t - 1}{\tau}\,,\quad  D_{i}=\frac{\automorphism_{i}-\automorphism_{i}^{-1}}{2h}\,,\quad  \tilde{\Delta}_2=\frac{\automorphism_1+\automorphism_2 -4 + \automorphism_1^{-1}+\automorphism_2^{-1}}{h^2}\,,
\end{equation}
where $i\in \{1,2\}$ and $\tau,h\in \R_{>0}$.

\begin{description}
\item{FDA1~\cite{ABGLS'17}}
\begin{equation}\label{FDA1}
\left\lbrace
\begin{array}{rl}
\tilde{F}^{(1)}_0 := &\ D_1\,\tilde{u}+D_2\, \tilde{v}=0\,, \hfill \\[4pt]
\tilde{F}^{(1)}_1 := &\ D_t\, \tilde{u}+D_1\,\tilde{u}^2+D_2\,(\tilde{u}\,\tilde{v}) +D_1\,\tilde{p} -\frac{1}{\mathrm{Re}}\,\tilde{\Delta}_2\, \tilde{u}=0\,,\hfill \\[4pt]
\tilde{F}^{(1)}_2 := &\ D_t \,\tilde{v} + D_1\,(\tilde{u}\,\tilde{v})+ D_2\,(\tilde{v}^2) + D_2\,\tilde{p} -\frac{1}{\mathrm{Re}}\,\tilde{\Delta}_2\, \tilde{v}=0\,,\hfill \\[4pt]
\tilde{F}^{(1)}_3 := &\big(D_1\cdot D_1+D_2\cdot D_2\big)\,\tilde{p}+\big(D_1\cdot D_1\big)\,\tilde{u}^2+ \big(D_2\cdot D_2\big)\,\tilde{v}^2 \\[4pt]
 &+2\,\big(D_1\cdot D_2\big) (\tilde{u}\,\tilde{v}) -\frac{1}{\mathrm{Re}}\,\big[ \big(D_1\cdot \tilde{\Delta}_2\big)\,\tilde{u}+\big(D_2\cdot \tilde{\Delta}_2\big)\,\tilde{v} \big]=0\,.
\end{array}
\right.
\end{equation}
\item{FDA2~\cite{ABGLS'17}}
\begin{equation}\label{FDA2}
\left\lbrace
\begin{array}{rl}
\tilde{F}^{(2)}_0 := &\ D_1\,\tilde{u}+D_2\,\tilde{v}=0\,, \hfill \\[4pt]
\tilde{F}^{(2)}_1 := &\ D_t\, \tilde{u}+\tilde{u}\,D_1\,\tilde{u}+\tilde{v}\,D_2\,\tilde{u} + D_1\,\tilde{p} -\frac{1}{\mathrm{Re}}\,\tilde{\Delta}_2\,\tilde{u}=0\,,\hfill \\[4pt]
\tilde{F}^{(2)}_2 := &\ D_t\tilde{v}+\tilde{u}\,D_1\,\tilde{v}+\tilde{v}\,D_2\,\tilde{v} +D_2\,\tilde{p} -\frac{1}{\mathrm{Re}}\tilde{\Delta}_2\,\tilde{v}=0\,,\hfill \\[4pt]
\tilde{F}^{(2)}_3 := &\ \tilde{\Delta}_2\, \tilde{p} + (D_1\,\tilde{u})^2 + 2\, (D_1\,\tilde{v})(D_2\,\tilde{u}) + (D_2\,\tilde{v})^2=0\,.\hfill \\
\end{array}
\right.
\end{equation}
\item{FDA3}
\begin{equation}\label{FDA3}
\left\lbrace
\begin{array}{rl}
\tilde{F}^{(3)}_0 := &\ D_1\,\tilde{u}+D_2\, \tilde{v}=0\,, \hfill \\[4pt]
\tilde{F}^{(3)}_1 := &\ D_t\, \tilde{u}+D_1\,\tilde{u}^2+D_2\,(\tilde{u}\,\tilde{v}) +D_1\,\tilde{p} -\frac{1}{\mathrm{Re}}\,\tilde{\Delta}_2\, \tilde{u}=0\,,\hfill \\[4pt]
\tilde{F}^{(3)}_2 := &\ D_t \,\tilde{v} + D_1\,(\tilde{u}\,\tilde{v})+ D_2\,(\tilde{v}^2) + D_2\,\tilde{p} -\frac{1}{\mathrm{Re}}\,\tilde{\Delta}_2\, \tilde{v}=0\,,\hfill \\[4pt]
\tilde{F}^{(3)}_3 := &\big(D_1\cdot D_1+D_2\cdot D_2\big)\,\tilde{p}+\big(D_1\cdot D_1\big)\,\tilde{u}^2+ \big(D_2\cdot D_2\big)\,\tilde{v}^2 \\[4pt]
 &+2\,\big(D_1\cdot D_2\big) (\tilde{u}\,\tilde{v})=0\,.
\end{array}
\right.
\end{equation}
\item{FDA4}
\begin{equation}\label{FDA4}
\left\lbrace
\begin{array}{rl}
\tilde{F}^{(4)}_0 := &\ D_1\,\tilde{u}+D_2\,\tilde{v}=0\,, \hfill \\[4pt]
\tilde{F}^{(4)}_1 := &\ D_t\, \tilde{u}+\tilde{u}\,D_1\,\tilde{u}+\tilde{v}\,D_2\,\tilde{u} + D_1\,\tilde{p} -\frac{1}{\mathrm{Re}}\,\tilde{\Delta}_2\,\tilde{u}=0\,,\hfill \\[4pt]
\tilde{F}^{(4)}_2 := &\ D_t\tilde{v}+\tilde{u}\,D_1\,\tilde{v}+\tilde{v}\,D_2\,\tilde{v} +D_2\,\tilde{p} -\frac{1}{\mathrm{Re}}\tilde{\Delta}_2\,\tilde{v}=0\,,\hfill \\[4pt]
\tilde{F}^{(4)}_3 := &\big(D_1\cdot D_1+D_2\cdot D_2\big)\,\tilde{p} + D_1\,\big(\tilde{u}\,D_1\,\tilde{u}\big)+ D_1\big(\tilde{v}\,D_2\,\tilde{u}\big)\hfill \\[4pt]
           & + D_2\,\big(\tilde{v}\,D_2\,\tilde{v}\big) -\frac{1}{\mathrm{Re}}\,\big[ \big(D_1\cdot \tilde{\Delta}_2\big)\,\tilde{u}+\big(D_2\cdot \tilde{\Delta}_2\big)\,\tilde{v} \big] =0\,.
\end{array}
\right.
\end{equation}
\end{description}

All these FDA are explicit as difference schemes, w-consistent and they inherit permutational symmetry of the Navier-Stokes equations. The first approximation, FDA1, given by Eqs.~\eqref{FDA1}, was constructed in the paper~\cite{ABGLS'17} where its s-con\-sis\-ten\-cy was established. The difference approximation FDA4, given by Eqs.~\eqref{FDA4} with the last term omitted, $\frac{1}{\mathrm{Re}}\,\mathbf{D}\cdot \tilde{\Delta}_2 \,\tilde{\mathbf{u}}$, which is reduced to zero modulo $\tilde{F}^{(4)}_0$ (cf.\ Remark~\ref{rem:divergence-free}), was derived in~\cite{GB'09} by the method suggested in~\cite{GBM'06}.

\begin{remark}\label{rem:NeedsContEq}
The equation $\tilde{F}^{(2)}_3=0$ in FDA2 provides a compact finite difference discretization of the Poisson pressure equation \eqref{PPE}. In this case
\begin{equation}\label{2D:ContEq}
(D_1\,\tilde{u})^2 + 2\, (D_1\,\tilde{v})(D_2\,\tilde{u}) + (D_2\,\tilde{v})^2 \xrightarrow[h\rightarrow 0]{} u_x^2+2v_xu_y+v_y^2
\end{equation}
whereas
\begin{equation}\label{2D:ContEq1}
     \nabla\cdot (\mathbf{u}\cdot \nabla)\, \mathbf{u}= u_x^2+2v_xu_y+v_y^2+uu_{xx}+vv_{yy}+vu_{xy}+uv_{xy}\,,
\end{equation}
and the right-hand sides of Eqs.~\eqref{2D:ContEq} and \eqref{2D:ContEq1} are equal modulo the continuity equation $u_x+v_y$ = 0.
\end{remark}

However, FDA2 is s-inconsistent, since $\tilde{F}^{(2)}_3 \not\in \llbracket \tilde{\cI}\rrbracket$, where $\llbracket \tilde{\cI}\rrbracket\subset \tilde{\cR}$ the ideal generated by $\{\tilde{F}^{(2)}_0,\tilde{F}^{(2)}_1,\tilde{F}^{(2)}_2\}$ by virtue of inequality
\[
\mathbf{D}\cdot \mathbf{D}=\frac{\automorphism_1^2+\automorphism_2^2-4+\automorphism_1^{-2}+\automorphism_2^{-2}}{4h^2}\neq \tilde{\Delta}_2\,,
\]
and there are consequences of Eqs.~\eqref{FDA2} implying the differential equations (cf.~Definition~\ref{Implication}) which are not consequences of \eqref{VNSE}. Below we explicitly demonstrate this for the linearized version of Eqs.~\eqref{NSE} called Stokes equations.

The approximation FDA3 given by Eqs.~\eqref{FDA3} differs from FDA1 in the structure of discrete Poisson pressure equation. In contrast to equation $\tilde{F}^{(3)}_3=0$ in Eqs.~\eqref{FDA3}, the equation $\tilde{F}^{(1)}_3=0$ in \eqref{FDA1} contains extra part $\frac{1}{\mathrm{Re}}\,\mathbf{D}\cdot \tilde{\Delta}_2 \,\tilde{\mathbf{u}}$ and can be omitted if $\tilde{u}$ satisfies the discrete continuity equation (see  Remark~\ref{rem:divergence-free}).

In the difference approximation FDA4, Eqs.~\eqref{FDA4}, the discrete pressure Poisson equation $\tilde{F}^{(4)}_3=0$, as opposed to that in FDA2, provides s-consistency of FDA4.

We compare these four schemes by using the following absolute/relative error formula
\begin{equation}
e_{g}^n =\max_{j,k} \frac{|g_{j,k}^n-g(x_j,y_k,t_n)|}{1+|g(x_j,y_k,t_n)|}\,,
\label{error}
\end{equation}
where $g\in\{u,v,p\}$ and $g(x,y,t)$ belongs to the exact solution~\eqref{exactsol}.

The governing differential system~\eqref{VNSE}, \eqref{PPE} is mixed elliptic-parabolic: the momentum equations are parabolic and the pressure Poisson equation is elliptic. It should be particularly emphasized that in our construction of a numerical solution for the initial-value problem with initial data taken from Eqs.\,\eqref{exactsol} at $t=0$, we use the discrete  momentum equations to determine velocities and the discrete pressure Poisson equation to determine pressure. In other words, we use the classical pressure-Poisson formulation of the Navier-Stokes equations (cf.~\cite{Rempfer'06}, Sect.~3.2) to solve the initial-value problem numerically. However, as this takes place, to construct numerical solution of the above given difference approximations we do not exploit the discrete di\-ver\-gence-free constraint $\mathbf{D}\cdot \tilde{\mathbf{u}}=0$ and use (cf.~\cite{ABGLS'13,ABGLS'17}) and use it only for verification of the obtained results.

We compute numerical solutions in the domain $[0,2\pi]\times[0,2\pi]\times[0,6]$ with the Reynolds number $\mathrm{Re}=100$. Figures~\ref{fig1}--\ref{fig3} contain the computed error for three different choices of $h$ (error in $u$ and $v$ coincides). We let $\tau=0.05 \times h$.

The results of our computational experiments shown in Figures~\ref{fig1} and \ref{fig2}. Except FDA3, which is unstable (Fig.~\ref{fig2}, top) by the lack of mass conservation (violation of the incompressibility condition), the other approximations clearly demonstrate the second order of convergence with respect to $h$, during which FDA1 far exceeds the others in accuracy. In addition, based on the numerical velocities obtained for the last FDA, the continuity equation is accurate to $10^{-10}$ (for $h=0.025$) what is incomparably better (Fig.~\ref{fig3}) than the obtained accuracies of other schemes for this equation. The error in the numerical continuity equation for the momentum grid functions in~\eqref{2D:GridFunction} was computed as the matrix Frobenius norm (cf.~\cite{GolubVanLoan:2013}, p.~71) 
\begin{equation}\label{error:divergence}
 ||\mathbf{D}\cdot \mathbf{\tilde{u}}||_{\mathcal{F}}:=\Big(\sum_{j,k} |D_1\tilde{u}^n_{j,k}+D_2\tilde{v}^n_{j,k}|^2\Big)^{\frac{1}{2}}\,.
\end{equation}

The superiority of FDA1 over the others FDAs is due to incorporation of  s-consistency, conservativity (divergence form) of the nonlinear terms in momentum equations and presence of the last term in the pressure Poisson equation~\eqref{DPPE}. In its turn, FDA2 provides lower accuracy in comparison with FDA4 because of its s-inconsistency. On the other hand, FDA2  the more stable than FDA3, since the former unlike the latter was constructed with application of the incompressibility condition (see Remark~\ref{rem:NeedsContEq}). If one correlates FDA1 with FDA4, then the last one does not have a conservation law (divergence) form and by this reason its accuracy is not so good.

Nearly all known finite difference approaches to solving incompressible Navier-Stokes equations in terms of ('primitive') variables $\{\tilde{\mathbf{u}},\tilde{p}\}$ started from the \emph{(fractional step) projection method} based on presentation of the vector momentum equation in~\eqref{VNSE} as a Helmholtz-Hodge decomposition~\cite{Brown:2001,GMS:2006,Rempfer'06} and on the use of the continuity equation (incompressibility condition) for correction of the velocity vector $\tilde{\mathbf{u}}$ on every time step. Our numerical experiments with the scheme FDA1 show, contrastingly, that it is sufficient to attain the fulfilment of the continuity equation by initial and/or boundary conditions.
\begin{figure}
\begin{center}
\includegraphics[width=.9\textwidth]{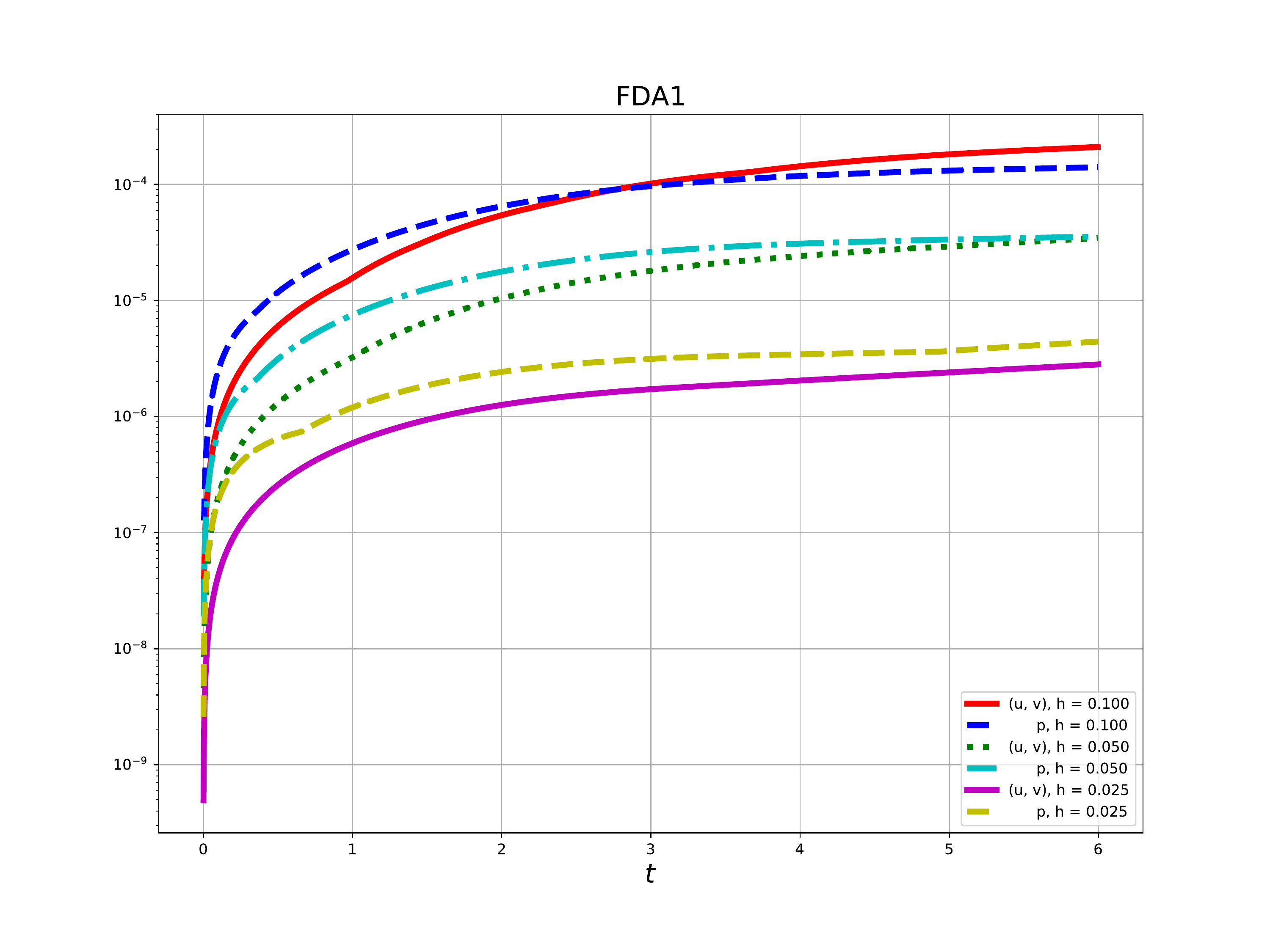} \\[-5mm]
\includegraphics[width=.9\textwidth]{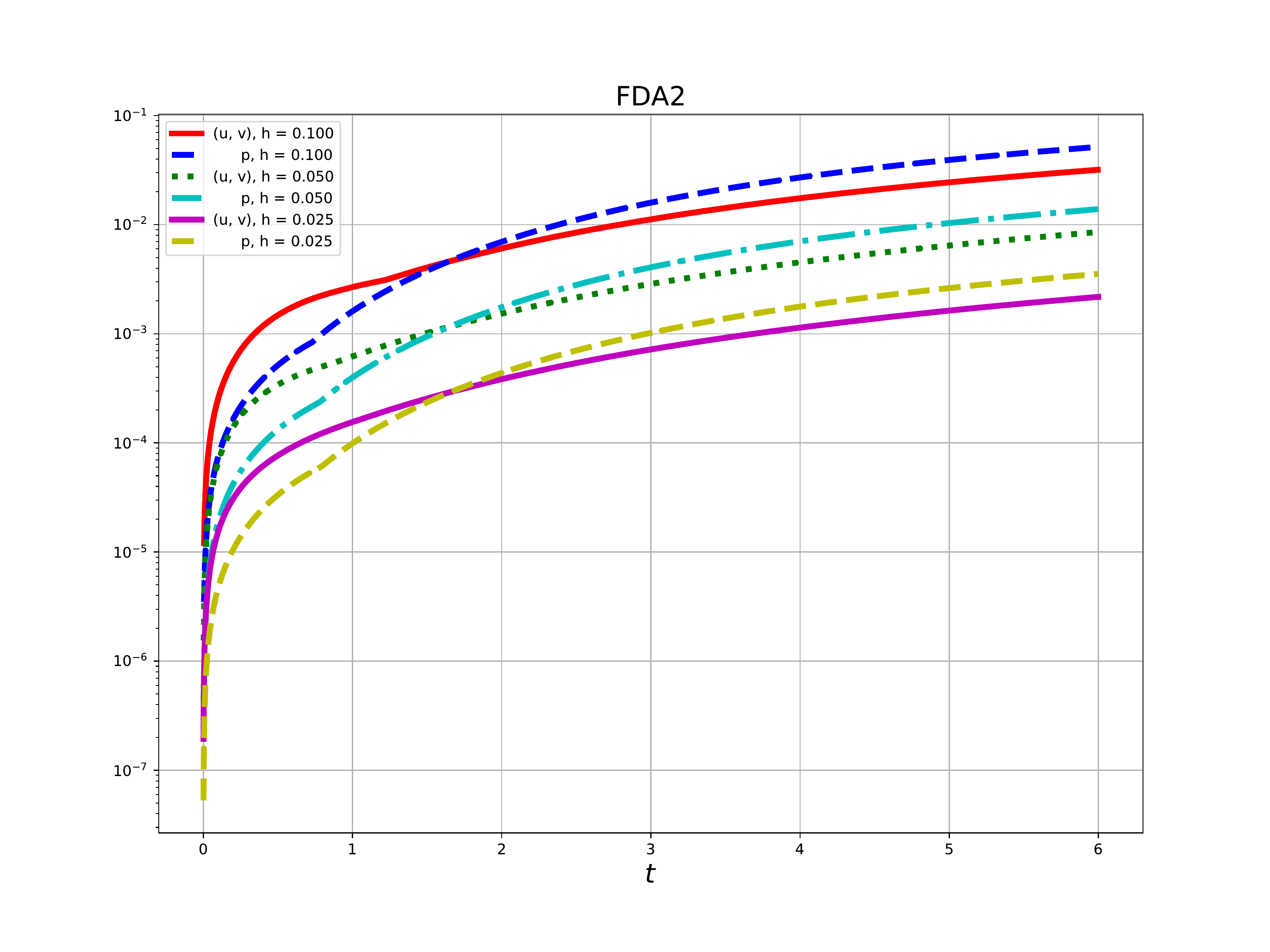} \\[-5mm]
\end{center}
\vspace*{-8mm}
\caption{Taylor decaying problem: error with different $h$ in the computed solution with FDA1 scheme and second-order standard discretizations FDA2\label{fig1}}
\end{figure}

\begin{figure}
\begin{center}
\includegraphics[width=.9\textwidth]{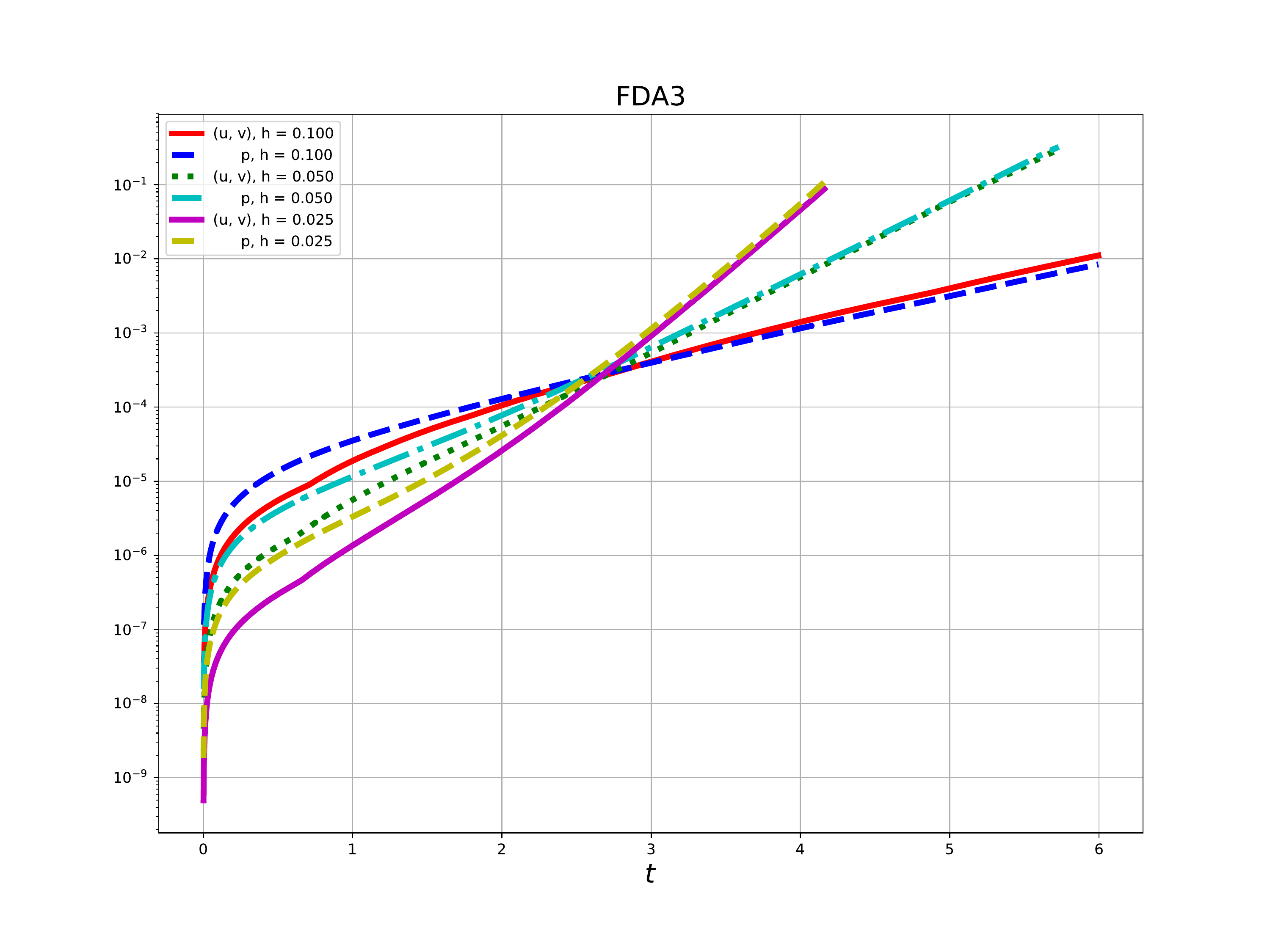} \\[-5mm]
\includegraphics[width=.9\textwidth]{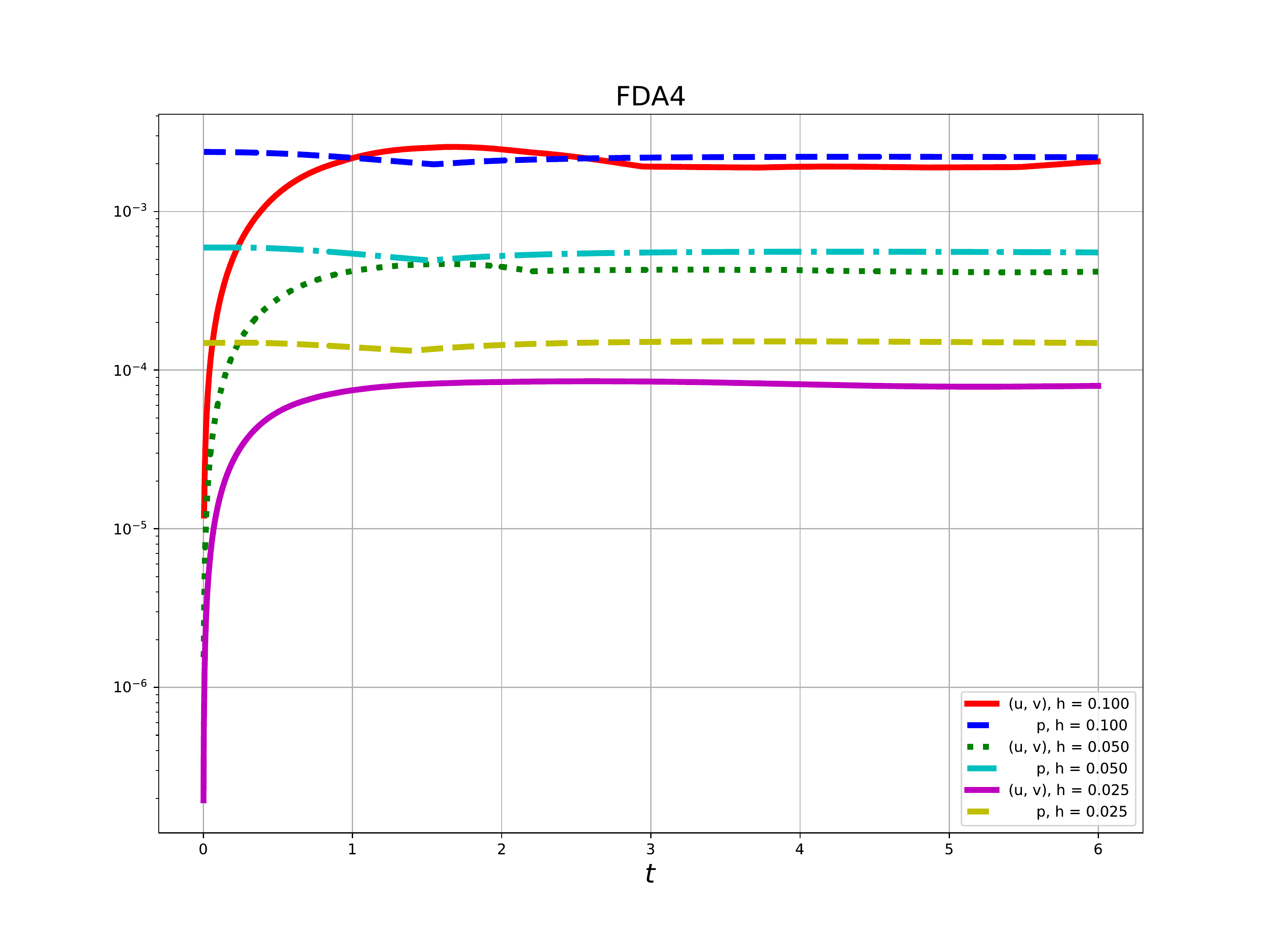} \\[-5mm]
\end{center}
\vspace*{-8mm}
\caption{Taylor decaying problem: error with different $h$ in the computed solution with FDA3 and FDA4 scheme\label{fig2}}
\end{figure}

\begin{figure}
\begin{center}
\includegraphics[width=.9\textwidth]{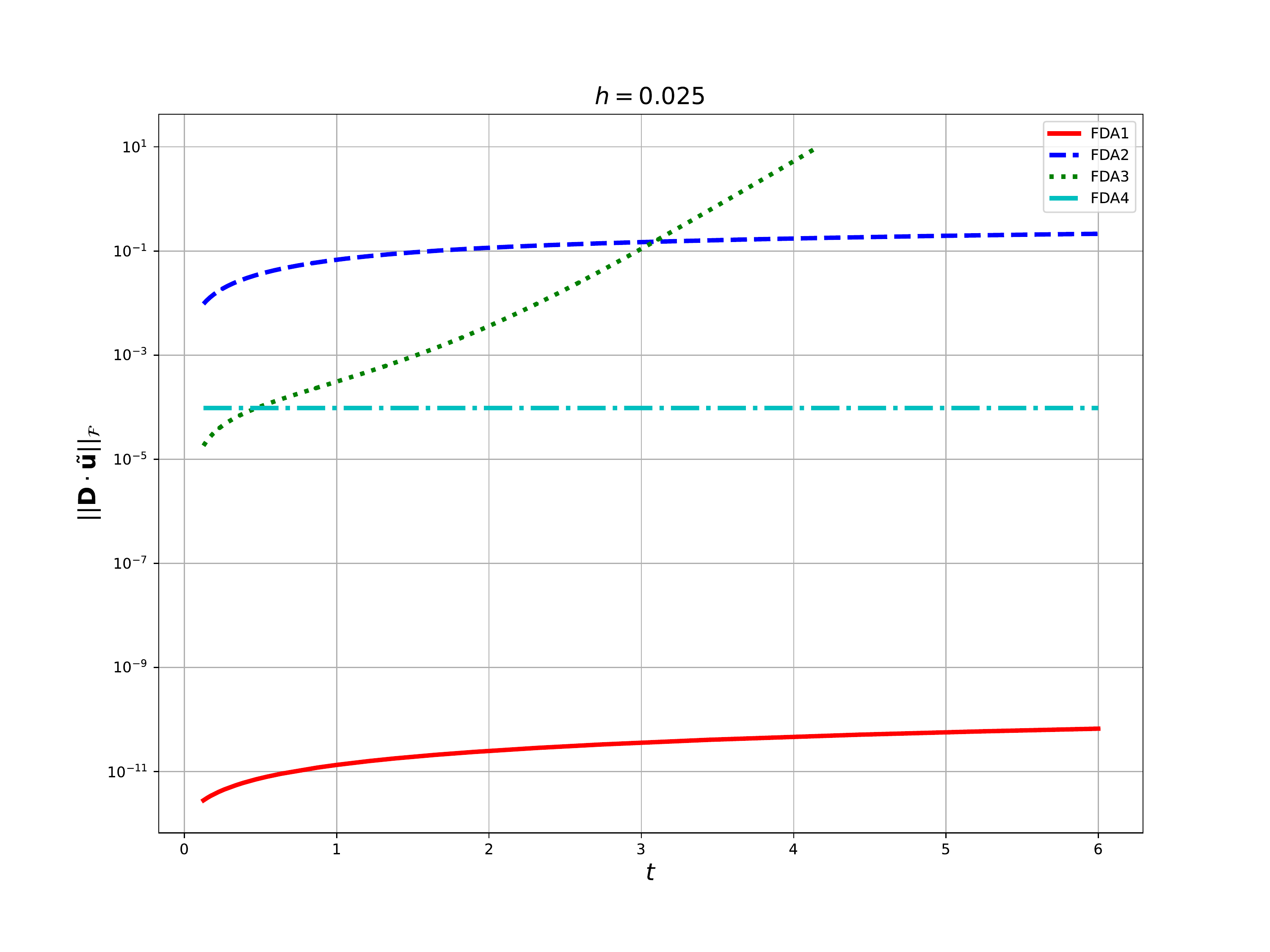} \\[-5mm]
\end{center}
\vspace*{-8mm}
\caption{Taylor decaying problem: computed value of error~\eqref{error:divergence}  for all four schemes\label{fig3}}
\end{figure}


To illustrate the fact that s-inconsistency has an adverse effect on the solution space of the discretized equations, we consider the incompressible Stokes equations flow given in vector notation by
\begin{equation}
\partial_t \mathbf{u} + \nabla  p -\frac{1}{\mathrm{Re}}\Delta\, \mathbf{u} \, \, = \, \, 0\,,\quad \nabla\cdot \mathbf{u} \, \, = \, \, 0\,. \label{SE}
\end{equation}
The linear PDE system \eqref{SE} approximates \eqref{VNSE} when $\mathrm{Re}$ is small (cf.\,\cite{Milne-Tompson}, Ch.\,22$\cdot 11$). Complemented by the gravity force $\rho \,\mathbf{g}$ ($\rho$ is the liquid density, $\mathbf{g}$ is the acceleration due to gravity) in the right-hand side of the first equation, Eqs.~\eqref{SE} have numerous applications in the description of fluid displacement processes that take place in porous media (see~\cite{Erdmann'06} and the references therein) and that are related to
both chemical and physical phenomena.

For Eqs.~\eqref{SE} the pressure Poisson equation~\eqref{PPE} becomes the
\emph{pressure Laplace equation}
\begin{equation}\label{PLE}
\Delta\, p \, \, = \, \, 0\,,
\end{equation}
and let the PDE system~\eqref{SE}--\eqref{PLE} be discretized as follows
\begin{equation}\label{DSE1}
D_t\mathbf{\tilde{u}}
+\mathbf{D}\,\tilde{p} -\frac{1}{\mathrm{Re}}\,{\tilde{\Delta}}_3\, \mathbf{\tilde{u}}   =   0\,,\ \ 
\mathbf{D}\cdot \mathbf{\tilde{u}}   =   0\,,\ \ {\tilde{\Delta}}_3\,\tilde{p}  =   0\,,
\end{equation}
where the difference approximations of partial derivatives given by Eqs.\,\eqref{discretizations}.

The passive form of Eqs.~\eqref{DSE1} reads
\begin{equation*}\label{PassiveDSE1}
D_t\mathbf{\tilde{u}}
+\mathbf{D}\,\tilde{p} -\frac{1}{\mathrm{Re}}\,{\tilde{\Delta}}_3 \, \mathbf{\tilde{u}}=0\,,\ \ \mathbf{D}\cdot \mathbf{\tilde{u}}=0\,,\ \ (\mathbf{D}\cdot \mathbf{D}) \, \tilde{p} = 0\,,\ \ {\tilde{\Delta}}_3\,\tilde{p}=0\,,\ \ \tilde{q}=0\,.
\end{equation*}
Here the difference polynomial $\tilde{q}$ is the reduced $S$-polynomial of the two preceding equations and, in accordance with Definition~\ref{Implication}, it implies
\begin{equation}\label{ImpliedEquations}0
   \tilde{q}:=S\left((\mathbf{D}\cdot \mathbf{D}) \, \tilde{p}, \, {\tilde{\Delta}}_3\,\tilde{p}\right)\ \rhd\    \partial_i^2\,\partial_j^2\, p\,,\quad i,j\in \{1,2,3\}\,.
\end{equation}

It is clear that FDA~\eqref{DSE1} is w-consistent with Eqs.~\eqref{SE} and \eqref{PLE}. However, none of  the differential equations occurring in \eqref{ImpliedEquations} is a consequence of~Eqs.~\eqref{SE}. This can be explicitly verified with the Maple packages {\sc LDA}~\cite{GR'12} and {\sc Janet}~\cite{Maple-Janet'03} by computing the related normal forms (cf.~\eqref{DiffIdealMembership}). Therefore, FDA~\eqref{DSE1} is s-in\-con\-sistent with the PDE system\,\eqref{SE}, \eqref{PLE}
in accordance with Theorem~\ref{GB:s-constistency}.

The difference polynomials in Eqs.~\eqref{DSE1} generate the perfect difference ideal whose element $\tilde{q}$, as indicated in~Eq.\,\eqref{ImpliedEquations}, in the continuous limit implies additional equations for the pressure. These equations together with the Laplace equation~\eqref{PLE}
restrict the pressure component to the exact solution
\begin{eqnarray*}
 &p:=&c_0+c_1x_1+c_2x_2+c_3x_3+c_4x_1^2+c_5x_1x_2+c_6x_2^2+c_7x_2x_3+c_8x_3^2\\
 &&   +\,c_9x_1x_3 + c_{10}x_1x_2x_3\,,\quad \mbox{where} \quad c_4+c_6+c_8=0\,,
\end{eqnarray*}
where $c_i$ $(0\leq i\leq 10)$ are arbitrary functions of $t$ satisfying the above constraint. It is clear that, in comparison with the s-con\-sis\-tent approximations, the s-in\-con\-sis\-tent one~\eqref{DSE1} significantly decreases the domain of solutions to the governing differential system that can be successfully constructed by numerical methods.

%
%
\section{Difference Thomas decomposition and s-consistency check}
\label{DifferenceThomasDecomposition}

Let $\tilde{S}$ be a \emph{system  of polynomial partial difference equations and inequations}
\begin{equation}\label{eq:polypartialdifference}
\tilde{f}_1 = 0\,, \ \ldots\,, \ \tilde{f}_s = 0\,, \ \tilde{f}_{s+1} \neq 0\,, \ \ldots\,, \ \tilde{f}_{s+t} \neq 0 \qquad (s, t \in \Z_{\ge 0})\,.
\end{equation}
Here $\tilde{f}_1$, \ldots, $\tilde{f}_{s+t}$ are elements
of the difference polynomial ring $\differencering$
in $\tilde{u}^{(1)}$, \ldots, $\tilde{u}^{(m)}$
with commuting automorphisms
$\setauto = \{ \automorphism_1, \ldots, \automorphism_n \}$.
A ranking $\differenceranking$ on $\differencering$ is fixed, so that
leaders, initials and discriminants of non-constant difference polynomials are defined
as in Definition~\ref{DifferentialRanking}.

\medskip

In this section we develop a difference analogue of the Thomas decomposition method for
differential systems (cf.\ Section~\ref{sec:differentialthomas}). The resulting
main Algorithm~\ref{alg:differencedecomp} is based, in particular, on Algorithm~\ref{alg:autoreducenonlin}
for auto-reduction of a difference system, which precedes the assignment of admissible automorphisms
by Janet division, 
and on Algorithm~\ref{alg:janetreducenonlin} computing Janet normal forms
modulo a Janet complete difference system, in order to check passivity.

Algorithm~\ref{alg:autoreducenonlin} 
performs reductions on
a finite system of difference equations (given by $\tilde{L}$), if possible, removing zero remainders
from the system. If a reduction occurs that results in a non-zero remainder, the original polynomial
is replaced by this remainder and the algorithm stops. Since in that case further splitting of the
system may be necessary to ensure non-va\-ni\-shing of initials, this situation is indicated by a flag
returned to the main Algorithm~\ref{alg:differencedecomp}.

The following notation is used in what follows.
For a difference system $\tilde{S}$ as above
let $\tilde{S}^{=}$ (resp.\ $\tilde{S}^{\neq}$) be the set
$\{ \tilde{f}_1, \ldots, \tilde{f}_s \}$ (resp.\ $\{ \tilde{f}_{s+1}, \ldots, \tilde{f}_{s+t} \}$).
Let $\tilde{E}$ be a difference ideal of $\differencering$
and let $\emptyset \neq \tilde{Q} \subseteq \differencering$ be multiplicatively closed and closed under $\automorphism_1$, \ldots, $\automorphism_n$. Then define
\[
\tilde{E} : \tilde{Q} \, \, := \, \, \{ \, \tilde{f} \in \differencering \mid \tilde{q} \, \tilde{f} \in \tilde{E} \mbox{ for some } \tilde{q} \in \tilde{Q} \, \}\,.
\]
Moreover, for $U \subseteq \Mon(\setauto) \, \tilde{\mathbf{u}}$ and $v \in \Mon(\setauto) \, \tilde{\mathbf{u}}$ we define
$U : v := \{ \, \theta \in \Mon(\setauto) \mid \theta \, v \in U \, \}$.

\begin{algorithm}
\label{AlgorithmAuto-reduce}
\DontPrintSemicolon
\KwInput{$\tilde{L} \subset \differencering \setminus \differencefield$ finite and a ranking $\differenceranking$ on $\differencering$ such that $\tilde{L} = \tilde{S}^{=}$ for some finite difference system $\tilde{S}$ which is quasi-simple as an algebraic system (in the finitely many indeterminates $\tilde{u}^{(\alpha)}_J$ occurring in it, totally ordered by $\differenceranking$)
}
\KwOutput{$a \in \{ \text{\bf true}, \text{\bf false} \}$ and $\tilde{L}' \subset \differencering \setminus \differencefield$ finite such that
\begin{equation*}\label{eq:LstrichL}
\diffidealgen{\tilde{L}'} : \tilde{Q} \, \, = \, \, \diffidealgen{\tilde{L}} : \tilde{Q}\,,
\end{equation*}
where $\tilde{Q}$ is the smallest multiplicatively closed subset of $\differencering$ containing all $\init(\theta \tilde{f})$, where $\tilde{f} \in \tilde{L}$
and $\theta \in \ld(\tilde{L} \setminus \{ \tilde{f} \}) : \ld(\tilde{f})$, and which is closed under $\automorphism_1$, \ldots, $\automorphism_n$,
and, in case $a = \text{\bf true}$, there exist no $\tilde{f}_1$, $\tilde{f}_2 \in \tilde{L}'$, $\tilde{f}_1 \neq \tilde{f}_2$, such that we have $v := \ld(\tilde{f}_1) = \theta \ld(\tilde{f}_2)$ for some $\theta \in \Mon(\setauto)$ and $\deg_{v}(\tilde{f}_1) \ge \deg_{v}(\theta \tilde{f}_2)$}
$\tilde{L}' \gets \tilde{L}$\;
\While{$\exists \, \tilde{f}_1, \tilde{f}_2 \in \tilde{L}', \, \tilde{f}_1 \neq \tilde{f}_2$ and $\theta \in \Mon(\setauto)$ such that we have $v := \ld(\tilde{f}_1) = \theta \ld(\tilde{f}_2)$ and $\deg_{v}(\tilde{f}_1) \ge \deg_{v}(\theta \tilde{f}_2)$}
{
  $\tilde{L}' \gets \tilde{L}' \setminus \{ \tilde{f}_1 \}$; \, $v \gets \ld(\tilde{f}_1)$\;
  $\tilde{r} \! \gets \! \init(\theta \tilde{f}_2) \, \tilde{f}_1 - \init(\tilde{f}_1) \, v^d \, \theta \tilde{f}_2$, $d \! := \! \deg_{v}(\tilde{f}_1) - \deg_{v}(\theta \tilde{f}_2)$\;
  \If{$\tilde{r} \neq 0$}{
    \Return $(\text{\bf false}, \tilde{L}' \cup \{ \tilde{r} \})$\;
  }
}
\Return $(\text{\bf true}, \tilde{L}')$\;
\caption{\algref{Auto-reduce} for difference algebra\label{alg:autoreducenonlin}}
\end{algorithm}

\medskip

Since leaders are dealt with in decreasing order with respect to $\differenceranking$,
and no ranking admits infinitely decreasing chains (cf.\ \cite[Ch.~0, Sect.~17, Lemma~15]{Kolchin}),
Algorithm~\ref{alg:autoreducenonlin} \emph{terminates}.
Its \emph{correctness} follows from the definition of $\tilde{E} : \tilde{Q}$.

\medskip

Before presenting Algorithm~\ref{alg:janetreducenonlin} which computes the Janet normal form
of a difference polynomial, we adapt the discussion of Janet division preceding Definition~\ref{de:admissiblederivations}
to the difference case.

\medskip

Janet division associates (with respect to a total ordering of $\setauto$)
to each $\tilde{f}_i = 0$ with $\ld(\tilde{f}_i) = \theta_i \tilde{u}^{(\alpha)}$
the set $\mu_i := \mu(\theta_i, \tilde{G}_{\alpha}) \subseteq \setauto$ (resp.\ $\overline{\mu}_i := \setauto \setminus \mu_i$) of
\emph{admissible}
(resp.\ \emph{non-admissible}) \emph{automorphisms}, where
\[
\tilde{G}_{\alpha} \, \, := \, \, \{ \, \theta \in \Mon(\setauto) \mid \theta \tilde{u}^{(\alpha)} \in \{ \ld(\tilde{f}_1), \ldots, \ld(\tilde{f}_s) \} \, \}\,.
\]
We call $\{ \tilde{f}_1 \!=\! 0, \ldots, \tilde{f}_s \!=\! 0 \}$ or $T := \{ (\tilde{f}_1, \mu_1), \ldots, (\tilde{f}_s, \mu_s) \}$
\emph{Janet complete}
if each $\tilde{G}_{\alpha}$ equals its Janet completion, $\alpha = 1$, \ldots, $m$.

\medskip

Let $\tilde{r} \in \differencering$.
If some $v \in \Mon(\setauto) \tilde{\mathbf{u}}$ occurs in $\tilde{r}$ for which there exists
$(\tilde{f}, \mu) \in T$ such that $v = \theta \ld(\tilde{f})$ for some $\theta \in \Mon(\mu)$
and $\deg_v(\tilde{r}) \ge \deg_v(\theta \tilde{f})$, then
$\tilde{r}$ is \emph{Janet reducible modulo $T$}.
In this case, $(\tilde{f}, \mu)$ is called a
\emph{Janet divisor} of $\tilde{r}$. If $\tilde{r}$ is not Janet reducible modulo $T$,
then $\tilde{r}$ is also said to be \emph{Janet reduced modulo $T$}.
Iterated pseudo-re\-duc\-tions of $\tilde{r}$ modulo $T$
yield its \emph{Janet normal form} $\NF(\tilde{r}, T, \ranking)$, which is the Janet reduced
difference polynomial $\tilde{r}'$ returned by Algorithm~\ref{alg:janetreducenonlin}.

\begin{algorithm}
\DontPrintSemicolon
\KwInput{$\tilde{r} \in \differencering$, $T = \{ \, (\tilde{f}_1, \mu_1), (\tilde{f}_2, \mu_2), \ldots, (\tilde{f}_s, \mu_s) \, \}$, and a ranking $\differenceranking$ on $\differencering$, where $T$ is Janet complete (with respect to $\ranking$)}
\KwOutput{$(\tilde{r}', \tilde{b}) \in \differencering \times \differencering$ such that
(1) if $\tilde{r} \in \differencefield$ or $T = \emptyset$, then $\tilde{r}' = \tilde{r}$, $\tilde{b} = 1$,
(2) otherwise $\tilde{r}'$ is Janet reduced modulo $T$ and
$$
\tilde{r}' + \diffidealgen{\tilde{f}_1, \ldots, \tilde{f}_s} \, \, = \, \, \tilde{b} \cdot \tilde{r} + \diffidealgen{\tilde{f}_1, \ldots, \tilde{f}_s}\,,
$$
where $\tilde{b}$ is in the multiplicatively closed set generated by
$$
\bigcup_{i=1}^s \, \{ \, \theta \init(\tilde{f}_i) \mid \theta \in \Mon(\setauto), \, \ld(\tilde{r}) \differenceranking \theta \ld(\tilde{f}_i) \, \} \cup \{ 1 \}
$$}
$\tilde{r}' \gets \tilde{r}$; \, $\tilde{b} \gets 1$\;
\If{$\tilde{r}' \not\in \differencefield$}
{
  $v \gets \ld(\tilde{r}')$\;
  \While{$\tilde{r}' \not\in \differencefield$ and there exist $(\tilde{f}, \mu) \in T$ and $\theta \in \Mon(\mu)$ such that $v = \theta \ld(\tilde{f})$ and $\deg_{v}(\tilde{r}') \ge \deg_{v}(\theta \tilde{f})$}
  {
    $\tilde{r}' \! \gets \! \init(\theta \tilde{f}) \, \tilde{r}' \! - \! \init(\tilde{r}') \, v^d \, \theta \tilde{f}$, $d \! := \! \deg_{v}(\tilde{r}') \! - \! \deg_{v}(\theta \tilde{f})$\;
    $\tilde{b} \gets \init(\theta \tilde{f}) \cdot \tilde{b}$\;
  }
  \For{each coefficient $\tilde{c}$ of $\tilde{r}'$ (as a polynomial in $v$)}
  {
    $(\tilde{r}'', \tilde{b}') \gets$ \algref{Janet-reduce}($\tilde{c}$, $T$, $\differenceranking$)\;
    replace the coefficient $\tilde{b}' \cdot \tilde{c}$ in $\tilde{b}' \cdot \tilde{r}'$ with $\tilde{r}''$ and replace $\tilde{r}'$ with this result\;
    $\tilde{b} \gets \tilde{b}' \cdot \tilde{b}$\;
  }
}
\Return $(\tilde{r}', \tilde{b})$\;
\caption{\algref{Janet-reduce} for difference algebra\label{alg:janetreducenonlin}}
\end{algorithm}

\begin{definition}\label{de:differencepassive}
Let $T = \{ \, (\tilde{f}_1, \mu_1), \ldots, (\tilde{f}_s, \mu_s) \, \}$
be Janet complete.
The difference system $\{ \, \tilde{f}_1 = 0, \ldots, \tilde{f}_s = 0 \, \}$ or $T$ is said to be \emph{passive}, if
the following Janet passivity conditions hold:
\begin{equation}\label{JanetPassivity}
\NF(\automorphism \tilde{f}_i, T, \differenceranking) = 0 \qquad \mbox{for all} \quad
\automorphism \in \overline{\mu}_i = \setauto \setminus \mu_i\,, \quad i = 1, \ldots, s\,.
\end{equation}
\end{definition}
Note that Eqs.~\eqref{JanetPassivity}, in general, form a proper subset of the
Gr\"obner passivity conditions~\eqref{PassivityConditions} (cf.~\cite{G'05}).

\begin{definition}\label{de:differencesimple}
Let $\differenceranking$ be a ranking on $\differencering$, and fix a total ordering on $\setauto$
with respect to which Janet division is defined.
A difference system $\tilde{S}$ as in (\ref{eq:polypartialdifference})
is said to be \emph{simple} (resp., \emph{quasi-simple}) if the following three conditions are satisfied.
\begin{enumerate}
\item $\tilde{S}$ is simple (resp., quasi-simple) as an algebraic system
(in the finitely many indeterminates which occur in the equations and inequations of $\tilde{S}$,
totally ordered by $\differenceranking$; cf.\ Definitions~\ref{de:algebraicsimple} and \ref{de:algebraicquasisimple}).\label{de:differencesimple_a}
\item The difference system $\{ \, \tilde{f}_1 = 0, \, \ldots, \, \tilde{f}_s = 0 \, \}$ is passive.\label{de:differencesimple_b}
\item The left hand sides $\tilde{f}_{s+1}$, \ldots, $\tilde{f}_{s+t}$
are Janet reduced modulo
the passive difference system $\{ \, \tilde{f}_1 = 0, \, \ldots, \, \tilde{f}_s = 0 \, \}$.\label{de:differencesimple_c}
\end{enumerate}
\end{definition}

\begin{theorem}\label{thm:differencesimplemembership}
Let $\tilde{S}$ be a quasi-simple difference system over $\differencering$ as in (\ref{eq:polypartialdifference}).
Let $\tilde{E}$ be the difference ideal of $\differencering$
generated by $\tilde{f}_1$, \ldots, $\tilde{f}_s$
and let $\tilde{Q}$ be the smallest subset of $\differencering$ which is multiplicatively closed,
closed under $\automorphism_1$, \ldots, $\automorphism_n$ and contains the initials $\tilde{q}_i := \init(\tilde{f}_i)$
for all $i = 1$, \ldots, $s$. Then a difference polynomial $\tilde{f} \in \differencering$ is an
element of
\[
\begin{array}{rcl}
\tilde{E} : \tilde{Q} \! & \! = \! & \! \{ \, \tilde{f} \in \differencering \mid (\theta_1(\tilde{q}_1))^{r_1} \ldots (\theta_s(\tilde{q}_s))^{r_s} \, \tilde{f} \in \tilde{E}\\[0.2em]
& &\quad \mbox{for some } \theta_1, \ldots, \theta_s \in \Mon(\setauto), \, r_1, \ldots, r_s \in \Z_{\ge 0} \, \}
\end{array}
\]
if and only if the Janet normal form
of $\tilde{f}$ modulo $\{ \, \tilde{f}_1, \ldots, \tilde{f}_s \, \}$ is zero.
\end{theorem}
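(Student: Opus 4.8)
The proof will be the difference-algebraic counterpart of the proof of Proposition~\ref{prop:differentialmembership} (i.e.\ of \cite[Prop.~2.2.50]{Robertz'14}, cf.\ also \cite[Sect.~3.4]{BGLHR'12}). The one genuine structural simplification is that each automorphism $\automorphism_j$ is a ring endomorphism, so that $\automorphism_j \tilde{f}$, viewed as a polynomial in its leader $\automorphism_j \ld(\tilde{f})$, has initial $\automorphism_j(\init(\tilde{f}))$: no analogue of the separant occurs, and it suffices to saturate by the initials $\tilde{q}_i = \init(\tilde{f}_i)$ alone, which is exactly what $\tilde{Q}$ does.

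\emph{The ``if'' direction.} Suppose $\NF(\tilde{f}, \{ \tilde{f}_1, \ldots, \tilde{f}_s \}, \differenceranking) = 0$. Applying Algorithm~\ref{alg:janetreducenonlin} to $\tilde{f}$ and $T = \{ (\tilde{f}_1, \mu_1), \ldots, (\tilde{f}_s, \mu_s) \}$ produces $\tilde{b}$ in the multiplicatively closed set generated by the shifts $\theta \init(\tilde{f}_i)$ with $\tilde{b}\,\tilde{f} - \NF(\tilde{f}, \{ \tilde{f}_1, \ldots, \tilde{f}_s \}, \differenceranking) \in \diffidealgen{\tilde{f}_1, \ldots, \tilde{f}_s} = \tilde{E}$, hence $\tilde{b}\,\tilde{f} \in \tilde{E}$. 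Since that multiplicatively closed set is contained in $\tilde{Q}$, we get $\tilde{f} \in \tilde{E} : \tilde{Q}$.

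\emph{The ``only if'' direction.} Assume $\tilde{f} \in \tilde{E} : \tilde{Q}$. Replacing $\tilde{f}$ by $\NF(\tilde{f}, \{ \tilde{f}_1, \ldots, \tilde{f}_s \}, \differenceranking)$ alters it, modulo $\tilde{E}$, only by a factor in $\tilde{Q}$ (again by the output specification of Algorithm~\ref{alg:janetreducenonlin}), so membership in $\tilde{E} : \tilde{Q}$ is preserved and we may assume that $\tilde{f}$ is Janet reduced modulo $T$; it remains to show $\tilde{f} = 0$. Here two ingredients are needed. First, a \emph{difference Rosenfeld lemma}: since $\{ \tilde{f}_1 = 0, \ldots, \tilde{f}_s = 0 \}$ is passive and Janet complete (Definition~\ref{de:differencepassive}, part of quasi-simplicity, Definition~\ref{de:differencesimple}), every non-admissible prolongation $\automorphism \tilde{f}_i$, $\automorphism \in \overline{\mu}_i$, Janet-reduces to $0$ modulo $T$, and by iteration the same holds for every $\theta \tilde{f}_i$ with $\theta \notin \Mon(\mu_i)$; consequently $\tilde{E}$ and the purely \emph{algebraic} ideal $J \subseteq \differencefield[\Mon(\setauto)\tilde{\mathbf{u}}]$ generated by the \emph{admissible} prolongations $\theta \tilde{f}_i$, $\theta \in \Mon(\mu_i)$, have the same $\tilde{Q}$-saturation (this is the difference analogue of Rosenfeld's lemma; cf.\ \cite{BLOP'09} in the differential case). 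Second, restricting to the finitely many indeterminates occurring in $\tilde{f}$ and in a witness for $\tilde{q}\,\tilde{f} \in \tilde{E}$ ($\tilde{q} \in \tilde{Q}$), the admissible prolongations involved form a triangular set with pairwise distinct leaders and non-vanishing initials — a regular (quasi-simple) algebraic system (Definition~\ref{de:algebraicquasisimple}) whose set of leaders is Janet complete. Passing to the localization $\tilde{Q}^{-1}\differencering$, in which all initials become units, these admissible prolongations are a Janet basis of the extended ideal, so Janet reduction computes canonical normal forms there and an element lies in the extended ideal if and only if its Janet normal form is $0$. Since $\tilde{E} : \tilde{Q} = J : \tilde{Q}$ equals the contraction of that extended ideal, and $\tilde{f}$ is Janet reduced, we conclude $\tilde{f} = \NF(\tilde{f}, \{ \tilde{f}_1, \ldots, \tilde{f}_s \}, \differenceranking) = 0$.

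\emph{Main obstacle.} The substance of the argument is the claim that, for a passive, Janet complete, quasi-simple difference system, the algebraic Janet reduction of Algorithms~\ref{alg:autoreducenonlin} and~\ref{alg:janetreducenonlin} is confluent (and its interaction with the Rosenfeld lemma). Termination is immediate, since no ranking admits infinitely descending chains (\cite[Ch.~0, Sect.~17, Lemma~15]{Kolchin}); by Newman's lemma, confluence then follows from local confluence, i.e.\ from checking that every critical pair reduces to a common value. Overlaps of two distinct admissible prolongations $\theta_1 \tilde{f}_i$, $\theta_2 \tilde{f}_j$ at a common leader multiple are, because the $\ld(\tilde{f}_i)$ are pairwise distinct and $T$ is Janet complete, consequences of the Janet passivity conditions~\eqref{JanetPassivity} (a proper subset of the Gr\"obner passivity conditions~\eqref{PassivityConditions}), while the remaining overlaps are resolved by the triangular regular structure exactly as in the algebraic Thomas decomposition. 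Transcribing this critical-pair analysis from the differential setting (\cite[Thm.~2.2.57]{Robertz'14}, \cite[Sect.~3.4]{BGLHR'12}), with the separant-free simplification noted above and the bookkeeping of the $\tilde{Q}$-saturation, is where the real work lies.
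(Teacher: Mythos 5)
Your proposal is correct and follows essentially the same route as the paper: the ``if'' direction via the output specification of the Janet reduction algorithm, and the ``only if'' direction by using passivity to rewrite every occurrence of a non-admissible prolongation in a representation of $\tilde{q}\,\tilde{f}$ in terms of admissible ones, then invoking the quasi-simple (regular) triangular structure of the resulting algebraic system. What you package as a ``difference Rosenfeld lemma'' plus localization at $\tilde{Q}$ is exactly the paper's explicit substitution process on the highest-ranked non-admissible prolongation followed by its Janet-divisor conclusion, so the two arguments differ only in presentation.
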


\begin{proof}
By definition of $\tilde{E} : \tilde{Q}$, every element $\tilde{f} \in \differencering$
for which Algorithm~\ref{alg:janetreducenonlin}
yields
Janet normal form zero is an element of $\tilde{E} : \tilde{Q}$.

Let $\tilde{f} \in \tilde{E} : \tilde{Q}$, $\tilde{f} \neq 0$.
Then there exist $\tilde{q} \in \tilde{Q}$
and $k_1$, \ldots, $k_s \in \Z_{\ge 0}$
and $\tilde{c}_{i,j} \in \differencering \setminus \{ 0 \}$, $\alpha_{i,j} \in \Mon(\setauto)$, $j = 1$, \ldots, $k_i$,
$i = 1$, \ldots, $s$, such that
\begin{equation}\label{eq:linearcombalpha}
\tilde{q} \, \tilde{f} \, \, = \, \, \sum_{i=1}^s \sum_{j=1}^{k_i} \tilde{c}_{i,j} \, \alpha_{i,j}(\tilde{f}_i)\,.
\end{equation}
Among all pairs $(i, j)$ for which $\alpha_{i,j}$ involves a
non-admissible automorphism for $\tilde{f}_i = 0$ let the pair $(i^{\star}, j^{\star})$
be such that $\alpha_{i^{\star}, j^{\star}}(\ld(\tilde{f}_{i^{\star}}))$ is maximal
with respect to the ranking $\differenceranking$. Let $\automorphism$ be a non-admissible automorphism
for $\tilde{f}_{i^{\star}} = 0$ which divides the monomial $\alpha_{i^{\star},j^{\star}}$.
Since $\{ \tilde{f}_1 = 0, \ldots, \tilde{f}_s = 0 \}$ is a passive difference system, there exist $\tilde{b} \in \tilde{Q}$ and
$l_1$, \ldots, $l_s \in \Z_{\ge 0}$
and $\tilde{d}_{i,j} \in \differencering \setminus \{ 0 \}$
as well as $\beta_{i,j} \in \Mon(\setauto)$, $j = 1$, \ldots, $l_i$, $i = 1$, \ldots, $s$, such that
\[
\tilde{b} \cdot (\automorphism \, \tilde{f}_{i^{\star}}) \, \, = \, \, \sum_{i=1}^s \sum_{j=1}^{l_i}\tilde{d}_{i,j} \, \beta_{i,j}(\tilde{f}_i)\,,
\]
where each $\beta_{i,j}$ involves only admissible automorphisms for $\tilde{f}_i = 0$.
Let $\gamma_{i^{\star}, j^{\star}} := \alpha_{i^{\star}, j^{\star}} / \automorphism$ and
multiply \eqref{eq:linearcombalpha} by $\gamma_{i^{\star}, j^{\star}}(\tilde{b})$ to obtain
\[
\gamma_{i^{\star}, j^{\star}}(\tilde{b}) \cdot \tilde{q} \, \tilde{f} \, \, = \, \,
\sum_{i=1}^s \sum_{j=1}^{k_i} \tilde{c}_{i,j} \cdot \gamma_{i^{\star}, j^{\star}}(\tilde{b}) \cdot \alpha_{i,j}(\tilde{f}_i)\,.
\]
In this equation we replace
\[
\gamma_{i^{\star}, j^{\star}}(\tilde{b}) \cdot \alpha_{i^{\star}, j^{\star}}(\tilde{f}_{i^{\star}}) \, \, = \, \,
\gamma_{i^{\star}, j^{\star}}(\tilde{b} \cdot \automorphism(\tilde{f}_{i^{\star}}))
\]
by
\[
\gamma_{i^{\star}, j^{\star}}\left(
\sum_{i=1}^s \sum_{j=1}^{l_i} \tilde{d}_{i,j} \, \beta_{i,j}(\tilde{f}_i) \right).
\]
Since $\gamma_{i^{\star}, j^{\star}} \, \beta_{i^{\star}, j^{\star}}$
involves fewer non-admissible automorphisms for $\tilde{f}_i = 0$ than $\alpha_{i^{\star}, j^{\star}}$,
iteration of this substitution process will rewrite equation \eqref{eq:linearcombalpha}
in such a way that every $\alpha_{i,j}(\ld(\tilde{f}_i))$ involving non-ad\-missible automorphisms
for $\tilde{f}_i = 0$ will be less than $\alpha_{i^{\star}, j^{\star}}(\ld(\tilde{f}_{i^{\star}}))$
with respect to
$\differenceranking$. A further iteration of this substitution process
will therefore produce an equation as \eqref{eq:linearcombalpha} with no $\alpha_{i,j}$
involving any non-admissible automorphisms for $\tilde{f}_i = 0$.

This shows that for every $\tilde{f} \in (\tilde{E} : \tilde{Q}) \setminus \{ 0 \}$ there
exists a Janet divisor of $\ld(\tilde{f})$ in the passive set defined
by $\tilde{f}_1 = 0$, \ldots, $\tilde{f}_s = 0$.\qed
\end{proof}

\begin{corollary}\label{cor:differencesimpleradical}
In the situation of Theorem~\ref{thm:differencesimplemembership}
let $\tilde{S}$ be simple. Then the difference ideal $\tilde{E} : \tilde{Q}$ is radical.
\end{corollary}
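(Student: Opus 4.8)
The plan is to transport, to the difference setting, the classical proof that the saturation of a simple differential system is radical (Proposition~\ref{prop:differentialmembership}), using the membership criterion just established in Theorem~\ref{thm:differencesimplemembership}. Write $T = \{ (\tilde{f}_1, \mu_1), \ldots, (\tilde{f}_s, \mu_s) \}$ for the passive, Janet-complete system attached to $\{ \tilde{f}_1 = 0, \ldots, \tilde{f}_s = 0 \}$. By Theorem~\ref{thm:differencesimplemembership}, $\tilde{E} : \tilde{Q} = \{ \, \tilde{f} \in \differencering \mid \NF(\tilde{f}, T, \differenceranking) = 0 \, \}$; crucially, simplicity of $\tilde{S}$ adds to quasi-simplicity exactly the discriminant conditions of Definition~\ref{de:algebraicsimple}\,\ref{de:algebraicsimple_4}, and this is the extra input that will yield radicality. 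To prove $\tilde{E} : \tilde{Q}$ radical it suffices to show $\tilde{f}^2 \in \tilde{E} : \tilde{Q} \Rightarrow \tilde{f} \in \tilde{E} : \tilde{Q}$. Writing $\tilde{r} = \NF(\tilde{f}, T, \differenceranking)$, Algorithm~\ref{alg:janetreducenonlin} gives $\tilde{r} = \tilde{b}\,\tilde{f} + e$ with $\tilde{b} \in \tilde{Q}$ and $e \in \diffidealgen{\tilde{f}_1, \ldots, \tilde{f}_s} \subseteq \tilde{E} : \tilde{Q}$; since $\tilde{E} : \tilde{Q}$ is an ideal, $\tilde{f}^2 \in \tilde{E} : \tilde{Q}$ forces $\tilde{r}^2 \in \tilde{E} : \tilde{Q}$, hence $\NF(\tilde{r}^2, T, \differenceranking) = 0$. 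So the whole matter reduces to the crisp assertion: a Janet-reduced $\tilde{r}$ with $\NF(\tilde{r}^2, T, \differenceranking) = 0$ must vanish.

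Next I would pass to a finite-dimensional, purely algebraic situation. The polynomial $\tilde{r}$ and every intermediate polynomial occurring while Janet-reducing $\tilde{r}^2$ modulo $T$ involves only finitely many of the indeterminates $\automorphism^J \tilde{u}^{(\alpha)}$. Fix a finite set $V$ of such indeterminates, large enough to be closed under all the admissible shifts used in these reductions, and form the algebraic system $T_V$ consisting of all $\theta \tilde{f}_i$ (with $\theta \in \Mon(\mu_i)$) whose leader lies in $V$, together with the relevant shifted inequations. I claim $T_V$ is \emph{simple} as an algebraic system in the variables $V$ ordered by $\differenceranking$: the leaders are pairwise distinct because the Janet cones of $\ld(\tilde{f}_1), \ldots, \ld(\tilde{f}_s)$ are disjoint; and the non-vanishing conditions~\ref{de:algebraicsimple_3} and~\ref{de:algebraicsimple_4} for $\tilde{S}$ transport to $T_V$ because the $\automorphism_j$ are ring automorphisms with $\init(\theta \tilde{f}_i) = \theta(\init(\tilde{f}_i))$ and $\disc(\theta \tilde{f}_i) = \theta(\disc(\tilde{f}_i))$, while passivity of $\{ \tilde{f}_1 = 0, \ldots, \tilde{f}_s = 0 \}$ ensures that truncating the solution set to the coordinates in $V$ introduces no spurious point on which such an initial or discriminant could vanish. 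One then checks that, for all sufficiently large $V$, the contraction $(\tilde{E} : \tilde{Q}) \cap \differencefield[V]$ equals the saturation of $T_V$ by the product of the initials of its equations, and that $\tilde{E} : \tilde{Q}$ is the directed union of these contractions over all such $V$.

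The remaining algebraic core is the classical statement that the saturation of a simple algebraic system by its initials is radical (this is the square-free regular chain fact underlying the regularity notions cited around Definition~\ref{de:algebraicquasisimple}). I would prove it by induction on the number $p$ of leaders $v_1 \succ \cdots \succ v_p$: after localizing so that every initial is a unit, the top polynomial is square-free as a univariate polynomial in $v_1$ because its discriminant is a unit — this is precisely where condition~\ref{de:algebraicsimple_4} enters — so adjoining $v_1$ to the reduced ring furnished by the inductive hypothesis applied to $\tilde{f}_2, \ldots, \tilde{f}_p$ (which involve only $v_2, \ldots, v_p$ and the remaining indeterminates, playing the role of parameters) keeps the ring reduced; contracting back along the localization preserves radicality. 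Finally, a directed union of radical ideals is radical, which yields the corollary.

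The main obstacle I anticipate is the bookkeeping in the second step: showing rigorously that the admissibly shifted, finitely truncated system $T_V$ is genuinely simple as an algebraic system — in particular, matching $\Sol(T_V)$ against the projections $\pi_k(\Sol(\tilde{S}))$ closely enough that conditions~\ref{de:algebraicsimple_3}–\ref{de:algebraicsimple_4} survive truncation, which is exactly where passivity must be exploited with care, and which simultaneously justifies that no non-admissible shift contributes to the contraction. The algebraic square-free regular chain lemma, with the saturation taken by initials only, is the other (routine but not entirely trivial) ingredient. Both are the faithful difference-algebra analogues of the steps behind Proposition~\ref{prop:differentialmembership}; the only genuinely new feature is non-Noetherianity of $\differencering$, which the directed-union argument is designed to absorb.
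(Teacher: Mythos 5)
Your proposal follows essentially the same route as the paper's proof: reduce a power relation $\tilde{f}^r\in\tilde{E}:\tilde{Q}$ to a finite algebraic system built from \emph{admissible} shifts $\alpha_{i,j}(\tilde{f}_i)$ over a finite indeterminate set $V$, observe that this system is simple (disjoint Janet cones give distinct leaders, and $\init(\theta\tilde f_i)=\theta(\init(\tilde f_i))$, $\disc(\theta\tilde f_i)=\theta(\disc(\tilde f_i))$ transport conditions \ref{de:algebraicsimple_3}--\ref{de:algebraicsimple_4}), and invoke radicality of the saturation of a simple algebraic system. The only differences are cosmetic: the paper cites \cite[Prop.~2.2.7]{Robertz'14} for the algebraic square-free/regular-chain lemma rather than reproving it, and it needs only the single inclusion $\tilde f^r\in I_V:(\tilde q')^{\infty}\subset\tilde E:\tilde Q$ rather than your full directed-union identification of contractions, which avoids the bookkeeping you flag as the main obstacle.
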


\begin{proof}
Let $\tilde{f} \in \differencering$ and $r \in \N$ be such that $\tilde{f}^r \in \tilde{E} : \tilde{Q}$.
We will show that $\tilde{f} \in \tilde{E} : \tilde{Q}$.
Since $\tilde{f}^r \in \tilde{E} : \tilde{Q}$ and the difference system $\tilde{S}$ is (quasi-) simple, there exist
$\tilde{q} \in \tilde{Q}$, $k_1$, \ldots, $k_s \in \Z_{\ge 0}$, $\tilde{c}_{i,j} \in \differencering$,
$\alpha_{i,j} \in \Mon(\setauto)$, $j = 1$, \ldots, $k_i$, $i = 1$, \ldots, $s$,
such that
\begin{equation}\label{eq:proofradical}
\tilde{q} \, \tilde{f}^r \, \, = \, \, \sum_{i=1}^s \sum_{j=1}^{k_i} \tilde{c}_{i,j} \, \alpha_{i,j}(\tilde{f}_i)\,,
\end{equation}
where each $\alpha_{i,j}$ only involves admissible automorphisms for $\tilde{f}_i = 0$
and where $\tilde{q}$ is a product of powers of $\init(\alpha_{i,j}(\tilde{f}_i))$,
$i = 1$, \ldots, $s$, $j = 1$, \ldots, $k_i$.

Let $V \subset \Mon(\setauto) \tilde{\mathbf{u}}$ be minimal such that
the (non-difference) polynomial algebra $\differencefield[V] \subset \differencering$
contains all indeterminates occurring in \eqref{eq:proofradical}.
Note that $V$ is finite and recall that $\tilde{S}$ is simple as an
algebraic system (cf.\ Definition~\ref{de:algebraicsimple}).
Now define the algebraic system (over $\differencefield[V]$ with $V$ totally ordered by $\differenceranking$)
\[
\tilde{S}' \, \, = \, \,
\{ \, \alpha_{i,j}(\tilde{f}_i) = 0 \mid i = 1, \ldots, s, \, j = 1, \ldots, k_i \, \} \cup
\{ \, \tilde{f}_{s+1} \neq 0, \, \ldots, \, \tilde{f}_{s+t} \neq 0 \, \}\,.
\]
Then $\tilde{S'}$ is simple. In fact, the leaders of all equations and inequations
in $\tilde{S}'$ are pairwise distinct, because the cones of monomials
in $\automorphism_1$, \ldots, $\automorphism_n$ defined by applying admissible
automorphisms to the leaders of $\tilde{f}_1$, \ldots, $\tilde{f}_s$ are disjoint
(cf.\ the discussion before Definition~\ref{de:admissiblederivations}),
and vanishing of $\init(\alpha_{i,j}(\tilde{f}_i)) = \alpha_{i,j}(\init(\tilde{f}_i))$
or $\disc(\alpha_{i,j}(\tilde{f}_i)) = \alpha_{i,j}(\disc(\tilde{f}_i))$
on the solution set of the algebraic system $\tilde{S}'$ is prevented by the
simplicity of $\tilde{S}$.

Let $I_V$ be the (algebraic) ideal of $\differencefield[V]$ that is generated by
the equations in $\tilde{S}'$,
and let $\tilde{q}'$ be the product of the initials of all equations in $\tilde{S}'$.
Then equation~\eqref{eq:proofradical} shows that $\tilde{f}^r \in I_V : (\tilde{q}')^{\infty}$.
Since the algebraic system $\tilde{S}'$ is simple, \cite[Prop.~2.2.7]{Robertz'14} shows
that the ideal $I_V : (\tilde{q}')^{\infty}$ is radical. Hence,
$\tilde{f} \in I_V : (\tilde{q}')^{\infty} \subset \tilde{E} : \tilde{Q}$.\qed
\end{proof}

Let $\Omega \subseteq \R^n$ be open and connected and fix $\mathbf{z} \in \Omega$.
Denoting the grid in \eqref{grid} by $\Gamma_{\mathbf{z},\mathbf{h}}$, we define
\[
\begin{array}{l}
\mathcal{F}_{\Omega,\mathbf{z},\mathbf{h}} \, \, := \, \, \{ \, \tilde{u}\colon \Gamma_{\mathbf{z},\mathbf{h}} \cap \Omega \to \C \mid
\mbox{$\tilde{u}$ is the restriction to $\Gamma_{\mathbf{z},\mathbf{h}} \cap \Omega$ of}\\[0.25em]
\qquad \qquad \qquad \qquad \qquad \qquad \mbox{some locally analytic function $u$ on $\Omega$} \, \}\,,
\end{array}
\]
and for a system $\tilde{S}$ of partial difference equations and inequations as in \eqref{eq:polypartialdifference}
we define the solution set
\[
\begin{array}{l}
\Sol_{\Omega,\mathbf{z},\mathbf{h}}(\tilde{S}) \, \, := \, \, \{ \, \tilde{u} \in \mathcal{F}_{\Omega,\mathbf{z},\mathbf{h}} \mid
\tilde{f}_i(\tilde{u}) = 0, \, \tilde{f}_{s+j}(\tilde{u}) \neq 0 \mbox{ for all}\\[0.25em]
\qquad \qquad \qquad \qquad \qquad \qquad \qquad \qquad i = 1, \ldots, s, \, j = 1, \ldots, t \, \}\,.
\end{array}
\]

\begin{definition}\label{de:differencedecomposition}
Let $\tilde{S}$ be a finite difference system over $\differencering$ and $\differenceranking$ a ranking on $\tilde{\cR}$.
A \emph{difference decomposition} of $\tilde{S}$ (with respect to $\differenceranking$)
is a finite collection of quasi-simple
difference systems $\tilde{S}_1$, \ldots, $\tilde{S}_r$ over $\differencering$ such that
$\Sol_{\Omega,\mathbf{z},\mathbf{h}}(\tilde{S}) = \Sol_{\Omega,\mathbf{z},\mathbf{h}}(\tilde{S}_1) \uplus \ldots \uplus \Sol_{\Omega,\mathbf{z},\mathbf{h}}(\tilde{S}_r)$.
\end{definition}

Given a finite difference system $\tilde{S}$ over $\differencering$,
Algorithm~\ref{alg:differencedecomp}, presented below, constructs a difference decomposition of $\tilde{S}$ in finitely many steps.
In step~11 \algref{Decompose} refers to an algorithm which computes a smallest superset of $\tilde{G} = \{ \tilde{f}_1, \ldots, \tilde{f}_s \}$ in $\tilde{\cR}$ that is Janet complete as defined on page~\pageref{alg:autoreducenonlin} (cf., for example, \cite[Algorithm~2.1.6]{Robertz'14}).

\begin{algorithm}
\DontPrintSemicolon\SetCommentSty{textit}
\KwInput{A finite difference system $\tilde{S}$ over $\differencering$, a ranking $\differenceranking$ on $\differencering$, and a total ordering on $\setauto$ (used by \algref{Decompose})}
\KwOutput{A difference decomposition of $\tilde{S}$}
$Q \gets \{ \tilde{S} \}$; \, $T \gets \emptyset$\;
\Repeat{$Q = \emptyset$}{
  choose $\tilde{L} \in Q$ and remove $\tilde{L}$ from $Q$\;
  compute a decomposition $\{ A_1, \ldots, A_r \}$ of $\tilde{L}$, considered as an algebraic system, into quasi-simple systems (cf.~Definition~\ref{de:algebraicquasisimple})\;
  \For{$i = 1$, \ldots, $r$}{
     \If(\tcp*[f]{no equation and no inequation}){$A_i = \emptyset$}{
      \Return $\{ \emptyset \}$\;
    }
    \Else{
      $(a, \tilde{G}) \gets$ \algref{Auto-reduce}($A_i^{=}$, $\differenceranking$)\tcp*[r]{Algorithm~\ref{alg:autoreducenonlin}}
      \If{$a = \text{\bf true}$}{
        $J \gets$ \algref{Decompose}($\tilde{G}$)\;
        $P \!\gets\! \{ \NF(\automorphism \tilde{f}, J, \differenceranking) \mid (\tilde{f}, \mu) \in J, \, \automorphism \in \overline{\mu} \}$\tcp*[r]{Algorithm~\ref{alg:janetreducenonlin}}
        \If(\tcp*[f]{$J$ is passive}){$P \subseteq \{ 0 \}$}{
          replace each $\tilde{g} \! \neq \! 0$ in $A_i$ with $\NF(\tilde{g}, J, \differenceranking) \! \neq \! 0$\;
          \If{$0 \not\in A_i^{\neq}$}{
            insert $\{ \tilde{f} = 0 \mid (\tilde{f}, \mu) \in J \} \cup \{ \tilde{g} \neq 0 \mid \tilde{g} \in A_i^{\neq} \}$ into $T$\;
          }
        }
        \ElseIf{$P \cap \differencefield \subseteq \{ 0 \}$}{
          insert $\{ \tilde{f} = 0 \mid (\tilde{f}, \mu) \in J \} \cup \{ \tilde{f} = 0 \mid \tilde{f} \in P \setminus \{ 0 \} \} \cup \{ \tilde{g} \neq 0 \mid \tilde{g} \in A_i^{\neq} \}$ into $Q$\;
        }
      }
      \Else{
        insert\! $\{ \tilde{f} = 0 \mid \tilde{f} \in G \} \!\cup\! \{ \tilde{g} \neq 0 \mid \tilde{g} \in A_i^{\neq} \}$\! into $Q$\;
      }
    }
  }
}
\Return $T$\;
\caption{\algref{DifferenceDecomposition}\label{alg:differencedecomp}}
\end{algorithm}

\begin{theorem}
Algorithm~\ref{alg:differencedecomp} terminates and is correct.
\end{theorem}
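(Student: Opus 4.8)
The plan is to establish termination and correctness in parallel with the differential Thomas decomposition algorithm \cite[Thm.~2.2.57]{Robertz'14}, \cite[Sect.~3.4]{BGLHR'12}, the two auxiliary Algorithms~\ref{alg:autoreducenonlin} and \ref{alg:janetreducenonlin} playing the roles that Euclidean pseudo-reduction and Janet reduction play in the differential case. For correctness I would carry through the {\bf repeat}-loop the invariant
\[
\Sol_{\Omega,\mathbf{z},\mathbf{h}}(\tilde{S}) \, \, = \, \, \biguplus_{\tilde{L} \in Q} \Sol_{\Omega,\mathbf{z},\mathbf{h}}(\tilde{L}) \, \uplus \, \biguplus_{\tilde{L} \in T} \Sol_{\Omega,\mathbf{z},\mathbf{h}}(\tilde{L})\,,
\]
which holds trivially after line~1. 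Each iteration preserves it: the algebraic triangular decomposition in line~4 partitions $\Sol_{\Omega,\mathbf{z},\mathbf{h}}(\tilde{L})$ into the $\Sol_{\Omega,\mathbf{z},\mathbf{h}}(A_i)$ by the defining property of an algebraic (quasi-simple) decomposition, and in doing so it already branches on the vanishing of every initial and discriminant it uses; \algref{Auto-reduce} then only replaces an equation by a pseudo-remainder lying in the same saturated difference ideal $\diffidealgen{\tilde{L}'} : \tilde{Q} = \diffidealgen{\tilde{L}} : \tilde{Q}$ (its output specification), and its multipliers $\init(\theta \tilde{f}_2)$ are non-zero on the solution set of that branch by the preceding split; \algref{Decompose} adjoins only difference-ideal consequences; replacing an inequation $\tilde{g} \neq 0$ by $\NF(\tilde{g}, J, \differenceranking) \neq 0$ in line~14 leaves both sides of the invariant unchanged because the multiplier $\tilde{b}$ returned by Algorithm~\ref{alg:janetreducenonlin} is a product of initials and hence invertible on the solution set; and the branches dropped in lines~7 and 15, as well as the branch in which $P$ contains a non-zero constant, are precisely those with empty solution set. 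When the loop exits with $Q = \emptyset$ the invariant is exactly the partition required by Definition~\ref{de:differencedecomposition}.

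It then remains to verify that every system inserted into $T$ in line~16 is quasi-simple in the sense of Definition~\ref{de:differencesimple}. Condition~\ref{de:differencesimple_a}.\ holds since line~4 delivered a quasi-simple algebraic system and, because \algref{Auto-reduce} returned {\bf true}, no two of its equations have leaders one of which is a shift of the other with the degree inequality of Algorithm~\ref{alg:autoreducenonlin}; together with the disjointness of the Janet cones produced in line~11 this gives pairwise distinct leaders and hence conditions \ref{de:algebraicsimple_1}.--\ref{de:algebraicsimple_3}.\ of Definition~\ref{de:algebraicsimple}. Condition~\ref{de:differencesimple_b}.\ is exactly the test $P \subseteq \{ 0 \}$ passed before line~14, i.e.\ the Janet passivity conditions of Definition~\ref{de:differencepassive}; and condition~\ref{de:differencesimple_c}.\ is ensured by the reduction of the inequations in line~14. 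By Theorem~\ref{thm:differencesimplemembership} each such output system supports the Janet-normal-form membership test in its saturated ideal, which is what makes the decomposition serviceable for the s-consistency check.

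For termination the crucial point is that, although $\differencering$ is non-Noetherian, the monomial ideals $\langle \tilde{G}_\alpha \rangle$ generated by the leader-monomials of the equations live in the ordinary polynomial ring $\differencefield[\automorphism_1, \ldots, \automorphism_n]$, whose monomials form $(\Z_{\ge 0})^n$, so by Dickson's lemma no infinite strictly ascending chain of such ideals exists. I would first note that each individual pass is finite: the algebraic Thomas decomposition in line~4 is a finite process, \algref{Auto-reduce} and \algref{Janet-reduce} terminate (already argued above, using that a ranking admits no infinitely descending chain, \cite[Ch.~0, Sect.~17, Lemma~15]{Kolchin}), and Janet completion is finite; moreover a pass produces only finitely many children in $Q$, so the whole computation is a finitely branching tree and, by König's lemma, it suffices to exclude infinite branches. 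Along any branch, the leader monomial ideals $\langle \tilde{G}_\alpha \rangle$ never shrink under the operations actually carried out --- in \algref{Auto-reduce} any leader that is removed is a shift of the leader of an equation that is retained, and the algebraic splitting of line~4 only adjoins leaders of initials and discriminants --- while they strictly enlarge, in at least one $\alpha$, exactly when line~18 feeds back a newly found integrability condition, which is Janet-reduced modulo $J$ and hence has a leader outside all current cones. By Dickson's lemma this can happen only finitely often; past the last such enlargement every normal form in $P$ is Janet-reduced with leader inside the (now stable) cones, hence lies in $\differencefield$, so the branch either reaches a passive system that is moved to $T$ or is discarded, after only finitely many further passes, each of which strictly decreases the well-founded multiset of (leader, degree) pairs of the equations. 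Hence every branch is finite.

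The step I expect to be the main obstacle is making this last termination measure fully precise: one must reconcile the enlargement of the leader monomial ideals by integrability conditions with the interplay of the algebraic sub-decomposition, the case splittings on initials and discriminants, and the pseudo-reductions, so as to obtain a genuinely well-founded combined measure along each branch. This is the difference analogue of the completion-to-involution argument for the differential Thomas decomposition, and once the measure is set up the remaining verifications --- that each non-productive pass strictly decreases it and that passivity is reached once the leader ideals stabilise --- are routine.
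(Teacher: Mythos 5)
Your overall strategy coincides with the paper's: the same tree of systems with finite branching degree, the same partition invariant for correctness (splitting only in step~4 via initials, quasi-simplicity of everything inserted into $T$ guaranteed by the algebraic decomposition, the passivity test of step~12 and the reduction of inequations in step~14), and termination via Dickson's Lemma applied to leader monomials. The correctness half is sound and matches the paper's argument.

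The step that is wrong as stated --- and which you yourself flag as the main obstacle --- is the claim that an integrability condition fed back in step~18, being Janet-reduced modulo $J$, ``hence has a leader outside all current cones,'' so that the leader monomial ideals strictly enlarge exactly when step~18 fires. Janet reducibility requires not only that the leader lie in a cone $\Mon(\mu)\,\ld(\tilde{f}')$ but also that its degree in that leader be at least $\deg_v(\theta \tilde{f}')$; a Janet-reduced polynomial can therefore have its leader \emph{inside} an existing cone, just with strictly smaller degree in it, in which case the leader monomial ideal does not grow. This is precisely the case the paper's proof isolates: when $\ld(\tilde{f}) = \ld(\tilde{f}')$ a pseudo-reduction (or the added equation $\init(\tilde{f}) = 0$, with lower-ranked leader) follows, giving a strictly decreasing sequence in $\Mon(\setauto)$; when $\ld(\tilde{f}) = \theta\,\ld(\tilde{f}')$ with $\theta \neq 1$ one obtains a sequence of leaders $(\theta_i \tilde{u}^{(\alpha)})^{e_i}$ with $\theta_i \mid \theta_{i+1}$ and $e_{i+1} < e_i$, which must terminate; only after these finitely many degree-decreasing rounds does the ``leader outside all cones'' case occur, and only then does Dickson's Lemma apply. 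Your proposed multiset of (leader, degree) pairs is the right repair, but you would still need to verify that it behaves under the algebraic sub-decomposition and auto-reduction of the subsequent pass (the paper controls the step-20 branch by the bound on the order of shifts in leaders inherited from the ancestor system). As written, the termination argument has a hole exactly where the paper's proof does its real work.
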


\begin{proof}
Algorithm~\ref{alg:differencedecomp} maintains a set $Q$ of difference systems that still have to be dealt with.
Given that termination of all subalgorithms has been proved, termination of Algorithm~\ref{alg:differencedecomp}
is equivalent to the condition that $Q = \emptyset$ holds after finitely many steps.

Apart from step~1, new systems are inserted into $Q$ in steps~18 and 20.
We consider the systems that are at some point an element of $Q$ as the vertices of a tree.
The root of this tree is the input system $\tilde{S}$. The systems which are inserted into $Q$ in
steps~18 and 20 are the vertices of the tree whose ancestor is the system $\tilde{L}$ that was
extracted from $Q$ in step~3 which in the following steps produced these new systems.
Since the for loop beginning in step~5 terminates, the degree of each vertex in the tree
is finite. We claim that every branch of the tree is finite, i.e., that the tree has finite
height, hence, that the tree has only finitely many vertices.

In case of step~20 the new system contains an equation which resulted from a non-trivial
difference reduction in step~9. When this new system will be extracted from $Q$ in a later
round, a decomposition into quasi-simple algebraic systems
will be computed in step~4. This may
produce new branches of the tree, but along any of these branches, after finitely many steps
the condition $a = $ true in step~10 will hold, because the order of the shifts in
leaders of the arising equations is bounded by the maximum order of shifts in leaders
of the ancestor system $\tilde{L}$.

In case of step~18 we are going to show that after finitely many steps a difference equation
is obtained whose leader has not shown up as a leader of an equation in any preceding
system in the current branch of the tree. First of all,
the passivity check (step~12) yielded an equation $\tilde{f} = 0$,
$\tilde{f} \in P \setminus \differencefield$, which is Janet reduced modulo $J$. Hence, either
$\ld(\tilde{f})$ is not contained in the multiple-closed set generated by $\ld(\tilde{G})$, or
there exists $(\tilde{f}', \mu') \in J$ such that $\ld(\tilde{f}')$ is a Janet divisor of $\ld(\tilde{f})$,
but the degree of $\tilde{f}$ in $\ld(\tilde{f})$ is smaller than the degree of $\tilde{f}'$ in $\ld(\tilde{f}')$.
In the first case the above claim holds.
The second case cannot repeat indefinitely: First of all, if $\ld(\tilde{f}) = \ld(\tilde{f}')$, then
in a later round, either a pseu\-do-re\-duc\-tion of $\tilde{f}'$ modulo $\tilde{f}$ will be performed if
the initial of $\tilde{f}$ does not vanish, or $\init(\tilde{f}) = 0$ has been added as a new equation
(with lower ranked leader).
Since this leads to a sequence in $\Mon(\setauto)$ which strictly decreases, infinite
chains are excluded in this situation.
If case $\ld(\tilde{f}) \neq \ld(\tilde{f}')$ occurs repeatedly, then a sequence $((\theta_i \, \tilde{u}^{(\alpha)})^{e_i})_{i = 1, 2, 3, \ldots}$
of leaders of newly inserted equations arises,
where $\theta_i \in \Mon(\setauto)$, $\alpha \in \{ 1, \ldots, m \}$,
$e_i \in \Z_{\ge 0}$, such that $e_{i+1} < e_i$ holds (and where also $\theta_i \mid \theta_{i+1}$).
Any such sequence is finite. Hence, the first case arises after finitely many steps.
Therefore, termination follows from Dickson's Lemma.

In order to prove correctness, we note that a difference system is only inserted into $T$ if step~12 confirmed passivity. Such a system is quasi-simple as an algebraic system because (up to auto-reduction in step~9 and Janet completion in step~11) it was returned as one system $A_i$ in step~4.
Condition~(\ref{de:differencesimple_c}) in Definition~\ref{de:differencesimple} is ensured by step~14.
Hence, all difference systems in $T$ are quasi-simple.
Splitting of a system only arises in step~4 by adding an equation $\init(\tilde{f}) = 0$ and the corresponding inequation $\init(\tilde{f}) \neq 0$, respectively, to the two new systems replacing the given one.
Since no solutions are lost or gained, this leads to a partition as required by Definition~\ref{de:differencedecomposition}.\qed
\end{proof}

%
%

\begin{algorithm}
\DontPrintSemicolon\SetCommentSty{textit}
\KwInput{A simple differential system $S$ over $\differentialring$, a differential ranking $\differentialranking$ on $\differentialring$, a difference ranking $>$ on $\differencering$, a total ordering on $\setauto$ (used by \algref{Decompose}) and a difference system $\tilde{S}$ consisting of equations that are w-consistent with $S$}
\KwOutput{$\tilde{L} = \{ (\tilde{L}_1, b_1), \ldots, (\tilde{L}_r, b_r) \}$, where $\tilde{L}_i$ is s-consistent (resp.\ w-consistent) with $L_i\xleftarrow[|\mathbf{h}|\rightarrow 0]{}  \tilde{L}_i$ if $b_i = \text{\bf true}$ (resp.\ {\bf false})}
$\tilde{L}=\{ \tilde{L}_1, \ldots, \tilde{L}_k \} \gets \algref{DifferenceDecomposition}(\tilde{S}, >)$\;
\For{$i = 1$, \ldots, $k$}{
    \If(\tcp*[f]{Definition~\ref{Implication}}){$\exists \tilde{f}\in \tilde{L}_i^{\neq}$ such that $\tilde{f}\rhd \mathcal{F}$ with $\mathcal{F} \cap \llbracket S^{=} \rrbracket \neq \emptyset$}{
      $\tilde{L} \gets \tilde{L}\setminus \{ \tilde{L}_i \}$\;
    }
    \Else{
      $b_i \gets \text{\bf true}$\;
      \For{$\tilde{f}\in \tilde{L}^{=}$}{
         compute $\mathcal{F} \subset \differentialring$ such that $\tilde{f} \rhd \mathcal{F}$\tcp*[r]{Remark~\ref{ContLim}}
         \If{$\exists f \in \mathcal{F}$ such that $\NF(f,S^{=},\differentialranking)\neq 0$}{
             $b_i \gets \text{\bf false}$; \, {\bf break}\;
         }
      }
    }
}
\Return $\{ \, (\tilde{L}_i, b_i) \mid \tilde{L}_i \in \tilde{L} \, \}$\;
\caption{\algref{S-ConsistencyCheck}\label{alg:discretization}}
\end{algorithm}

Given a simple differential system and its w-consistent discretization on the regular grid~\eqref{grid},  Algorithm 5 allows to verify strong consistency of the latter.

{\it Correctness} of the algorithm follows from Definition~\ref{def-wcons} (extended to inequations),
Definition~\ref{def-scon}
and passivity of the subsystems returned by Algorithm~\ref{alg:differencedecomp}. Their solution spaces partition the solution space of the input FDA. Thereby,
any subsystem $\tilde{L}_i$ in the output with $b_i=$ {\bf true} is s-consistent with
$L_i$, where
\[
\tilde{L}_i\xrightarrow[|\mathbf{h}|\rightarrow 0]{}  {L}_i
\]
and w-consistent if $b_i =$ {\bf false}.
If $b_i =$ {\bf true} for all $i$, then $\tilde{S}$ is s-consistent with $S$.
{\it Termination} follows from that of the subalgorithms.

%
%
\section{Examples of quasi-linear systems}\label{sec:examples}

In this section we consider two systems of quasi-linear PDEs for unknown functions of two independent variables.

\begin{example}\label{ex:Example1}
Let us consider the overdetermined PDE system
\begin{equation}\label{eq:PDEsystem1}
\left\{ \begin{array}{rcl}
\displaystyle u_x - u^2 & = & 0\,,\\[0.5em]
\displaystyle u_y + u^2 & = & 0\,,
\end{array} \right.
\qquad \qquad u = u(x,y)\,.
\end{equation}
Since the cross-derivative
$\partial_y (u_{x} - u^2) - \partial_x (u_{y} + u^2)$
reduces to zero modulo the given equations, the differential system \eqref{eq:PDEsystem1} is simple
(with respect to any ranking).
The exact general solution of \eqref{eq:PDEsystem1} can easily be found with Maple:
\begin{equation}\label{exact:sol}
u(x, y) = \frac{1}{y - x + C}\,,
\end{equation}
where $C$ is an arbitrary constant (the corresponding
counting polynomial \cite{LangeHegermann} being $\infty$).
For the numerical comparison of the following finite difference approximations we consider the
domain $[0, 10] \times [0, 10]$ with Cartesian grid defined
by $h_1 = h_2 = h = 1/5$ and we shall let $C = 12$. The error will be
computed as
\[
\left(e_{g}\right)_{j,k} := \frac{|g_{j,k}-g(x_j,y_k)|}{1+|g(x_j,y_k)|}\,, \quad \text{maximum error}:=\max_{j,k}\left(e_{g}\right)_{j,k}\,, \
\]
where $g$ is the exact solution~\eqref{exact:sol}.

\medskip

We investigate the system of difference equations
\begin{equation}\label{eq:FDA1}
\left\{ \begin{array}{rcl}
\displaystyle \frac{\tilde{u}_{i+1,j} - \tilde{u}_{i,j}}{h_1} - \tilde{u}_{i,j}^2 & = & 0\,,\\[1em]
\displaystyle \frac{\tilde{u}_{i,j+1} - \tilde{u}_{i,j}}{h_2} + \tilde{u}_{i,j}^2 & = & 0\,,
\end{array} \right.
\end{equation}
which is obtained as discretization of (\ref{eq:PDEsystem1}) by replacing
$u_x$ and $u_y$ by the corresponding forward differences,
with step sizes $h_1$ and $h_2$, respectively.
For simplicity we ignore case distinctions and pursue the generic case only.
We fix an orderly ranking on the difference polynomial ring $\differencering = \differencefield\{ \tilde{u} \}$
with automorphisms $\automorphism_1$, $\automorphism_2$.

Denote by $\tilde{f}_1$ and $\tilde{f_2}$
the left hand sides in \eqref{eq:FDA1}.
Then \eqref{eq:FDA1} is simple as an algebraic system, but the
passivity check (cf.~Definition~\ref{df:PassivityConditions}) reveals the consequence
\[
\begin{array}{rcl}
\displaystyle h_2 \, \automorphism_1 \tilde{f}_2 - h_1 \, \automorphism_2 \tilde{f}_1
& - & \displaystyle
\left( h_2 \, \tilde{u}_{i+1,j} + h_1 \, h_2 \, \tilde{u}_{i,j}^2 + h_2 \, \tilde{u}_{i,j} - 1 \right) h_1 \, \tilde{f}_1\\[0.5em]
& - & \displaystyle \left( h_1 \, \tilde{u}_{i,j+1} - h_1 \, h_2 \, \tilde{u}_{i,j}^2 + h_1 \, \tilde{u}_{i,j} + 1 \right) h_2 \, \tilde{f}_2\\[0.5em]
\multicolumn{3}{l}{\, \, = \, \, h_1 \, h_2 \left( h_1 + h_2 \right) \tilde{u}_{i,j}^4\,.}
\end{array}
\]
Note that $h_1 \, h_2 \, (h_1 + h_2)$ is non-zero. By adding $\tilde{f}_3 := \tilde{u}_{i,j}^4$ to system (\ref{eq:FDA1})
we obtain the quasi-simple difference system $\{ \tilde{f}_1 = 0, \, \tilde{f}_2 = 0, \, \tilde{f}_3 = 0 \}$.
The continuous limit of $\tilde{f}_3$ for $h_1 \to 0$, $h_2 \to 0$
is the differential polynomial $u^4$, which is not in the radical differential ideal corresponding to \eqref{eq:PDEsystem1}.
Hence, \eqref{eq:FDA1} is not s-consistent with \eqref{eq:PDEsystem1}.

Figure~\ref{fig4} shows the error computed for FDA~\eqref{eq:FDA1} relative to the exact solution~\eqref{exact:sol}. Hereafter, in our numerical computation we chose the grid spacings as $h_1=h_2=h=0.2$. The computed error takes maximum value at the point $(x=10,y=0)$, which is closest to the pole in~\eqref{exact:sol}.  Because of s-inconsistency of this discretization we did not compute the associated \emph{modified PDE} (cf.\ \cite{ZhangGerdtBlinkov}). In general terms a modified PDE for a given FDA is one that a numerical solution to FDA satisfies to a higher accuracy than the initial PDE (see, for example, the textbooks~\cite{Moin:2010}, Sections~5.5--5.6 and~\cite{Thomas:ME}, Section~7.7). The method of modified equation provides a useful tool for evaluating such important properties of finite difference schemes as order of accuracy, consistency, stability, dissipation and dispersion.

\begin{figure}
\begin{center}
\includegraphics[width=1.0\textwidth]{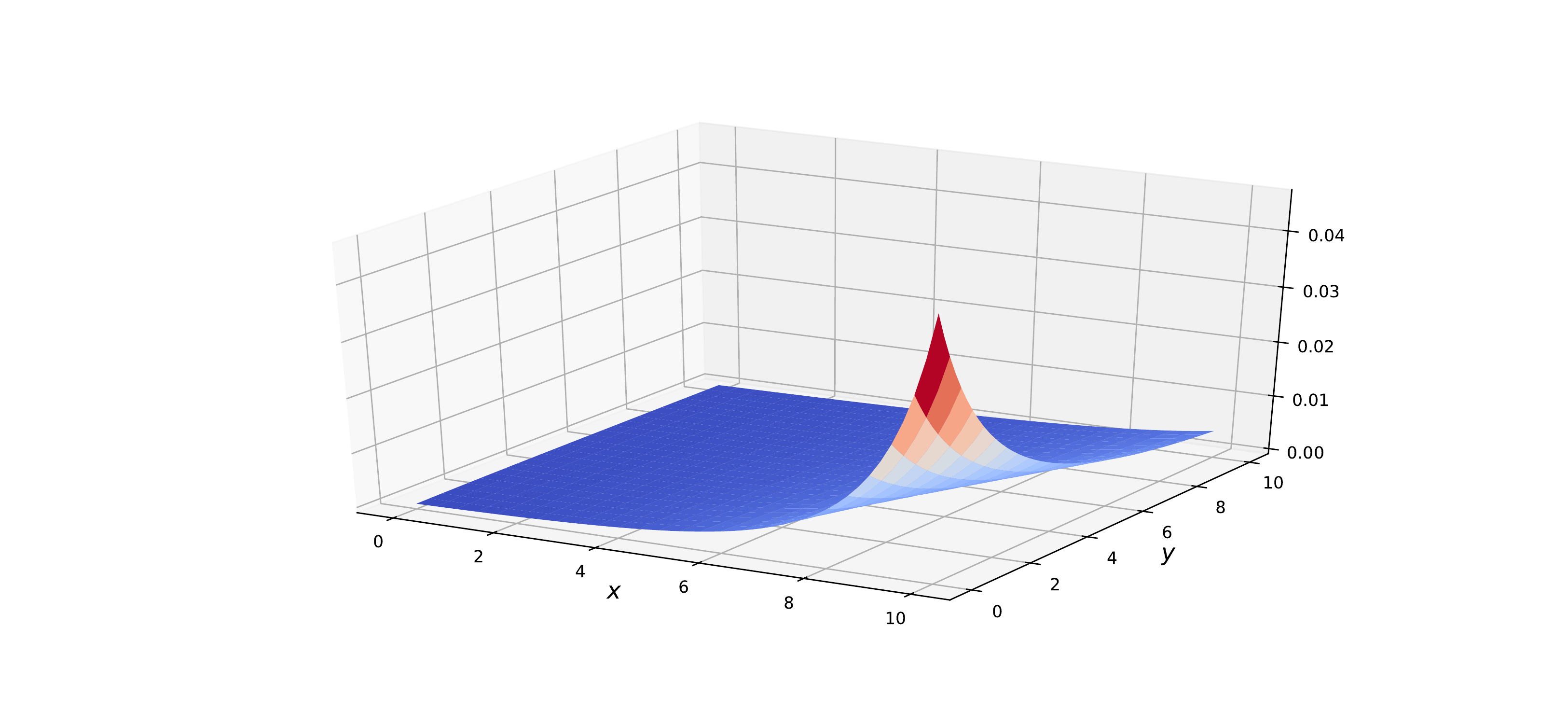}
\end{center}
\caption{Computed error for discretization \eqref{eq:FDA1}\label{fig4}, maximum error = 0.046222672401600905}
\end{figure}

\medskip

Next we consider the discretization obtained by replacing
$u_x$ by the forward difference as before and $u_y$ by the backward difference:
\begin{equation}\label{eq:FDA2}
\left\{
\begin{array}{lcl}
\displaystyle \frac{\tilde{u}_{i+1,j} - \tilde{u}_{i,j}}{h_1} - \tilde{u}_{i,j}^2 & = & 0\,,\\[1em]
\displaystyle \frac{\tilde{u}_{i,j+1} - \tilde{u}_{i,j}}{h_2} + \tilde{u}_{i,j+1}^2 & = & 0\,,
\end{array}
\right.
\end{equation}
again with step sizes $h_1$ and $h_2$, respectively. Denote by $\tilde{f}'_1$ and $\tilde{f}'_2$
the left hand sides in (\ref{eq:FDA2}).
The passivity check reveals the consequence (with underlined leader)
\[
\begin{array}{rcl}
\!\!\!\!\!\! & & \displaystyle
h_2^3 \left( \automorphism_1 \tilde{f}'_2 - h_1 \tilde{u}_{i+1,j+1} \, \automorphism_2 \tilde{f}'_1 \right)
- \left( h_1 h_2 \tilde{u}_{i,j+1}^2 + h_2 \, \tilde{u}_{i,j+1} + 1 \right) h_1 \, h_2^2 \, \automorphism_2 \tilde{f}'_1 \\[0.5em]
\!\!\!\!\!\! & & - \left( h_1^2 \, h_2^2 \, \tilde{u}_{i,j+1}^2 - h_1 \, h_2 \, (h_1 - 2 \, h_2) \, \tilde{u}_{i,j+1} +
h_1^2 \, h_2 \, \tilde{u}_{i,j} + h_1^2 - h_1 \, h_2 + h_2^2 \right) h_2 \, \tilde{f}'_2\\[0.5em]
\!\!\!\!\!\! & = & - h_1 \left( h_1 - h_2 \right)
\left( \left( 2 \, h_2 \, \tilde{u}_{i,j} + 1 \right)
\underline{\tilde{u}_{i,j+1}} - h_2 \, \tilde{u}_{i,j}^2 - \tilde{u}_{i,j} \right).
\end{array}
\]
The continuous limit of this difference polynomial $\tilde{f}'_3$ is given by
\[
\left( u_y + u^2 \right) h_1 \, h_2^2 - \left( u_y + u^2 \right) h_1^2 \, h_2\,.
\]
Now a pseudo-reduction of $\tilde{f}'_2$ modulo $\tilde{f}'_3$ yields
\[
\begin{array}{rlcl}
& \displaystyle
h_1 \, h_2 \left( h_1 - h_2 \right) \left( 2 \, h_2 \, \tilde{u}_{i,j} + 1 \right) \tilde{f}'_2\\[0.5em]
+ & \displaystyle
\left( h_2 \, (2 \, h_2 \, \tilde{u}_{i,j} + 1) \, \tilde{u}_{i,j+1} +
h_2^2 \, \tilde{u}_{i,j}^2 + 3 \, h_2 \, \tilde{u}_{i,j} + 1 \right) \tilde{f}'_3
& = & \displaystyle
h_1 \, h_2^3 \left( h_1 - h_2 \right) \tilde{u}_{i,j}^4\,.
\end{array}
\]
For $h_1 \neq h_2$ we define $\tilde{f}'_4 := \tilde{u}_{i,j}^4$ and obtain
the quasi-simple difference system $\{ \tilde{f}'_1 = 0, \, \tilde{f}'_2 = 0, \, \tilde{f}'_4 = 0 \}$,
and we conclude that (\ref{eq:FDA2}) is not s-consistent with (\ref{eq:PDEsystem1}).
However, if $h_1 = h_2$, then (\ref{eq:FDA2}) is a simple difference system
and it is s-con\-sis\-tent with \eqref{eq:PDEsystem1}.

\begin{figure}
\begin{center}
\includegraphics[width=1.0\textwidth]{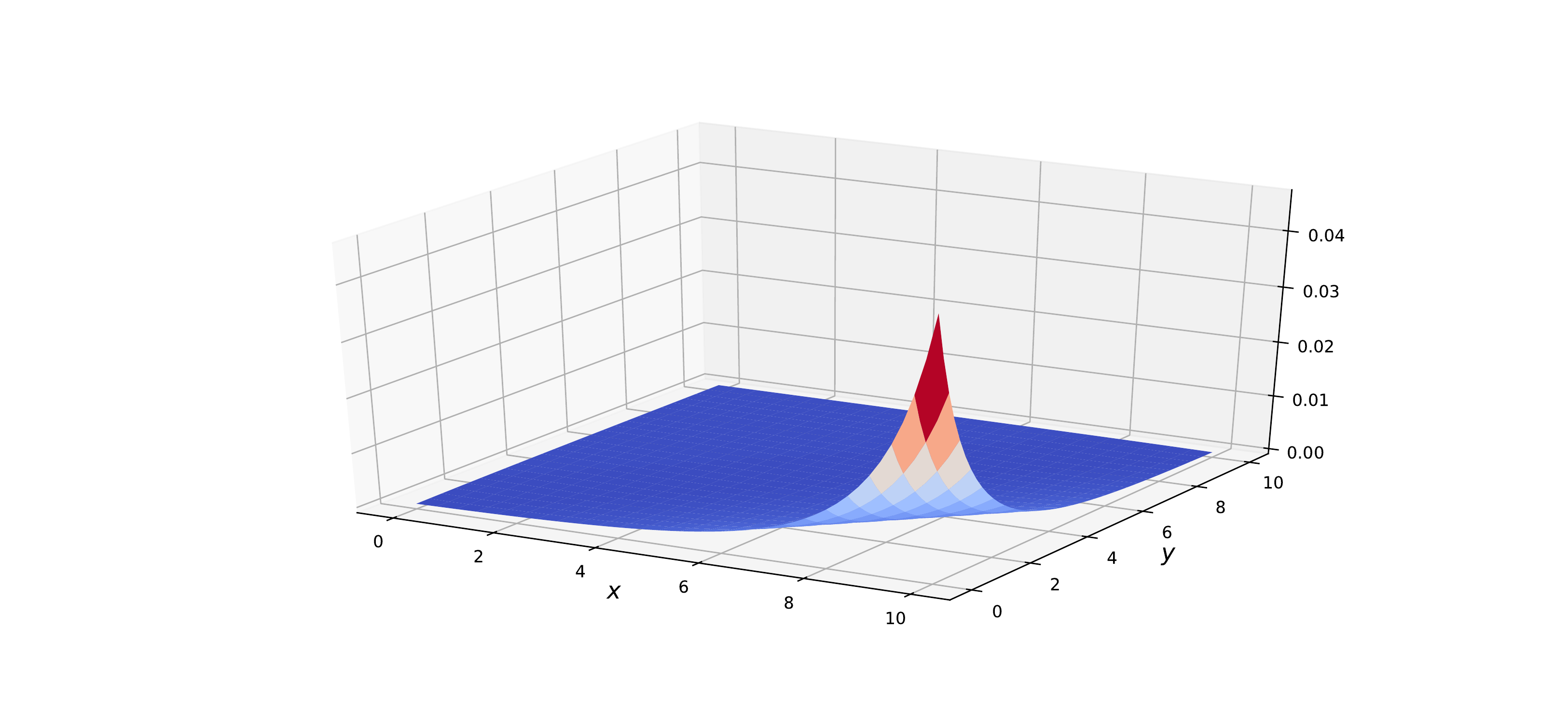}
\end{center}
\caption{Computed error for discretization~\eqref{eq:FDA2}\label{fig5}, maximum error = 0.038621319610305495}
\end{figure}





Now we perform the Taylor expansion of the left-hand sides in Eqs.~\eqref{eq:FDA2} with $h_1=h_2=h$ and explicitly write the first-order terms
\begin{equation}\label{TaylorExpansion:56}
\left\{
\begin{array}{l}
f_1:=\displaystyle u_x-u^2 + \frac{u_{x,x}}{2}h + \mathcal{O}(h^2)=0           \,,\\[1em]
f_2:=\displaystyle u_y +u^2 +\left(2uu_y+\frac{u_{y,y}}{2}\right)h + \mathcal{O}(h^2)=0\,.
\end{array}
\right.
\end{equation}
From Eqs.~\eqref{TaylorExpansion:56} we obtain the modified PDE for scheme~\eqref{eq:FDA2}   
\begin{equation}\label{ModifiedEquation:56}
\left\{
\begin{array}{l}
f_1- \displaystyle \left(\frac{1}{2}(f_1)_x + uf_1\right)h  = \displaystyle u_x-u^2 + u^3h + \mathcal{O}(h^2)=0           \,,\\[1em]
f_2- \displaystyle \left(\frac{1}{2}(f_2)_y + uf_2\right)h  = u_y +u^2 - u^3h + \mathcal{O}(h^2)=0\,,
\end{array}
\right.
\end{equation}
which shows that scheme~\eqref{eq:FDA2} has first order accuracy. Furthermore, Eqs.~\eqref{ModifiedEquation:56} allow an obvious modification (cf.~\cite{Samarskii'01}, p.~80) of FDA~\eqref{eq:FDA2} to one with second order accuracy given by
\begin{equation}\label{eq:FDA2_mod}
\left\{
\begin{array}{lcl}
\displaystyle \frac{\tilde{u}_{i+1,j} - \tilde{u}_{i,j}}{h} - \tilde{u}_{i,j}^2 -h\tilde{u}_{i,j}^3 & = & 0\,,\\[1em]
\displaystyle \frac{\tilde{u}_{i,j+1} - \tilde{u}_{i,j}}{h} + \tilde{u}_{i,j+1}^2 +h\tilde{u}_{i,j}^3 & = & 0\,.
\end{array}
\right.
\end{equation}

The corresponding decrease of numerical error for FDA~\eqref{eq:FDA2_mod} in comparison with FDA~\eqref{eq:FDA2} (Fig.~\ref{fig5}) is shown in Fig.\,\ref{fig6}.
\begin{figure}
\begin{center}
\includegraphics[width=1.0\textwidth]{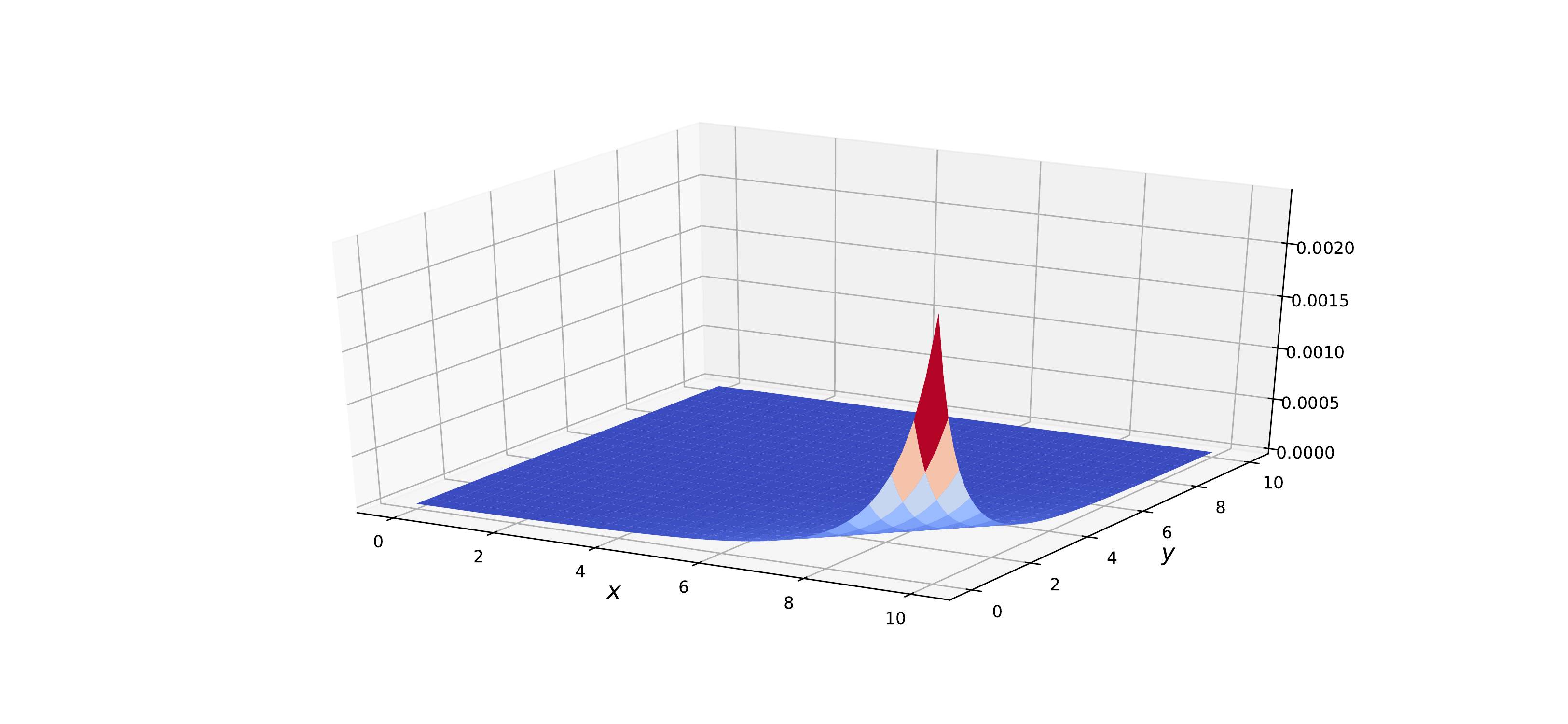}
\end{center}
\caption{Computed error for discretization~\eqref{eq:FDA2_mod}\label{fig6}, maximum error = 0.002446111801807538}
\end{figure}

Now we consider the discretization obtained by replacing both
$u_{x}$ and $u_y$ by central differences:
\begin{equation}\label{eq:FDA3}
\left\{
\begin{array}{rcl}
\displaystyle \frac{\tilde{u}_{i+2,j} - \tilde{u}_{i,j}}{2 \, h_1} - \tilde{u}_{i+1,j}^2 & = & 0\,,\\[1em]
\displaystyle \frac{\tilde{u}_{i,j+2} - \tilde{u}_{i,j}}{2 \, h_2} + \tilde{u}_{i,j+1}^2 & = & 0\,,
\end{array}
\right.
\end{equation}
with step sizes $h_1$ and $h_2$, respectively. Denote by $\tilde{f}''_1$ and $\tilde{f}''_2$
the left hand sides in \eqref{eq:FDA3}.
The passivity check yields the consequence (with underlined leaders)
\[
\begin{array}{rcl}
& & \! \displaystyle
\Big( \, h_2 \, \automorphism_1^2 \tilde{f}''_2 - h_1 \, \automorphism_2^2 \tilde{f}''_1
- 2 \, h_1 \, h_2 \left( \tilde{u}_{i+2,j+1} + 2 \, h_1 \, \tilde{u}_{i+1,j+1}^2 + \tilde{u}_{i,j+1} \right) \automorphism_2 \tilde{f}''_1\\[0.5em]
& & \quad \quad
- 2 \, h_1 \, h_2 \left( \tilde{u}_{i+1,j+2} - 2 \, h_2 \, \tilde{u}_{i+1,j+1}^2 + \tilde{u}_{i+1,j} \right) \automorphism_1 \tilde{f}''_2
+ h_1 \, \tilde{f}''_1 - h_2 \, \tilde{f}''_2 \, \Big) \, / \, 4\\[0.5em]
& = &
h_1 \, h_2 \, \underline{\tilde{u}_{i+1,j+1}^2} \left( (h_1 + h_2) \,
\underline{\tilde{u}_{i+1,j+1}^2} - \tilde{u}_{i+1,j} + \tilde{u}_{i,j+1} \right).
\end{array}
\]
The continuous limit of this difference polynomial $\tilde{f}''_3$ is given by
\[
u^2 \left( u_y + u^2 \right) h_1 \, h_2^2 - u^2 \left( u_x - u^2 \right) h_1^2 \, h_2\,.
\]
Further passivity checks yield the difference polynomial $\tilde{f}''_4$ defined by
\[
\begin{array}{rcl}
\displaystyle \automorphism_1 \tilde{f}''_3 & - & \displaystyle
2 \, h_1^2 \, h_2 \, \Big[ \, (h_1 + h_2) \, \Big( \,
\tilde{u}_{i+2,j+1}^2 + 4 \, h_1^2 \, \tilde{u}_{i+1,j+1}^4 +
4 \, h_1 \, \tilde{u}_{i,j+1} \, \tilde{u}_{i+1,j+1}^2\\[0.75em]
& + & \displaystyle
\tilde{u}_{i,j+1}^2 \, \Big)
+ \tilde{u}_{i+1,j+1} - \tilde{u}_{i+2,j} \, \Big]
\, (\tilde{u}_{i+2,j+1} + 2 \, h_1 \, \tilde{u}_{i+1,j+1}^2 + \tilde{u}_{i,j+1}) \,
\automorphism_2 \tilde{f}''_1\\[0.75em]
& + & \displaystyle
2 \, h_1^2 \, h_2 \left( 2 \, h_1 \tilde{u}_{i+1,j+1}^2 + \tilde{u}_{i,j+1} \right)^2 \tilde{f}''_1
\, - \, 4 \, h_1^2 \left( 2 \, h_1 \, \tilde{u}_{i+1,j+1}^2 + \tilde{u}_{i,j+1} \right)^2 \tilde{f}''_3\\[0.75em]
\multicolumn{3}{l}{\, \, = \, \, h_1 \, h_2
\left( 2 \, h_1 \, \underline{\tilde{u}_{i+1,j+1}^2} + \tilde{u}_{i,j+1} \right)^2
\Big( \, 4 \, h_1 \, (h_1 \, \tilde{u}_{i+1,j} + h_2 \, \tilde{u}_{i,j+1}) \, \underline{\tilde{u}_{i+1,j+1}^2} \, +}\\[0.75em]
\multicolumn{3}{l}{\qquad \qquad \qquad \qquad \qquad \qquad \quad
\underline{\tilde{u}_{i+1,j+1}} + (h_1 + h_2) \, \tilde{u}_{i,j+1}^2 - 2 \, h_1 \, \tilde{u}_{i+1,j}^2 - \tilde{u}_{i,j} \, \Big)\,,}
\end{array}
\]
as well as the difference polynomial $\tilde{f}''_5$ defined by
\[
\begin{array}{rcl}
\displaystyle \automorphism_2 \tilde{f}''_3
& - & \displaystyle
2 \, h_1 \, h_2^2 \, \Big[ \, (h_1 + h_2) \, \Big( \,
\tilde{u}_{i+1,j+2}^2 + \, 4 \, h_2^2 \, \tilde{u}_{i+1,j+1}^4 -
4 \, h_2 \, \tilde{u}_{i+1,j} \, \tilde{u}_{i+1,j+1}^2\\[0.75em]
& + & \displaystyle
\tilde{u}_{i+1,j}^2 \, \Big)
- \tilde{u}_{i+1,j+1} + \tilde{u}_{i,j+2} \, \Big]
\, (\tilde{u}_{i+1,j+2} - 2 \, h_2 \, \tilde{u}_{i+1,j+1}^2 + \tilde{u}_{i+1,j}) \,
\automorphism_1 \tilde{f}''_2\\[0.75em]
& - & \displaystyle
2 \, h_1 \, h_2^2 \left( 2 \, h_2 \, \tilde{u}_{i+1,j+1}^2 - \tilde{u}_{i+1,j} \right)^2 \tilde{f}''_2
\, - \, 4 \, h_2^2 \left( 2 \, h_2 \, \tilde{u}_{i+1,j+1}^2 - \tilde{u}_{i+1,j} \right)^2 \tilde{f}''_3\\[0.75em]
\multicolumn{3}{l}{\, \, = \, \, h_1 \, h_2
\left( 2 \, h_2 \, \underline{\tilde{u}_{i+1,j+1}^2} - \tilde{u}_{i+1,j} \right)^2
\Big( \, 4 \, h_2 \, (h_1 \, \tilde{u}_{i+1,j} + h_2 \, \tilde{u}_{i,j+1}) \, \underline{\tilde{u}_{i+1,j+1}^2} \, +}\\[0.75em]
\multicolumn{3}{l}{\qquad \qquad \qquad \qquad \qquad \qquad \quad
\underline{\tilde{u}_{i+1,j+1}} - (h_1 + h_2) \, \tilde{u}_{i+1,j}^2 + 2 \, h_2 \, \tilde{u}_{i,j+1}^2 - \tilde{u}_{i,j} \, \Big)\,.}
\end{array}
\]
The continuous limit of the difference polynomial $\tilde{f}''_4 + \tilde{f}''_5$
is given by
\[
2 \, u \, u_y \, (u_y + 2 \, u^2) \, h_1 h_2^3 +
4 \, u^3 \, (u_x + u_y) \, h_1^2 h_2^2
-2 \, u \, u_x \, (u_y - 2 \, u^2) \, h_1^3 h_2\,,
\]
whose Janet normal form modulo \eqref{eq:PDEsystem1} is
\begin{equation}\label{EqualSpacings}
2 \, h_1 \, h_2 \, (h_1 - h_2) \, (h_1 + h_2) \, u^5\,.
\end{equation}
Since the differential polynomial $u^5$ is not in the
radical differential ideal corresponding to \eqref{eq:PDEsystem1},
we conclude that \eqref{eq:FDA3} is not s-consistent with \eqref{eq:PDEsystem1}
unless $h_1 = h_2$.

\medskip

Let $h_1 = h_2 = h$ in \eqref{eq:FDA3}.
If one handles this case along the same lines as above, one may encounter
an enormous growth of expressions. We demonstrate here how to benefit from
applying inverse shifts to difference polynomials when possible,
i.e., when no indeterminates with negative shifts are introduced by this process.
Note that the perfect closure of the difference ideal $\differenceideal$ generated by (\ref{eq:FDA3})
contains the reflexive closure of $\differenceideal$, i.e., all $\tilde{f} \in \differencering$ such
that $\automorphism \tilde{f} \in \differenceideal$ for some $\automorphism \in \Mon(\setauto)$.

Denote that left hand sides of \eqref{eq:FDA3}, for $h_1 = h_2 = h$,
again by $\tilde{f}''_1$ and $\tilde{f}''_2$. Similarly to the previous discussion,
the passivity check yields a difference polynomial
\[
\tilde{f}''_3 \, \, := \, \, \underline{\tilde{u}_{i+1,j+1}^2} \left( 2 \, h \,
\underline{\tilde{u}_{i+1,j+1}^2} - \tilde{u}_{i+1,j} + \tilde{u}_{i,j+1} \right).
\]
The difference polynomial
\[
\tilde{f}''_3 + 2 \, h \, \tilde{u}_{i+1,j+1}^2 \, (\automorphism_2 \tilde{f}''_1 - \automorphism_1 \tilde{f}''_2)
\, = \, \tilde{u}_{i+1,j+1}^2 \, (\tilde{u}_{i+2,j+1} - \tilde{u}_{i+1,j+2} - 2 \, h \, \tilde{u}_{i+1,j+1}^2)
\]
can be shifted back by one step in each of the two grid directions, producing
\[
\tilde{f}''_6 \, \, := \, \,
\tilde{u}_{i,j}^2 \, (\underline{\tilde{u}_{i+1,j}} - \tilde{u}_{i,j+1} - 2 \, h \, \tilde{u}_{i,j}^2)\,.
\]
The continuous limit of this difference polynomial is given by
\[
u^2 \left( u_x - u_y - 2 \, u^2 \right) h\,.
\]
Passivity checks yield
\[
\tilde{f}''_7 \, \, := \, \,
\automorphism_1 \tilde{f}''_6 - 2 \, h \, \tilde{u}_{i+1,j}^2 \, \tilde{f}''_1 \, \, = \, \,
-\tilde{u}_{i+1,j}^2 \, (\underline{\tilde{u}_{i+1,j+1}} - \tilde{u}_{i,j})
\]
and
\[
\tilde{f}''_8 \, \, := \, \,
\automorphism_2 \tilde{f}''_6 + 2 \, h \, \tilde{u}_{i,j+1}^2 \, \tilde{f}''_2 \, \, = \, \,
\tilde{u}_{i,j+1}^2 \, (\underline{\tilde{u}_{i+1,j+1}} - \tilde{u}_{i,j})\,,
\]
whose continuous limits are given by
\[
\pm u^2 \, (u_x + u_y) \, h\,.
\]
We obtain the decomposition into simple difference systems
\[
\left\{ \begin{array}{rcl}
\tilde{f}''_8 \, \, = \, \, \tilde{u}_{i,j+1}^2 \, (\underline{\tilde{u}_{i+1,j+1}} - \tilde{u}_{i,j}) & = & 0\,,\\[0.5em]
\tilde{f}''_6 \, \, = \, \, \tilde{u}_{i,j}^2 \, (\underline{\tilde{u}_{i+1,j}} - \tilde{u}_{i,j+1} - 2 \, h \, \tilde{u}_{i,j}^2) & = & 0\,,\\[0.5em]
\underline{\tilde{u}_{i,j}} & \neq & 0\,,
\end{array} \right.
\qquad \vee \qquad
\left\{ \begin{array}{rcl}
\phantom{x}\\[0.2em]
\tilde{u}_{i,j} & = & 0\,,\\[0.2em]
\phantom{x}
\end{array}\right.
\]
the first one confirming s-consistency of (\ref{eq:FDA3}) with (\ref{eq:PDEsystem1})
provided $h_1 = h_2 = h$.

Now we present our numerical experiments with difference equations in FDA \eqref{eq:FDA3} provided $h_1=h_2=h$. Again we perform the Taylor expansion of their left-hand sides up to terms of order $h^2$
\begin{equation}\label{TaylorExpansion:60}
\left\{
\begin{array}{lcl}
g_1:&=&\displaystyle u_x-u^2 + \left(u_{x,x}+u_{x,y}-2uu_x-2uu_y\right)h  \\[0.5em]
 && -\left((u_{x,x}+2u_{x,y}+u_{y,y})u +(u_x+u_y)^2 \right)h^2  \\[0.6em]
 && +\left(\frac{2}{3}u_{x,x,x} + u_{x,x,y} +\frac{1}{2}u_{x,y,y} \right)h^2      +  \mathcal{O}(h^3)=0           \,,\\[1em]
g_2:&=&\displaystyle u_y+u^2 + \left(u_{y,y}+u_{x,y}+2uu_x+2uu_y\right)h  \\[0.5em]
 && +\left((u_{x,x}+2u_{x,y}+u_{y,y})u +(u_x+u_y)^2 \right)h^2  \\[0.6em]
 && +\left(\frac{1}{2}u_{x,x,y} + u_{x,y,y} +\frac{2}{3}u_{y,y,y} \right)h^2      +  \mathcal{O}(h^3)=0          \,.
\end{array}
\right.
\end{equation}
Based on Eqs.~\eqref{TaylorExpansion:60}, the modified PDE for scheme~\eqref{eq:FDA3} with $h_1=h_2=h$ reads
\begin{equation}\label{ModifiedEquation:60}
\left\{
\begin{array}{lll}
g_1  &-& \displaystyle  (g_1)_xh + \left(\frac{1}{3}(g_1)_{x,x} - \frac{1}{3}(g_1)_xu
- \frac{1}{3}g_1u_x - g_1u^2h^2 \right)h^2 \\[1.em]
 &=& u_x-u^2 + u^4h^2 + \mathcal{O}(h^4)=0           \,,\\[1em]
g_2  &-& \displaystyle  (g_2)_yh + \left(\frac{1}{3}(g_2)_{y,y} + \frac{1}{3}(g_2)_yu
  + \frac{1}{3}g_2u_y - g_2u^2h^2 \right)h^2 \\[1.em]
 &=& u_y+u^2 - u^4h^2 + \mathcal{O}(h^4)=0\,.
\end{array}
\right.
\end{equation}
Thus, the scheme~\eqref{eq:FDA3} has second order accuracy, and the last can be increased to fourth order as follows:
\begin{equation}\label{eq:FDA3_mod}
\left\{
\begin{array}{rcl}
\displaystyle \frac{\tilde{u}_{i+2,j} - \tilde{u}_{i,j}}{2 \, h_1} - \tilde{u}_{i+1,j}^2 - h^2\tilde{u}_{i+1,j}^4 & = & 0\,,\\[1em]
\displaystyle \frac{\tilde{u}_{i,j+2} - \tilde{u}_{i,j}}{2 \, h_2} + \tilde{u}_{i,j+1}^2 + h^2\tilde{u}_{i,j+1}^4 & = & 0\,.
\end{array}
\right.
\end{equation}
The numerical behavior of schemes~\eqref{eq:FDA3} and \eqref{eq:FDA3_mod} in the above described initial value problem for~\eqref{eq:PDEsystem1} with the initial data defined by the exact solution~\eqref{exact:sol} is shown in Fig.~\ref{fig7} and Fig.~\ref{fig8}, respectively. One can see that the experimental numerical accuracy is scaled in accordance with the theoretical accuracy $h^2$ for~\eqref{eq:FDA3} and $h^4$ for \eqref{eq:FDA3_mod}.

\begin{figure}
\begin{center}
\includegraphics[width=1.0\textwidth]{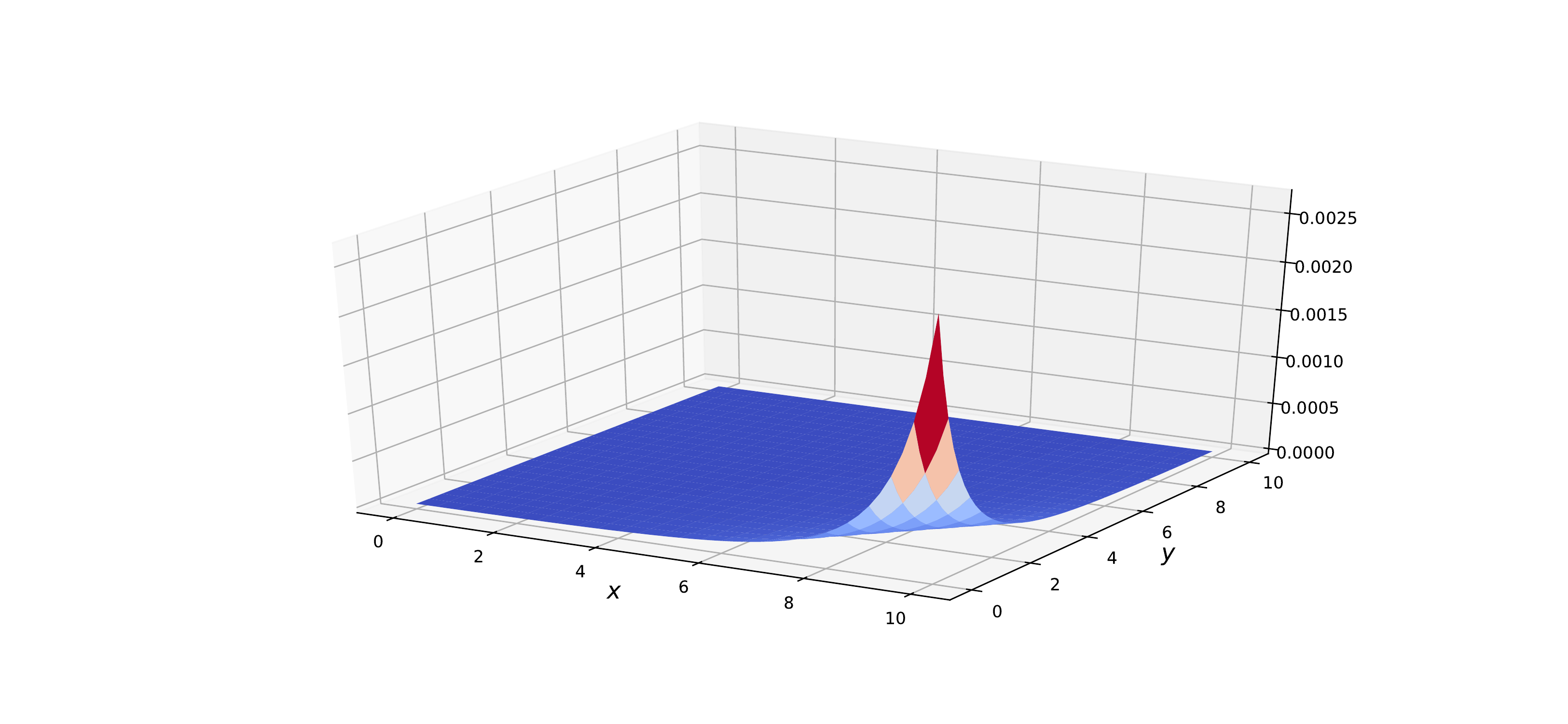}
\end{center}
\caption{Computed error for discretization \eqref{eq:FDA3}\label{fig7}, maximum error = 0.0026833687620488877}
\end{figure}

\begin{figure}
\begin{center}
\includegraphics[width=1.0\textwidth]{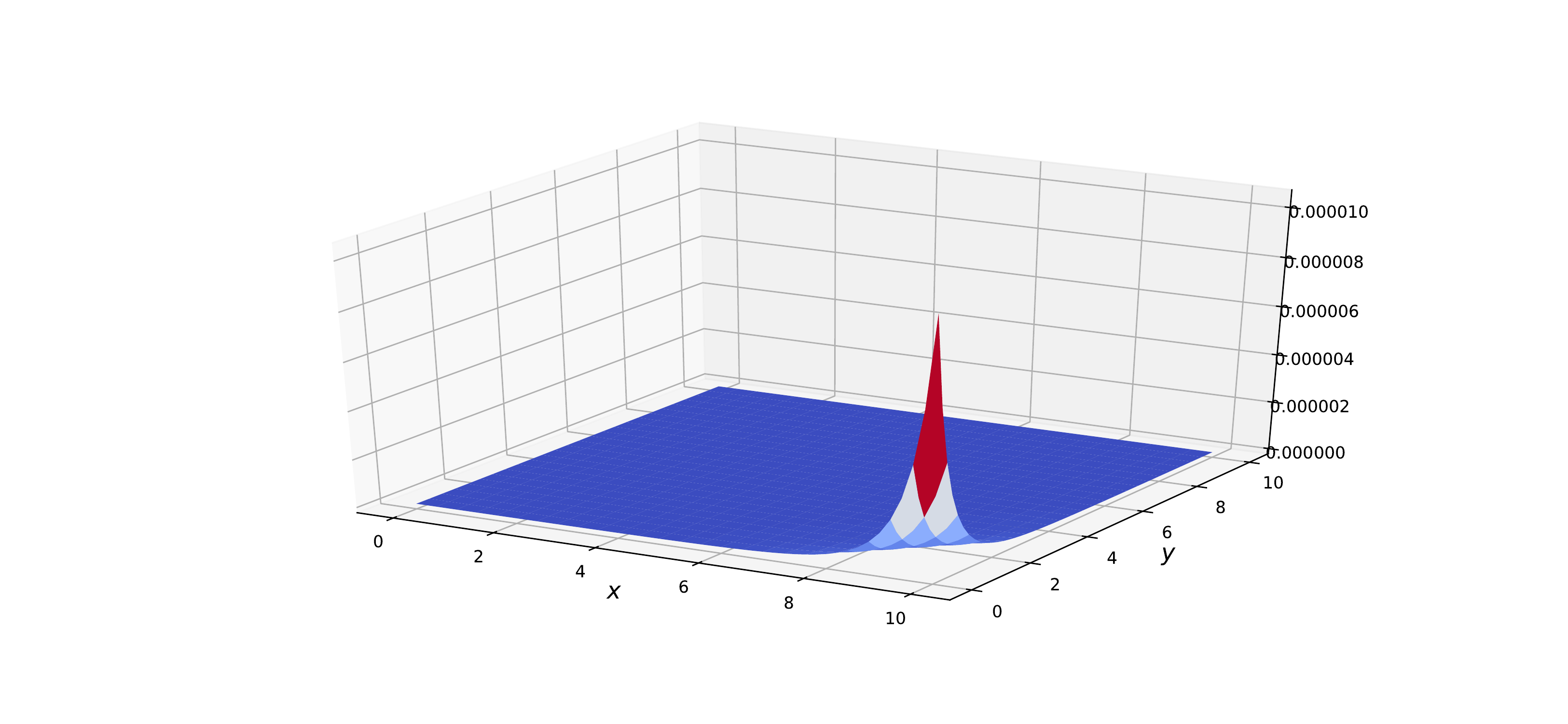}
\end{center}
\caption{Computed error for discretization \eqref{eq:FDA3_mod}\label{fig8}, maximum error = 1.0482200407964845e-05}
\end{figure}

\end{example}


\begin{example}\label{ex:Example2}
Let us consider another quasi-linear system of PDEs
\begin{equation}\label{eq:PDEsystem2}
\left\{ \begin{array}{rcl}
\displaystyle u_x - u \, v & = & 0\,,\\[1em]
\displaystyle u_y + u \, v & = & 0\,,
\end{array} \right.
\qquad \qquad u = u(x,y), \quad v = v(x,y)\,.
\end{equation}
We define the differential polynomial ring $\differentialring = \differentialfield\{ u, v \}$ with commuting
derivations $\partial_x$ and $\partial_y$, and we
use the elimination ranking $\differentialranking$ on $\differentialring$ satisfying
\[
\ldots \differentialranking v_{yy}
\differentialranking v_{xy}
\differentialranking v_{xx}
\differentialranking v_y
\differentialranking v_x
\differentialranking v
\differentialranking \ldots
\differentialranking u_{yy}
\differentialranking u_{xy}
\differentialranking u_{xx}
\differentialranking u_y
\differentialranking u_x
\differentialranking u
\]
The passivity check involves a pseudo-reduction of the second equation in \eqref{eq:PDEsystem2},
multiplied by $u$, modulo the derivative of the first equation with respect to $y$.
Hence, the case distinction $u = 0 \vee u \neq 0$ is made.
If $u = 0$, then \eqref{eq:PDEsystem2} reduces to
\begin{equation}\label{1stSubsystem}
\left\{ \begin{array}{rcl}
u & = & 0\,,\\[0.5em]
v_y & = & 0\,,
\end{array} \right.
\end{equation}
which is a simple differential system. If $u \neq 0$, then pseudo-reduction yields
\[
u \, (v_y + u \, v) + \partial_y \, (u_x - u \, v) \, \, = \, \,
(u^2 - u_y) \, \underline{v} + u_{x,y}\,.
\]
Another pseudo-reduction of the last differential polynomial modulo the
first equation in (\ref{eq:PDEsystem2}) gives the simple differential system
(with underlined leaders)
\begin{equation}\label{eq:simpledifferentialsystem}
\left\{ \begin{array}{rcl}
\displaystyle \underline{u} & \neq & 0\,,\\[0.5em]
\displaystyle u_x - u \, \underline{v} & = & 0\,,\\[0.5em]
\displaystyle u \, \underline{u_{x,y}} + (u^2 - u_y) \, u_x & = & 0\,.
\end{array} \right.
\end{equation}

We investigate the system of difference equations
\begin{equation}\label{eq:FDA_1}
\left\{ \begin{array}{rcl}
\displaystyle \frac{\tilde{u}_{i+1,j} - \tilde{u}_{i,j}}{h_1} - \tilde{u}_{i,j} \, \tilde{v}_{i,j} & = & 0\,,\\[1em]
\displaystyle \frac{\tilde{v}_{i,j+1} - \tilde{v}_{i,j}}{h_2} + \tilde{u}_{i,j} \, \tilde{v}_{i,j} & = & 0\,,
\end{array} \right.
\end{equation}
which is obtained as discretization of (\ref{eq:PDEsystem2}) by replacing
$\partial_{x} u$ and $\partial_{y} v$ by the corresponding forward differences,
with step sizes $h_1$ and $h_2$, respectively. Denote by $\tilde{f}_1$ and $\tilde{f_2}$
the left hand sides in (\ref{eq:FDA_1}).
We use the difference ranking $\differenceranking$ (cf.\ Definition~\ref{DifferenceRanking})
on $\differencering = \differencefield\{ \tilde{u}, \tilde{v} \}$
that corresponds to the differential one, namely
\[
\ldots \differenceranking \tilde{v}_{i+2,j}
\differenceranking \tilde{v}_{i,j+1}
\differenceranking \tilde{v}_{i+1,j}
\differenceranking \tilde{v}_{i,j}
\differenceranking \ldots
\differenceranking \tilde{u}_{i+2,j}
\differenceranking \tilde{u}_{i,j+1}
\differenceranking \tilde{u}_{i+1,j}
\differenceranking \tilde{u}_{i,j}\,.
\]
The passivity check yields the consequence
\[
h_2 \, \tilde{u}_{i,j+1} \, \tilde{f}_2 + \automorphism_2 \tilde{f}_1 \, \, = \, \,
(h_2 \, \tilde{u}_{i,j} - 1) \, \tilde{u}_{i,j+1} \, \underline{\tilde{v}_{i,j}} +
\frac{\tilde{u}_{i+1,j+1} - \tilde{u}_{i,j+1}}{h_1}\,.
\]
The continuous limit of this difference polynomial $\tilde{f}_3$ is given by
\[
u_x - u \, v\,.
\]
The above pseudo-reduction assumed that $\tilde{u}_{i,j+1}$ does not vanish.
If $\tilde{u}_{i,j+1} = 0$, then $\tilde{u}_{i,j} = 0$, and we obtain the simple difference system
\[
\left\{ \begin{array}{rcl}
\tilde{u}_{i,j} & = & 0\,,\\[0.5em]
\tilde{v}_{i,j+1} - \tilde{v}_{i,j} & = & 0\,.
\end{array} \right.
\]
Otherwise, we continue with the generic case by applying pseudo-reduction to $\tilde{f}_3$ modulo $\tilde{f}_1$,
which yields the remainder
\[
\begin{array}{rcl}
& & \tilde{u}_{i,j} \, \tilde{f}_3 + (h_2 \, \tilde{u}_{i,j} - 1) \, \tilde{u}_{i,j+1} \, \tilde{f}_1\\[0.5em]
& = &
(\tilde{u}_{i,j} \, \underline{\tilde{u}_{i+1,j+1}} + (h_2 \, \tilde{u}_{i,j} - 1) \, \tilde{u}_{i,j+1} \, \tilde{u}_{i+1,j} - h_2 \, \tilde{u}_{i,j}^2 \, \tilde{u}_{i,j+1}) / h_1\,.
\end{array}
\]
The continuous limit of this difference polynomial $\tilde{f}_4$ is given by
\[
(u \, \underline{u_{x,y}} + (u^2 - u_y) \, u_x) \, h_2\,.
\]
We obtain the simple difference system
\[
\left\{ \begin{array}{rcl}
h_1 \, \tilde{f}_1 \, \, = \, \, \tilde{u}_{i+1,j} - \tilde{u}_{i,j} - h_1 \, \tilde{u}_{i,j} \, \underline{\tilde{v}_{i,j}} & = & 0\,,\\[0.5em]
h_1 \, \tilde{f}_4 \, \, = \, \, \tilde{u}_{i,j} \, \underline{\tilde{u}_{i+1,j+1}} + (h_2 \, \tilde{u}_{i,j} - 1) \, \tilde{u}_{i,j+1} \, \tilde{u}_{i+1,j} - h_2 \, \tilde{u}_{i,j}^2 \, \tilde{u}_{i,j+1} & = & 0\,,\\[0.5em]
\underline{\tilde{u}_{i,j}} & \neq & 0\,,
\end{array} \right.
\]
which is s-consistent with (\ref{eq:simpledifferentialsystem}).

\medskip

Alternatively, the same difference system (\ref{eq:FDA_1}) can be checked for s-con\-sis\-ten\-cy with (\ref{eq:simpledifferentialsystem})
by using difference Gr\"obner bases.
If we choose the monomial ordering $\succeq \, \, = \, \, \ge^{\rm TOP}_{\rm degrevlex}$ (cf.\ Appendix~\ref{sec:GBtermorder}),
then the leading monomial of
the equations in (\ref{eq:FDA_1}) are the underlined ones in
\[
\tilde{f}_1 \, \, = \, \, (\underline{\tilde{u}_{i+1,j}} - \tilde{u}_{i,j}) / h_1 - \tilde{u}_{i,j} \, \tilde{v}_{i,j}\,,
\qquad
\tilde{f}_2 \, \, = \, \, (\underline{\tilde{v}_{i,j+1}} - \tilde{v}_{i,j}) / h_2 + \tilde{u}_{i,j} \, \tilde{v}_{i,j}\,.
\]
Reduction of the S-polynomial of $\tilde{f}_1$ and $\tilde{f}_2$ modulo (\ref{eq:FDA_1}) yields
\[
\tilde{v}_{i,j+1} \, h_1 \, \tilde{f}_1 - \tilde{u}_{i+1,j} \, h_2 \, \tilde{f}_2
+ h_1 \, h_2 \, \tilde{u}_{i,j} \, \tilde{v}_{i,j} \, (\tilde{f}_1 + \tilde{f}_2)
- h_1 \, \tilde{v}_{i,j} \, (\tilde{f}_1 - \tilde{f}_2) \, \, = \, \, 0\,.
\]
Hence, by Proposition~\ref{B-criterion},
$\tilde{f}_1$ and $\tilde{f}_2$ form a difference Gr\"obner basis (cf.~Definition~\ref{def_SB}) of the
difference ideal defined by (\ref{eq:FDA_1}), confirming s-consistency of the difference approximation~(\ref{eq:FDA_1})
with the PDE system (\ref{eq:PDEsystem2}) again (cf.\ Theorem~\ref{GB:s-constistency}).

\medskip

Next we consider the discretization obtained by replacing
$\partial_{x} u$ by the forward difference as before and $\partial_{y} v$ by the backward difference:
\begin{equation}\label{eq:FDA_2}
\left\{
\begin{array}{rcl}
\displaystyle \frac{\tilde{u}_{i+1,j} - \tilde{u}_{i,j}}{h_1} - \tilde{u}_{i,j} \, \tilde{v}_{i,j} & = & 0\,,\\[1em]
\displaystyle \frac{\tilde{v}_{i,j+1} - \tilde{v}_{i,j}}{h_2} + \tilde{u}_{i,j+1} \, \tilde{v}_{i,j+1} & = & 0\,,
\end{array}
\right.
\end{equation}
again with step sizes $h_1$ and $h_2$, respectively. Denote by $\tilde{f}'_1$ and $\tilde{f}'_2$
the left hand sides in (\ref{eq:FDA_2}).
We use the degree-reverse lexicographical ranking $\differenceranking$
with $\tilde{v} \differenceranking \tilde{u}$
on the difference polynomial ring $\differencering = \differencefield\{ \tilde{u}, \tilde{v} \}$, namely
\[
\ldots \differenceranking \tilde{v}_{i+2,j}
\differenceranking \tilde{u}_{i+2,j}
\differenceranking \tilde{v}_{i,j+1}
\differenceranking \tilde{u}_{i,j+1}
\differenceranking \tilde{v}_{i+1,j}
\differenceranking \tilde{u}_{i+1,j}
\differenceranking \tilde{v}_{i,j}
\differenceranking \tilde{u}_{i,j}\,.
\]
The leaders of $\tilde{f}'_1$ and $\tilde{f}'_2$
are $\tilde{u}_{i+1,j}$ and $\tilde{v}_{i,j+1}$, respectively.
Since these involve different indeterminates $\tilde{u}$ and $\tilde{v}$, passivity is
ensured, but a case distinction regarding the initial of $\tilde{f}'_2$
leads to a splitting. Thus a difference decomposition of (\ref{eq:FDA_2}) is
\[
\left\{ \begin{array}{rcl}
\displaystyle (h_2 \, \tilde{u}_{i,j+1} + 1) \, \underline{\tilde{v}_{i,j+1}} - \tilde{v}_{i,j} & = & 0\,,\\[1em]
\displaystyle \underline{\tilde{u}_{i+1,j}} - \tilde{u}_{i,j} - h_1 \, \tilde{u}_{i,j} \, \tilde{v}_{i,j} & = & 0\,,\\[1em]
\displaystyle h_2 \, \underline{\tilde{u}_{i,j+1}} + 1 & \neq & 0\,,
\end{array} \right.
\qquad \vee \qquad
\left\{ \begin{array}{rcl}
\displaystyle \tilde{v}_{i,j} & = & 0\,,\\[1em]
\displaystyle h_2 \, \tilde{u}_{i,j} + 1 & = & 0\,.
\end{array} \right.
\]
Since the continuous limits of the equations in the first simple system
are in the radical differential ideal corresponding to (\ref{eq:PDEsystem2})
due to w-consistency of (\ref{eq:FDA_2}), we conclude that (\ref{eq:FDA_2})
is s-consistent with (\ref{eq:PDEsystem2}). For $h_2 \to 0$ the second simple
system yields the contradiction $1 = 0$, so that this case and the inequation
in the first simple system can be ignored.

\medskip

Alternatively, using difference Gr\"obner bases, we may choose the
monomial ordering $\succeq \, \, = \, \, \ge^{\rm TOP}_{\rm degrevlex}$ with $\tilde{u} \differenceranking \tilde{v}$
(cf.\ Appendix~\ref{sec:GBtermorder}).
Then the leading monomials of $\tilde{f}'_1$ and $\tilde{f}'_2$ are $\tilde{u}_{i+1,j}$ and $\tilde{u}_{i,j+1} \, \tilde{v}_{i,j+1}$,
respectively. The passivity check yields
\[
\begin{array}{rcl}
& & \sigma_1 \tilde{f}'_2 - h_1 \, \tilde{v}_{i+1,j+1} \, (\sigma_2 \tilde{f}'_1 + \tilde{f}'_2)\\[0.75em]
& = & \displaystyle
\underline{\tilde{u}_{i,j+1} \, \tilde{v}_{i+1,j+1}}
- \frac{h_1 \, \tilde{v}_{i,j+1} \, \tilde{v}_{i+1,j+1} - \tilde{v}_{i+1,j+1} - h_1 \, \tilde{v}_{i,j} \, \tilde{v}_{i+1,j+1} + \tilde{v}_{i+1,j}}{h_2}\,.
\end{array}
\]
The continuous limit of this difference polynomial $\tilde{f}'_3$ is given by
$v_y + u \, v$. A further reduction yields
\[
\begin{array}{rcl}
& & \tilde{v}_{i,j+1} \, \tilde{f}'_3 - \tilde{v}_{i+1,j+1} \, \tilde{f}'_2\\[0.5em]
& = &
-(h_1 \, \underline{\tilde{v}_{i,j+1}^2 \, \tilde{v}_{i+1,j+1}}
- h_1 \, \tilde{v}_{i,j} \, \tilde{v}_{i,j+1} \, \tilde{v}_{i+1,j+1}
- \tilde{v}_{i,j} \, \tilde{v}_{i+1,j+1} + \tilde{v}_{i,j+1} \, \tilde{v}_{i+1,j}) / h_2\,.
\end{array}
\]
The continuous limit of this difference polynomial $\tilde{f}'_4$ is given by
\[
\left( v \, v_{x,y} - v_y \, (v_x + v^2) \right) h_1\,.
\]
Note that the coefficient of $h_1$ is the linear combination
\[
(v \, \partial_{x} - v_x - v^2) \, (v_y + u \, v) - v^2 \, (u_{x} - u \, v)
\]
of the original equations in (\ref{eq:PDEsystem2}).
The reduction of the final S-polynomial is
\[
\begin{array}{l}
h_1 \, \tilde{v}_{i,j+1}^2 \, \tilde{f}'_3 + h_2 \, \tilde{u}_{i,j+1} \, \tilde{f}'_4
- (h_1 \, \tilde{v}_{i,j} \, \tilde{v}_{i,j+1} + \tilde{v}_{i,j}) \, \tilde{f}'_3 \, +\\[0.5em]
\qquad \qquad \qquad \qquad \qquad \qquad
(h_1 \, \tilde{v}_{i,j} - h_1 \, \tilde{v}_{i,j+1} + 1) \, \tilde{f}'_4
+ \tilde{v}_{i+1,j} \, \tilde{f}'_2 \, \, = \, \, 0\,.
\end{array}
\]
Hence, we again conclude that (\ref{eq:FDA_2})
is s-consistent with (\ref{eq:PDEsystem2}).
\end{example}

%
%
\section{Conclusion}\label{sec:conclusion}

We extended the notion of s(trong)-consistency for FDA~\eqref{fda}, introduced and studied  in~\cite{GR'12,G'12,GR19} for the Cartesian grids $(h_1=h_2=\cdots =h_n)$, to the regular ones, where the grid spacings $h_i$ may be pairwise different. This notion for a finite difference discretization~\eqref{fda} of PDE~\eqref{pde} satisfying the condition~\eqref{def-wcons} in Definition 5.1 means that any element $\tilde{f}$ in the difference ideal $[\tilde{F}]$, as well as any in its perfect closure  $\llbracket \tilde{F}\rrbracket$, after appropriate normalization
(cf.\ \eqref{s-cond-consequence}), approximates an element $f\in \llbracket F\rrbracket$ in the radical differential ideal. Thereby,
the algebraic properties of discrete (finite difference) equations, characterized by the perfect difference ideal they generate, {\em mimic} the algebraic properties of differential equations characterized by the radical differential ideal generated by these equations.

By using the method of difference \Gr bases we derived a new s-con\-sis\-tent and conservative FDA~\eqref{DNSE1}--\eqref{DIntCond} to the three-dimensional incompressible Navier-Stokes equations. This discretization allows to solve numerically the last equations in the  pressure-Poisson formulation when the pressure is determined from the Poisson pressure equation and the velocities from the momentum equations. Our numerical experiments with the two-di\-men\-sio\-nal analogue~\eqref{FDA1} of the new scheme have clearly demonstrated its superiority over the other two-dimensional schemes~\eqref{FDA2}--\eqref{FDA4}. In particular, the scheme reveals, at the discrete level, a
surprisingly high accuracy preservation of the mass conservation law (continuity equation).
This law is satisfied by the initial condition, but is not employed in the subsequent construction of the numerical solution.

In general, the techniques of difference \Gr bases cannot be applied to the s-consistency analysis of FDA to nonlinear PDE systems, since termination of Algorithm~\ref{StandardBasis} is not guaranteed. Instead, the fully algorithmic triangular difference Thomas decomposition, designed last year in our conference paper~\cite{GR19} and described in Section~\ref{DifferenceThomasDecomposition}, can be applied. Before its application, we suggest first to apply the differential Thomas decomposition to the input PDE~\eqref{pde}.
Each output subsystem is simple (cf.\ Definition~\ref{de:differentialsimple}),
which, e.g., clarifies the arbitrariness of power series solutions to the system~\cite{LangeHegermann} and thus facilitates formulating well-posed initial value problems in the sense of Hadamard~\cite{Hadamard'1902} (cf.\ \cite[Example~14]{GLHR'18} for
the case of the three-dimensional Navier-Stokes equations).
Disjointness of the decomposition, i.e., partition of the solution space by the output subsystems, allows to confine the investigation to the unique simple system admitting a solution of interest. After a discretization of the input simple differential system providing its w-consistency (cf.\ Definition~\ref{def-wcons}) we apply Algorithm~\ref{alg:differencedecomp} to the FDA. For different simple systems different ways to discretize may be chosen.
Algorithm~\ref{alg:differencedecomp} is the main one, it provides the difference Thomas decomposition into quasi-simple subsystems (cf.\,Definition~\ref{de:differencesimple}). It is based on two subalgorithms: Algorithm~\ref{alg:autoreducenonlin} performing difference auto-re\-duc\-tion and Algorithm~\ref{alg:janetreducenonlin} computing Janet normal forms of difference polynomials. Finally, Algorithm~\ref{alg:discretization} performs the s-con\-sis\-ten\-cy analysis for the input FDA with the simple differential system. Since the difference Thomas decomposition partitions the solution space of the FDA, s-con\-sis\-ten\-cy holds if and only if every difference equation in each output subsystem approximates an element in the radical differential ideal generated by the elements in the input simple differential system. In the recent paper~\cite{Li'20} it is argued that if a differential (or difference) decomposition algorithm terminates on every input, then one can provide a computable upper bound for the size of its output in terms of the input, i.e., an upper bound for number of output subsystems, their order and degree. Because of the termination of both decomposition algorithms, the upper bound estimation approach of paper~\cite{Li'20} is applicable to differential and difference Thomas decompositions. 

For illustration we applied both methods, the one based on difference Gr\"ob\-ner bases and the one based on difference Thomas decomposition, to the s-con\-sisten\-cy analysis of finite difference discretizations of two first-or\-der quasi-li\-near PDE systems (Section 7) with two independent variables. The first PDE system~\eqref{eq:PDEsystem1} is overdetermined and has a consequence of the conservation law form
\begin{equation}\label{CLForm}
\partial_xu+\partial_yu \, \, = \, \, 0\,.
\end{equation}
If one approximates the partial derivatives in Eqs.~\eqref{eq:PDEsystem1} by forward differences, then its difference $S$-polynomial in the continuous limit yields the equation $u^4=0$, which does not follow from Eqs.~\eqref{eq:PDEsystem1}. Therefore, the FDA~\eqref{eq:FDA1} is s-in\-con\-sis\-tent. Another FDA~\eqref{eq:FDA2} combining the forward and backward differences for derivatives yields an $S$-polynomial that shows that s-con\-sis\-tency is equivalent to $h_1 = h_2$.
The third discretization~\eqref{eq:FDA3} based on approximation of both partial derivatives by central differences also has a passivity condition, whose continuous limit \eqref{EqualSpacings} allows to conclude that $h_1 = h_2$ is a necessary condition for s-con\-sis\-ten\-cy.
In case $h_1=h_2=h$ subsequent passivity checks produce rather large expressions. However, by applying backward shifts to intermediate difference polynomials when possible, the s-consistency analysis is drastically simplified, yielding differential polynomials as
continuous limits which occur in the left-hand sides of Eqs.~\eqref{eq:PDEsystem1} or their sum~\eqref{CLForm} or their difference.
Hence, for Cartesian grid, i.e., for equisized grid spacings, both difference approximations~\eqref{eq:FDA2} and~\eqref{eq:FDA3} are s-con\-sis\-tent. This example shows that s-con\-sis\-tency may place constraints on the grid spacings. Furthermore, for both s-consistent FDA~\eqref{eq:FDA2} and~\eqref{eq:FDA3} we constructed modified equations and applied them to analyze the actual accuracy of those FDA and to increase their accuracy. Additionally, we used the exact solution~\eqref{exact:sol} to~\eqref{eq:PDEsystem1} for the numerical construction of this solution, verifying experimentally the theoretically predicted accuracy.

The second quasi-linear PDE system~\eqref{eq:PDEsystem2} of Section~\ref{sec:examples} has two dependent variables. First, we apply to this system the differential Thomas decomposition which splits
Eqs.~\eqref{eq:PDEsystem2} into two simple systems~\eqref{1stSubsystem} and~\eqref{eq:simpledifferentialsystem}. It is easy to see that
any (w-consistent) FDA to system~\eqref{1stSubsystem} is s-consistent due to the lack of passivity conditions. As to FDA for~\eqref{eq:simpledifferentialsystem} one can replace partial derivatives by the corresponding forward differences.  This produces a simple difference system providing an s-consistent approximation to~\eqref{eq:simpledifferentialsystem}.
Moreover, we also established the compatibility of the differential and the difference Thomas decomposition by
starting with \eqref{eq:PDEsystem2} and discretizing its equations by forward differences.
The difference Thomas decomposition again produces the same discrete version of~\eqref{eq:simpledifferentialsystem}.
Alternatively, we applied the method of difference Gr\"ob\-ner bases to verify our results. Next we considered the discretization of~\eqref{eq:PDEsystem2} by using the forward difference to approximate $u_x$ and the backward difference to approximate $v_y$. We established the s-consistency by means of the Gr\"ob\-ner basis method using one of the monomial orderings described in Appendix~\ref{sec:GBtermorder}.

Concerning implementation of (nonlinear) difference Gr\"ob\-ner basis construction, the only one is realized in~\cite{LaScala'15,GLS'15}. There the problem of computation in a difference polynomial ring is reduced to computation in the ring of commutative polynomials whose set of variables is extended with their shifts obtained by the action of the elements in~\eqref{MonoidOfShifts}. In this case, under the assumption of an admissible monomial ordering compatible with the {\em order function} defined in \cite[Def.\,4.1]{GLS'15} and for difference ideals that admit finite Gr\"ob\-ner bases one can use the algorithm designed in \cite[Alg.\,4.1]{GLS'15} to compute such a basis in a finite number of steps. This algorithm, implemented in Maple \cite{GLS'15}, may cause a quite considerable growth of the number of variables involved in the computation, and thus restricts applicability to rather small problems. The difference Thomas decomposition has not yet been implemented. All computations with difference polynomials presented in the paper were done ``by hand'' using Maple for simplification of intermediate expressions.

\bigskip

\noindent
\textbf{Acknowledgements}
The work of V.P.\,Gerdt (Sections 1,3-5,9; Section 6, except Fig.1-5; Algorithm 5 of Section 7) is supported by the Russian Science Foundation under grant No. 20-11-202574.

\appendix
\section{Monomial ordering for difference Gr\"obner basis}\label{sec:GBtermorder}

Using the identification $\tilde{u}^{(r)}_{a_1,a_2} = \automorphism^{\mathbf{a}} \tilde{u}^{(r)}$
for $\mathbf{a} \in (\Z_{\ge 0})^2$, $1 \le r \le m$,
we define a total ordering $\differencemonomialordering$ on the set of monomials in the infinitely many indeterminates
$\automorphism^{\mathbf{a}} \tilde{u}^{(r)}$, where $\mathbf{a} \in (\Z_{\ge 0})^2$, $r \in \{ 1, 2 \}$, as follows:
\[
\begin{array}{l}
\displaystyle
\prod_{i=1}^{d} \automorphism^{\mathbf{a}_i} \tilde{u}^{(r_i)} \, \, \differencemonomialordering \, \,
\prod_{i=1}^{e} \automorphism^{\mathbf{b}_i} \tilde{u}^{(s_i)}
\qquad :\Longleftrightarrow \\[2em]
\left\{ \begin{array}{l}
\displaystyle
\quad
\sum_{i=1}^{d} |\mathbf{a}_i| \, \, > \, \,
\sum_{i=1}^{e} |\mathbf{b}_i| \qquad \mbox{or}
\qquad
\Big( \,
\sum_{i=1}^{d} |\mathbf{a}_i| \, \, = \, \,
\sum_{i=1}^{e} |\mathbf{b}_i| \qquad \mbox{and}\\[1.5em]
\displaystyle
\quad \qquad \quad
( \, (\mathbf{a}_{j_1}, r_{j_1}), \, \ldots, \, (\mathbf{a}_{j_d}, r_{j_d}) \, ) \, \, >_{{\rm lex}} \, \,
( \, (\mathbf{b}_{k_1}, s_{k_1}), \, \ldots, \, (\mathbf{b}_{k_e}, s_{k_e}) \, ) \, \Big)\,,
\end{array}
\right.
\end{array}
\]
where
\[
(\mathbf{a}_{j_1}, r_{j_1}) \, \, \ge \, \,
(\mathbf{a}_{j_2}, r_{j_2}) \, \, \ge \, \,
\ldots \, \, \ge \, \, (\mathbf{a}_{j_d}, r_{j_d})
\]
and
\[
(\mathbf{b}_{k_1}, s_{k_1}) \, \, \ge \, \,
(\mathbf{b}_{k_2}, s_{k_2}) \, \, \ge \, \,
\ldots \, \, \ge \, \, (\mathbf{b}_{k_e}, s_{k_e})
\]
are the tuples $(\mathbf{a}_i, r_i)$ and $(\mathbf{b}_i, s_i)$ arranged
in decreasing order with respect to the ordering $\ge$ used for breaking ties,
and where $>_{{\rm lex}}$ is the lexicographic ordering comparing tuple entries
with respect to $\ge$. The ordering $(\mathbf{a}_{j_i}, r_{j_i}) \ge (\mathbf{b}_{k_i}, s_{k_i})$
is assumed to respect addition of a multi-index $\mathbf{c}$
to $\mathbf{a}_{j_i}$ and $\mathbf{b}_{k_i}$.
In Section~\ref{sec:examples}, where we let $\tilde{u} = \tilde{u}^{(1)}$ and $\tilde{v} = \tilde{u}^{(2)}$,
we choose
$\ge \, \, = \, \, \ge^{\rm TOP}_{\rm degrevlex}$, which is defined by
\[
\begin{array}{l}
((a_1, a_2), r) \, \, \ge^{\rm TOP}_{\rm degrevlex} \, \, ((a'_1, a'_2), r') \qquad
:\Longleftrightarrow\\[1.5em]
\left\{ \begin{array}{l}
a_1 + a_2 \, \, > \, \, a'_1 + a'_2 \qquad \mbox{or} \qquad
\Big( \, a_1 + a_2 \, \, = \, \, a'_1 + a'_2 \quad \mbox{and} \quad
\big( \, a_2 \, \, < \, \, a'_2\, \quad \mbox{or}\\[0.5em]
\qquad \qquad \qquad \qquad \qquad \qquad \qquad \qquad \qquad \qquad
a_2 \, \, = \, \, a'_2 \quad \mbox{and} \quad r \, \, \le \, \, r' \, \big) \Big)\,.
\end{array} \right.
\end{array}
\]
For elimination purposes one may choose $\ge \, \, = \, \, \ge^{\rm POT}_{\rm degrevlex}$, which is defined by
\[
\begin{array}{l}
((a_1, a_2), r) \, \, \ge^{\rm POT}_{\rm degrevlex} \, \, ((a'_1, a'_2), r') \qquad
:\Longleftrightarrow\\[1.5em]
\left\{ \begin{array}{l}
r \, \, > \, \, r' \quad \mbox{or} \quad
\Big( \, r \, \, = \, \, r' \quad \mbox{and} \quad
\big( \, a_1 + a_2 \, \, > \, \, a'_1 + a'_2\, \quad \mbox{or}\\[0.5em]
\qquad \qquad \qquad \qquad \qquad \qquad \qquad
a_1 + a_2 \, \, = \, \, a'_1 + a'_2 \quad \mbox{and} \quad a_2 \, \, \le \, \, a'_2 \, \big) \Big)\,.
\end{array} \right.
\end{array}
\]
It is clear that we have $\tilde{t} \differencemonomialordering 1$ for very difference monomial $\tilde{t} \neq 1$.
Suppose that the difference monomials $\tilde{v}$ and $\tilde{w}$
satisfy $\tilde{v} \differencemonomialordering \tilde{w}$
and let $\tilde{t}$ be another difference monomial and $\theta \in \setauto$.
Then either the sum of shifts in $\tilde{v}$ is greater than the sum of shifts in $\tilde{w}$, in which case
the same statement holds for $\tilde{t} \cdot \theta \circ \tilde{v}$ compared to $\tilde{t} \cdot \theta \circ \tilde{w}$,
or the sums of shifts are equal and, in the above notation, either the lexicographic ordering
identifies an index $i$ such that $(\mathbf{a}_{j_i}, r_{j_i}) > (\mathbf{b}_{k_i}, s_{k_i})$
with respect to the ordering used for breaking ties, or
$( \, (\mathbf{b}_{k_1}, s_{k_1}), \, \ldots, \, (\mathbf{b}_{k_e}, s_{k_e}) \, )$
is a proper prefix of
$( \, (\mathbf{a}_{j_1}, r_{j_1}), \, \ldots, \, (\mathbf{a}_{j_d}, r_{j_d}) \, )$.
In the latter situations, application of $\theta$ is respected by $\ge$, whereas
multiplication by $\tilde{t}$ leads to insertion of the pair corresponding to $\tilde{t}$
at appropriate positions in the above tuples, which is respected by the lexicographic ordering.
Hence, we conclude $\tilde{t} \cdot \theta \circ \tilde{v} \differencemonomialordering \tilde{t} \cdot \theta \circ \tilde{w}$ in any case.
Therefore, according to Definition~\ref{DifferenceMonomialOrdering}, $\differencemonomialordering$
is an admissible difference monomial ordering.

%
%



\end{document}